\colorlet{Green}{black!30!green}
\tikzset{arrow data/.style 2 args={%
		decoration={%
			markings,
			mark=at position #1 with \arrow{#2}},
		postaction=decorate}
}
\crefname{section}{Sec.}{Secs.}
\Crefname{section}{Sec.}{Secs.}
\theoremstyle{definition}
\theoremstyle{plain}
\newtheorem{thm}{Theorem}
\newtheorem{lem}{Lemma}
\newcommand{\bit}{\begin{itemize}}
	\newcommand{\eit}{\end{itemize}}
\renewcommand{\>}{\right\rangle}
\newcommand{\<}{\left\langle}
\newcommand{\ba}{\begin{align}}
	\newcommand{\ea}{\end{align}}
\newcommand{\be}{\begin{equation}}
	\newcommand{\ee}{\end{equation}}
\newcommand{\bi}{\begin{itemize}}
	\newcommand{\ei}{\end{itemize}}
\newcommand{\Tr}{\operatorname{Tr}}
\DeclareMathAlphabet{\mymathbb}{U}{BOONDOX-ds}{m}{n}
\renewcommand{\log}{\ln}
\begin{document}
	\date{\today}

	\newcommand{\bbra}[1]{\<\< #1 \right|\right.}
	\newcommand{\kket}[1]{\left.\left| #1 \>\>}
	\newcommand{\bbrakket}[1]{\< \Braket{#1} \>}
	\newcommand{\pll}{\parallel}
	\newcommand{\nn}{\nonumber}
	\newcommand{\transp}{\text{transp.}}
	\newcommand{\nor}{z_{J,H}}
	
	\newcommand{\hL}{\hat{L}}
	\newcommand{\hR}{\hat{R}}
	\newcommand{\hQ}{\hat{Q}}

	\title{Many-body entropies and entanglement from polynomially-many local measurements}
	
	\begin{abstract}
		Estimating global properties of many-body quantum systems such as entropy or bipartite entanglement is a notoriously difficult task, typically requiring a number of measurements or classical post-processing resources growing exponentially in the system size. In this work,
		we address the problem of estimating global entropies and mixed-state entanglement via partial-
		transposed (PT) moments, and show that efficient estimation strategies exist under the assumption
		that all the spatial correlation lengths are finite. Focusing on one-dimensional systems, we identify a
		set of approximate factorization conditions (AFCs) on the system density matrix which allow us to
		reconstruct entropies and PT moments from information on local subsystems. This yields a simple and efficient strategy for entropy and entanglement estimation. Our method could be implemented in different ways, depending on how information on local subsystems is extracted. Focusing on randomized measurements (RMs), providing a practical and common measurement scheme, we prove that our protocol only requires
			polynomially-many measurements and post-processing operations, assuming that the state to be measured satisfies the AFCs. We prove that the AFCs hold
		for finite-depth quantum-circuit states and translation-invariant matrix-product density operators,
		and provide numerical evidence that they are satisfied in more general, physically-interesting cases,
		including thermal states of local Hamiltonians. We argue that our method could be practically
		useful to detect bipartite mixed-state entanglement for large numbers of qubits available in today’s
		quantum platforms.
	\end{abstract}
	
	\author{Benoît Vermersch}
	\affiliation{Univ. Grenoble Alpes, CNRS, LPMMC, 38000 Grenoble, France}
	\affiliation{Institute for Theoretical Physics, University of Innsbruck, 6020 Innsbruck, Austria}
	\affiliation{Institute for Quantum Optics and Quantum Information of the Austrian Academy of Sciences, 6020 Innsbruck, Austria}
	
	\author{Marko Ljubotina}
	\affiliation{Institute of Science and Technology Austria (ISTA), Am Campus 1, 3400 Klosterneuburg, Austria}
	
	\author{J. Ignacio Cirac}
	\affiliation{Max-Planck-Institut für Quantenoptik, Hans-Kopfermann-Straße 1, D-85748 Garching, Germany}
	\affiliation{Munich Center for Quantum Science and Technology (MCQST), Schellingstraße 4, D-80799 M\"unchen, Germany}
	
	\author{Peter Zoller}
	\affiliation{Institute for Theoretical Physics, University of Innsbruck, 6020 Innsbruck, Austria}
	\affiliation{Institute for Quantum Optics and Quantum Information of the Austrian Academy of Sciences, 6020 Innsbruck, Austria}
	
	\author{Maksym Serbyn}
	\affiliation{Institute of Science and Technology Austria (ISTA), Am Campus 1, 3400 Klosterneuburg, Austria}
	
	\author{Lorenzo Piroli}
	\affiliation{Dipartimento di Fisica e Astronomia, Università di Bologna and INFN, Sezione di Bologna, via Irnerio 46, I-40126 Bologna, Italy}
	
	\maketitle
	
	\tableofcontents
	
	\section{Introduction}
	
	In the context of today's digital quantum technologies~\cite{blatt2012quantum,gross2017quantum,schafer2020tools,kjaergaard2020superconducting,morgado2021quantum,monroe2021programmable,alexeev2021quantum,pelucchi2022potential,burkard2023semiconductor}, an outstanding challenge is to devise measurement schemes of many-qubit states which are efficient and yet simple enough to be performed in current noisy intermediate-scale quantum (NISQ) devices~\cite{preskill2018quantum,altman2021quantum}. This problem has motivated new ideas and protocols to improve our ability to characterize complex quantum states. A notable example is that of the so-called randomized-measurement (RM) toolbox~\cite{elben2023randomized,elben2019statistsical,cieslinski2023analysing,huang2020predicting}, which has provided us with novel opportunities to experimentally investigate entanglement, a cornerstone in both quantum-information~\cite{horodecki2009quantum,nielsen2010quantum} and quantum many-body theory~\cite{calabrese2009entanglement,eisert2010colloquium}.
	
	For instance, paralleling earlier experiments based on quantum interference~\cite{islam2015measuring,kaufman2016quantum,linke2018measuring}, cf.~also Refs.~\cite{mouraAlves2004multipartite,cardy2011measuring,daley2012measuring,abanin2012measuring},  pure-state entanglement can be detected by exploiting the two-copy representation of subsystem purities~\cite{brydges2019probing, vanEnk2012measuring,elben2018renyi,elben2019statistsical,huang2020predicting},
		as is now routinely done in various experimental platforms~\cite{brydges2019probing,satzinger2021realizing,stricker2022experimental,zhu2022cross,hoke2023measurement}. RMs have been also applied to study both mixed-state bipartite entanglement, based on the estimation of the so-called partial-transposed (PT) moments~\cite{zhou2020single,elben2020mixed}, and multipartite entanglement, as characterized by the quantum Fisher information~\cite{cerezo2021sub,yu2021experimental,rath2021quantum,vitale2023estimation}.
	
	Despite these developments, estimating global properties of quantum systems with a very large number of qubits $N$ remains a difficult task. In particular, while RM approaches to estimate any \emph{local} observable are more efficient than performing full state tomography~\cite{flammia2012quantum,haah2016sample,o2016efficient, paini2019approximate, hadfield2021adaptive, huang2021efficient, hadfield2022measurements,rath2021importance,vermersch2024enhanced,yen2023deterministic, arienzo2022closed, akhtar2023scalable, bertoni2022shallow, ippoliti2023operator,ippoliti2023classical,garciaperez2021learning}, estimating \emph{global} properties requires performing exponentially-many measurements or post-processing operations. Therefore, it is unfeasible to significantly scale up existing protocols to probe global purities and bipartite entanglement~\cite{brydges2019probing, vanEnk2012measuring,elben2018renyi,elben2019statistsical,huang2020predicting,elben2020mixed}, raising the question of whether these quantities will be experimentally accessible at all as larger NISQ devices become available. 
	
	\begin{figure}[t]
		\includegraphics[scale=0.395]{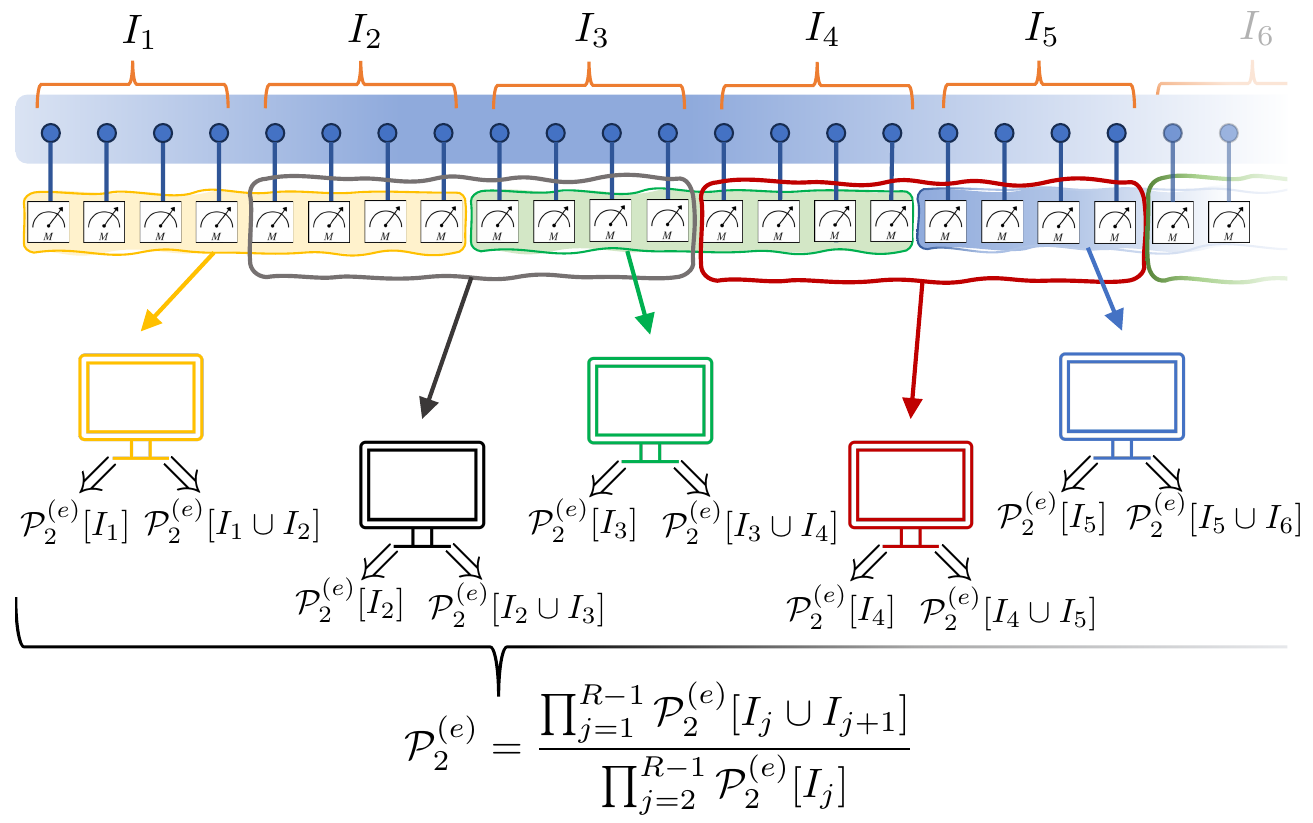}
		\caption{A $1D$ quantum system $\mathcal{S}$ is partitioned into adjacent intervals $I_j$ of size $k$, $|I_j|=k$. In this work, we consider extracting the purities over $I_j$ and $I_j\cup I_{j+1}$ using the classical-shadow approach~\cite{huang2020predicting}. Performing $M$ local measurements with respect to randomly chosen bases and classical post-processing operations, they yield faithful estimators for ${\rm Tr}[\rho_I^2]$, denoted by $\mathcal{P}_2^{(e)}[I]$. Such subsystem-purity estimators are then combined to obtain a prediction  $\mathcal{P}_2^{(e)}$ for the global purity. The accuracy of the method is controlled by $k$ and the state correlation lengths.}
		\label{fig:cartoon}
	\end{figure}
	
	In this work, we address precisely the problem of estimating global entropies and PT moments in many-qubit systems, and show that efficient strategies exist under the assumption that all spatial correlation lengths in the system are finite. This condition encompasses a large class of physically interesting cases, including ground and thermal states of local Hamiltonians, and is thus very natural when considering NISQ devices from the point of view of quantum simulation~\cite{georgescu2014,tacchino2020quantum,daley2022practical}. 
	
	In more detail, we focus on one-dimensional ($1D$) systems and put forward a strategy for the estimation of entropies and PT moments, requiring only polynomially-many measurements and post-processing operations. Our protocol is provably accurate for states satisfying a set of \emph{approximate factorization conditions} (AFCs), which express the absence of long-range correlations and which are shown to be a general feature of short-range correlated states. The basic idea, which is conveyed in Fig.~\ref{fig:cartoon}, is that the AFCs allow one to reconstruct global purities and PT moments from local information. The efficiency of our method is independent of how local information is extracted (for instance, this could be done via full tomography of certain reduced density matrices or via protocols using multiple physical copies, as discussed later). However, we will focus on an implementation making use of the standard RM toolbox~\cite{elben2023randomized}. Our motivation is two-fold. On the one hand, there is a rich literature studying the statistical errors associated with reconstructing purities and PT moments using RM measurements, allowing us to make our estimates very explicit. On the other hand, RM schemes are a very practical tool, which is now routinely employed in various experimental platforms~\cite{elben2023randomized,elben2019statistsical,cieslinski2023analysing,huang2020predicting}.
	
	The proposed approach is different from full tomographic methods relying on prior assumptions on the system state. There, a common strategy is to assume that the latter is described by a sufficiently simple ansatz wavefunction, and estimate the values of its free parameters. This logic was first put forward in the context of  matrix-product states (MPSs)~\cite{perez2006matrix,cirac2021matrixproduct, cramer2010efficient,lanyon2017efficient}, while recently extended to matrix-product density operators (MPDOs)~\cite{qin2023stable, baumgratz2013scalable,baumgratz2013scalable2,holzapfel2018petz, lidiak2022quantum}, Gibbs states~\cite{kokail2021entanglement,joshi2023exploring,anshu2021sample,anshu2023survey, rouze2021learning,onorati2023efficient}, permutation-invariant states~\cite{toth2010permutational,moroder2012permutationally,schwemmer2014experimental}, low-rank states~\cite{gross2010quantum}, stabilizers~\cite{montanaro2017learning}, tensor- and and neural-network wavefunctions~\cite{kuzmin2023learning, torlai2018neural,zhao2023provable,rieger2024sample,carrasquilla2019reconstructing,schmale2022efficient}.  While these methods may be practically very useful, they also face drawbacks. For instance, as we discuss in more detail in Sec.~\ref{sec:purity_factorization_FDQC}, an accurate estimation of, say, global purities might require reconstructing the state up to exponential precision, thus leading to unpractical overheads in $N$. On the contrary, our approach does not learn the state wavefunction, targeting purities and PT moments directly, cf.~Fig.~\ref{fig:cartoon}. 
	
	Finally, we mention that our ideas could be extended, in some cases straightforwardly, to probe other types of quantities generally requiring exponentially many measurements, including participation entropies~\cite{luitz2014participation,stephan2009shannon,stephan2009renyi,alcaraz2013,stephan2014renyi,sierant2022universal,turkeshi2023error} or stabilizer R\'enyi entropies~\cite{leone2022stabilizer}, which were recently considered in the many-body setting~\cite{leone2022stabilizer,oliviero2022magic,haug2022quantifying,haug2023stabilizer,lami2023quantum,tarabunga2023many,niroula2023phase,turkeshi2023measuring}. In addition, while we will focus on $1D$ systems, where analytic and numerical analyses are simpler, we expect that our approach could be generalized to higher spatial dimensions. Therefore, our work also opens up a number of important directions for future research. 
	
	The rest of this manuscript is organized as follows. After reviewing a few preliminary notions and tools in Sec.~\ref{sec:preliminaries}, we start by introducing the main ideas underlying our approach in Sec.~\ref{sec:toy_model}. To this end, we consider the class of so-called \emph{finite-depth quantum-circuit}~(FDQC) states, which provide an ideal toy model for short-range correlated many-body quantum states. Their minimal structure allows us to remove unnecessary technical complications from the discussion, and present the logic of our method in the simplest possible setting. We consider both purities (Sec.~\ref{sec:purity_factorization_FDQC}) and PT moments (Sec.~\ref{sec:pt_moments}), working out efficiency performance guarantees for their estimation. 
	
	The most general form of our protocol is presented in Sec.~\ref{sec:finite-range_states}. After introducing the AFCs in  Sec.~\ref{sec:AFCs}, we rigorously derive performance guarantees for the accurate estimation of the purity and PT moments (Sec.~\ref{sec:purity_PT_estimation_AFCs}) under the assumption that the state to be measured satisfies the AFCs. We then discuss the generality of the AFCs, proving that they are satisfied in translation-invariant MPDOs (Sec.~\ref{sec:MPDOs}), and presenting numerical evidence for their validity in thermal states of local Hamiltonian, cf.~Sec.~\ref{sec:AFCs_numerics}. We also present a full classical simulation of the measurement protocol (Sec.~\ref{sec:full_simulation}), studying in a concrete example the typical number of measurements required to estimate the bipartite purity of area-law pure states.
	
	Next, in Sec.~\ref{sec:efficient_entanglement_detection} we discuss a few natural examples of highly mixed states where bipartite entanglement can be detected for large system sizes by estimating just the first few PT moments. Since the latter can be very simply extracted by our approach, we argue that our results could be practically useful to probe mixed-state entanglement in experimentally available noisy quantum platforms. Finally, we report our conclusions in Sec.~\ref{sec:outlook}, while the most technical parts of our work are consigned to several appendices. 
	
	\section{Entanglement and randomized measurements}
	\label{sec:preliminaries}
	
	\subsection{Mixed-state entanglement and PPT conditions}
	\label{sec:intro_ppt_conditions}
	
	We consider a system of $N$ qubits, denoted by $\mathcal{S}$. Given a region $I\subset \mathcal{S}$, the associated Hilbert space is $\mathcal{H}_I\simeq \mathbb{C}^{\otimes 2|I|}$, where we denoted by $|I|$ the number of qubits in $I$. We will be interested in the R\'enyi entropies of the region $I$
	\begin{equation}
		S_n[I]=\frac{1}{1-n}{\log \mathcal{P}_n[I]}\,,
	\end{equation}
	where
	\begin{equation}\label{eq:def_pn}
		\mathcal{P}_n[I]={\rm Tr}[\rho_I^n]\,,
	\end{equation}
	and $\rho_I$ is the reduced density matrix on the region $I$. For $n=2$, $\mathcal{P}_2[I]$ coincides with the purity, which is a simple probe for the subsystem entropy, with $\mathcal{P}_2[I]=1$ and $\mathcal{P}_2[I]=2^{-|I|}$ for pure and maximally mixed states, respectively.
	
	Consider now a partition of $\mathcal{S}$ into two disjoint sets $\mathcal{S}=A\cup B$, yielding $\mathcal{H}_\mathcal{S}=\mathcal{H}_A\otimes \mathcal{H}_B$. If the system is in a pure state $\ket{\psi}_{AB}\in \mathcal{H}_\mathcal{S}$, its bipartite entanglement is quantified by the R\'enyi entropies $S_n(\rho_A)$~\cite{nielsen2010quantum}. Conversely, when the state of the system is mixed, its entanglement can be quantified by the logarithmic negativity~\cite{vidal2002computable,plenio2005logarithmic}
	\begin{equation}\label{eq:negativity}
		\mathcal{E}(\rho)=\log \sum_j |\lambda_j|\,,
	\end{equation}
	where the sum is over all eigenvalues $\{\lambda_j\}_j$ of the operator $\rho_{AB}^{T_A}$, and $(\cdot)^{T_A}$ denotes partial transpose with respect to subsystem $A$. The spectrum of the PT density matrix, and hence the logarithmic negativity, is completely  fixed by the PT moments
	\begin{equation}
		\label{eq:pt_moments}
		p_n={\rm Tr}\left[ \left(\rho_{AB}^{T_A}\right)^{n} \right]\,,
	\end{equation}
	for  $n=1,2,...,{\rm dim}(\mathcal{H}_A\otimes \mathcal{H}_B)$. Note that $p_1=1$, while the second PT moment coincides with the purity, $p_2=\mathcal{P}_2$~\cite{elben2020mixed}.
	
	The importance of the PT moments is two-fold. On the one hand, they can be accessed directly via RMs~\cite{zhou2020single,elben2020mixed} (or using quantum interference~\cite{carteret2005noiseless,gray2018machine}), see Sec.~\ref{sec:RM_toolbox}. On the other hand, the knowledge of the first few PT moments is enough to certify bipartite entanglement based on the non-positivity of the partial-transposed density matrix~\cite{carteret2016estimating,elben2020mixed,neven2021symmetry,zhou2020single,yu2021optimal}, or to detect different types of entanglement structures~\cite{carrasco2022entanglement}. In this work, we will consider a particular set of conditions on the PT moments to certify bipartite entanglement, which were derived in Refs.~\cite{elben2020mixed,yu2021optimal} and which we call the $p_n$-PPT conditions. Denoting by SEP the set of separable, \emph{i.e.} not entangled, states in $\mathcal{S}$, the $p_n$-PPT conditions take the form
	\begin{equation}
		\label{eq:pn_PPT}
		\rho\in {\rm SEP}\Rightarrow p_n p_{n-2}\geq p_{n-1}^2\,.
	\end{equation}
	Therefore, when the state of the system $\rho$ violates the $p_n$-PPT conditions, it is entangled, and the difference $p_{n-1}^2-p_n p_{n-2}$ is a probe for mixed-state entanglement. Note that, for $n=3$, Eq.~\eqref{eq:pn_PPT} coincides with the relation first derived in Ref.~\cite{elben2020mixed}. These conditions are in general not optimal and are a strict subset of those derived in Ref.~\cite{yu2021optimal,neven2021symmetry}. Still, their simplicity makes them particularly convenient for our purposes. 
	
	\subsection{Randomized measurements and classical shadows}
	\label{sec:RM_toolbox}
	
	The power of RM schemes lies in the fact that they need not be tailored to a specific property of the system. Rather, one performs measurements which are randomly sampled from a fixed ensemble independent of the observable of interest. Subsequently, the outcomes are processed differently depending on the quantity to be estimated~\cite{vanEnk2012measuring,elben2018renyi,knips2020multipartite,ketterer2019characterizing}. Denoting by $\rho$ the system density matrix, this approach gives us access to all observable expectation values ${\rm Tr}[\rho O]$ and, more generally, to \emph{multi-copy} objects of the form ${\rm Tr}[\rho^{\otimes n}  O]$, where the integer $n\geq 1$ is called the copy (or replica) index.
	
	In this section, we recall the basic aspects of RMs used in our work. While the logic explained in the next sections may be implemented in different ways, we will focus on a set of protocols making use the (local) classical shadows introduced in Ref.~\cite{huang2020predicting}, a prominent element in the RM toolbox~\cite{elben2023randomized}. We briefly recall the main aspects of the formalism, while we refer to Refs.~\cite{huang2020predicting,elben2023randomized} for a thorough introduction. 
	
	In what follows, we denote by $\ket{0}_j$ and $\ket{1}_j$ the basis elements of the local computational basis corresponding to qubit $j$, spanning $\mathcal{H}_j\simeq \mathbb{C}^2$. In the classical-shadow framework, one performs a set of $M$ measurements (one per experimental run, each labeled by an integer $r$), consisting of local unitary operations $\prod_j u_j^{(r)}$ followed by a projective measurement onto the computational basis 
	\begin{equation}
		|k_1,\ldots k_N\rangle=\otimes_{j=1}^N \ket{k_j}_j\,,
	\end{equation}
	with $k_j=0,1$.  The unitaries are sampled from a Haar-random ensemble, identically and independently for each qubit $j$ and experimental run $r$, see Fig.~\ref{fig:circuit}. Denoting by $\{k^{(r)}_j\}$ the set of outcomes of this two-step process, the values $\{k^{(r)}_j\}$  and the unitaries $\{u_j^{(r)}\}$ are used to define the so-called \emph{classical shadows}
	\begin{equation}
		\label{eq:global_classical_shadow}
		\rho^{(r)}_\mathcal{S}=\bigotimes_{i\in \mathcal{S}}\left[3\left(u_i^{(r)}\right)^{\dagger}\ket{k_i^{(r)}}\bra{ k_i^{(r)}} u_i^{(r)}-\openone_2\right]\,,
	\end{equation}
	which can be classically stored in an efficient way. 
	
	As mentioned, the measurement protocol does not depend on the observable of interest. Rather, one adapts the post-processing operations on the classical shadows based on the quantity to be estimated. For instance, given any observable $O$, an estimator for its expectation value is 
	\begin{equation}
		\hat{o}=\frac{1}{M}\sum_{r=1}^M{\rm Tr}[O\rho_{\mathcal{S}}^{(r)}]\,.
	\end{equation}
	It is easy to see that $\hat{o}$ is faithful, \emph{i.e.} unbiased, while different bounds for the statistical variance of this estimator may be derived depending on the locality properties of $O$~\cite{huang2020predicting,elben2023randomized}. 
	
	\begin{figure}
		\includegraphics[scale=0.40]{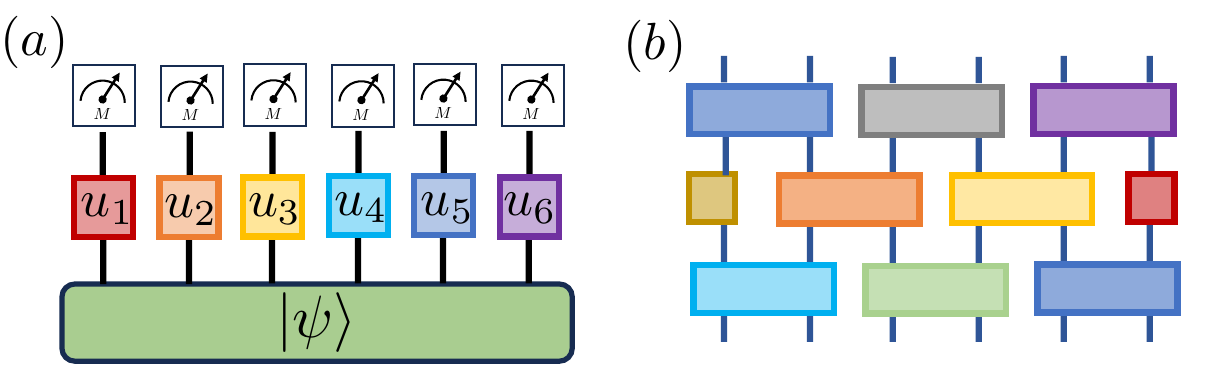}
		\caption{$(a)$: Within the classical-shadow approach, each measurement process consists of random on-site unitaries followed by local projective measurements. Their outcomes are stored and later post-processed to construct the classical shadows. $(b)$: Pictorial representation of a local finite-depth quantum circuit. The gates are arranged in a brickwork pattern, forming layers of mutually commuting unitary operations acting on pairs of neighboring qubits. Lower (upper) dangling legs correspond to the input (output) qubits. The depth of the circuit is the number of applied layers. In this picture, the depth is $\ell=3$. }
		\label{fig:circuit}
	\end{figure}
	
	It is important to recall that classical shadows also give access to entropy and PT moments. Let us consider in particular the purity, which is given by Eq.~\eqref{eq:def_pn} for $n=2$. For this quantity, the classical shadows allow us to write the estimator~\cite{huang2020predicting}
	\begin{equation}\label{eq:estimator_single_purity}
		\mathcal{P}_2^{(e)}[I]=\frac{1}{M(M-1)} \sum_{r \neq r^{\prime}} {\rm Tr}\left(\rho_{I}^{(r)} \rho_{I}^{\left(r^{\prime}\right)}\right)\,,
	\end{equation}
	where $\rho_I$ is defined as in Eq.~\eqref{eq:global_classical_shadow}. $\mathcal{P}^{(e)}[I]$ is a faithful estimator. Note that, alternatively, one can use another estimator for the purity~\cite{elben2018renyi,brydges2019probing,elben2019statistsical} that provides robustness against miscalibration errors using the same data. The statistical errors associated with $\mathcal{P}_2^{(e)}[I]$ are quantified by its variance, which can be bounded by~\cite{huang2020predicting,elben2020mixed,rath2021quantum} 
	\begin{align}\label{eq:variance_purity}
		{\rm Var}\left[\mathcal{P}_2^{(e)}[I]\right]&\leq 4\left(\frac{2^{|I|}\mathcal{P}_2[I]}{M}\right)
		+2\left(\frac{2^{2|I|}}{M-1}\right)^2\,,
	\end{align}
	where $|I|$ denotes the number of qubits in $I$. This bound is known to be essentially optimal~\cite{elben2020mixed,rath2021quantum}, telling us that an exponentially large number of measurements is needed to estimate the purity. 
	
	The PT moments~\eqref{eq:pt_moments} can be treated similarly. In this case, one constructs the estimator~\cite{elben2020mixed}
	\begin{align}\label{eq:estimator_single_PT}
		p^{(e)}_n[AB]&=\frac{1}{n !}\binom{M}{n}^{-1}\nonumber\\
		&\times \sum_{r_1 \neq r_2 \neq \ldots \neq r_n} \operatorname{Tr}\left[
		[\rho_{A B}^{\left(r_1\right)}]^{T_A} \cdots
		[\rho_{A B}^{\left(r_n\right)}]^{T_A}\right]\,.
	\end{align}
	Once again, $p^{(e)}_n[AB]$ is faithful and it is possible to derive explicit bounds on its variance, although it becomes increasingly involved for higher $n$. For instance, for $n=3$ one has~\cite{elben2020mixed,rath2021quantum}
	\begin{align}
		\label{eq:variance_PT3}
		\operatorname{Var}\left[p^{(e)}_3[AB]\right] &\leq 9 \frac{2^{|AB|}}{M} \operatorname{Tr}\left(\rho_{AB}^4\right)\nonumber\\
		+&18 \frac{2^{3 |AB|}}{(M-1)^2} p_2[AB]+6 \frac{2^{6 |AB|}}{(M-2)^3}\,.
	\end{align}
	
	Estimating the statistical errors by the variance, Eqs.~\eqref{eq:variance_purity} and~\eqref{eq:variance_PT3} imply that, in order to guarantee an accurate reconstruction of the purity and PT moments of the system, exponentially-many measurements in its size are needed. As mentioned, this makes it unfeasible to significantly scale up previous experiments making use of this strategy~\cite{brydges2019probing, vanEnk2012measuring,elben2018renyi,elben2019statistsical,huang2020predicting,elben2020mixed}. The goal of this work is to show that these limitations may be overcome under assumptions which are very common in the context of many-body physics, and thus also natural from the point of view of quantum simulation. Namely, we will put forward a set of protocols for entropy and entanglement estimation which are provably efficient assuming that all spatial correlation lengths of the state to be measured are finite. We will focus on $1D$ systems, where analytic and numerical analyses are simpler, although we expect that our approach could be generalized to higher spatial dimensions, see Sec.~\ref{sec:outlook}.
	
	\section{A toy model for the estimation protocol}
	\label{sec:toy_model}
	
	In this section we introduce the main ideas underlying our approach, focusing on a simplified setting where we can get rid of unnecessary technical complications. We analyze the case of FDQC states, where the state to be measured is prepared by a shallow (local) quantum circuit
	\begin{equation}\label{eq:FDQCs}
		\rho= U^{(\ell)}\left( \bigotimes_{j=1}^L \sigma_j \right)\left[U^{(\ell)}\right]^{\dagger}\,,
	\end{equation}
	where $L$ is length of the system, \emph{i.e.}\ the number of qubits (we use the letter $L$ instead of $N$, as in the previous section, to emphasize that we are focusing on the $1D$ case). Here $\sigma_j$ are arbitrary single-qubit density matrices, while  $U^{(\ell)}$ is a local circuit of depth $\ell$, namely $U^{(\ell)} =  V_{\ell} \cdots V_2 V_1$, where $V_{j}$ contains quantum gates acting on disjoint pairs of nearest-neighbor qubits, cf.~Fig.~\ref{fig:circuit}. We will assume that $\ell$ is fixed, \emph{i.e.}\ not increasing with the system size. We do not ask for translation symmetry and, unless specified otherwise, assume open boundary conditions. The gates making up $U^{(\ell)}$ can be arbitrary, \emph{i.e.}\ they need not be taken out of some finite gate set. The FDQC states~\eqref{eq:FDQCs} have a very simple structure from the point of view of many-body physics, but are known to approximate physically-interesting states such as MPSs~\cite{piroli2021quantum,malz2023preparation} and, more generally, ground states of gapped local Hamiltonians~\cite{chen2010local,hastings2013classifying,zeng2015quantum,zeng2015gapped} .
	
	Throughout this section,  we will assume that \emph{we know} that the state of the system is exactly of the form~\eqref{eq:FDQCs} for some finite $\ell$. We develop a protocol to efficiently estimate the purity and the PT moments of such a state, based on this knowledge. The protocol only takes the depth of the circuit, $\ell$, as an input and does not make use of state tomography. In fact, it will be later generalized replacing the assumption of the FDQC structure with the AFCs, which make no explicit reference to the form of the state wavefunction.
	
	\subsection{Purity estimation: Factorization formula}
	\label{sec:purity_factorization_FDQC}
	
	The starting point of our method is a factorization property for powers of the system density matrix. Let $I$ be any interval of adjacent qubits ($I$ can coincide with the full system $\mathcal{S}$). Consider a partition $I=A\cup B\cup C$, where $B$ separates $A$ and $C$ and denote by $|B|$ the number of qubits in $B$, cf. Fig.~\ref{fig:partition}$(a)$. Based on the fact that state is prepared by a finite depth circuit, Eq.~\eqref{eq:FDQCs}, one can prove 
	\begin{equation}\label{eq:split_1_purity}
		{\rm Tr}(\rho_{I}^2)=\frac{{\rm Tr}_{AB}(\rho_{AB}^2){\rm Tr}_{BC}(\rho_{BC}^2)}{{\rm Tr}_{B}(\rho_{B}^2)}\,,
	\end{equation}
	for any partition with $|B|\geq 2\ell-1$, where $\rho_{X}$ denotes the density matrix reduced to the interval $X$ and $XY$ is a short-hand notation for $X\cup Y$. Now, let $\{I_{j}\}_{j=1}^R$ be a collection of adjacent intervals covering the system $\mathcal{S}$, with $|I_j|=k\geq 2\ell-1$, and assume without loss of generality that $R=L/k$ is an integer, cf. Fig.~\ref{fig:cartoon}. By applying Eq.~\eqref{eq:split_1_purity} iteratively, we arrive at
	\begin{equation}\label{eq:final_product_formula_purity}
		{\rm Tr}(\rho^2)=\frac{\prod_{j=1}^{R-1}{\rm Tr}_{I_j\cup I_{j+1}}(\rho^2_{I_{j}\cup I_{j+1}})}{\prod_{j=2}^{R-1}{\rm Tr}_{I_j}\left[\rho_{I_j}^{2}\right]}\,,
	\end{equation}
	where $\rho$ is the system density matrix~\eqref{eq:FDQCs}. A proof of Eqs.~\eqref{eq:split_1_purity} and \eqref{eq:final_product_formula_purity} is given in Appendix~\ref{sec:appendix_factorization}.
	
	\begin{figure}[t]
		\includegraphics[scale=0.38]{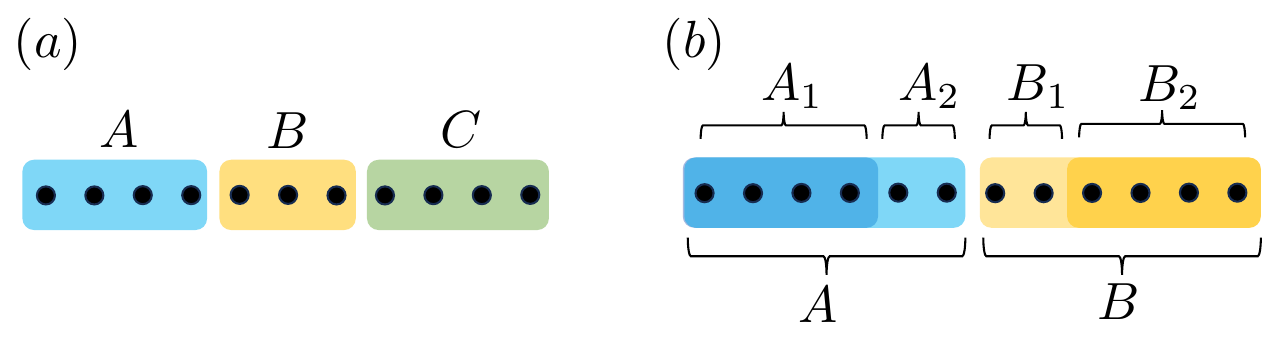}
		\caption{$(a)$: Partition considered in Eq.~\eqref{eq:split_1_purity}: A $1D$ interval $I$ is divided into three disjoint regions $A$, $B$, and $C$, where $B$ separates $A$ and $C$. $(b)$: Partition considered in Eq.~\eqref{eq:final_product_formula_PT}: A $1D$ interval $I$ is divided into two halves, $A$ and $B$. Each half is partitioned into two regions $A=A_1\cup A_2$, $B=B_1\cup B_2$}
		\label{fig:partition}
	\end{figure}
	
	Equation~\eqref{eq:final_product_formula_purity} is remarkable because its right hand side (RHS) only depends on the purities of subsystems containing up to $2k$ qubits, and thus gives us a natural basis for an efficient estimation strategy of the global purity. The idea is to first reconstruct the ``local'' purities  ${\rm Tr}[\rho_{I}^2]$ (with either $I=I_j$ or $I=I_j\cup I_{j+1}$), and subsequently plug their estimated values in the RHS of Eq.~\eqref{eq:final_product_formula_purity}, yielding the global purity estimate. This method is more efficient than directly targeting ${\rm Tr}(\rho^2)$, as the local purities can be reconstructed from a number of measurements growing exponentially in $k$, not in $L$, see Eq.~\eqref{eq:variance_purity}. Below, we give a more precise description of the protocol, proving that an accurate estimation of the purity only requires a polynomial (in $L$) number of measurements and post-processing operations.
	
	We consider performing $M_I$ measurements (\emph{i.e.}\ experimental runs) to estimate the purity of each interval $I$, with $I=I_j$ or $I=I_j\cup I_{j+1}$, see Eq.~\eqref{eq:final_product_formula_purity}. The set of measurements $M_I$ is only used to determine the purity over $I$ and for simplicity we perform the same number of measurement $M_I=M$ for each interval. Although this increases the total number of experiments $M_T$ by a factor $L$, since 
	\begin{equation}\label{eq:total_M}
		M_T=\sum_I M_I\propto L M\,,
	\end{equation}
	this procedure guarantees that statistical errors associated with distinct regions are independent and facilitate their rigorous analysis. For each interval, we then construct the estimators $\mathcal{P}^{(e)}[I]$ given in Eq.~\eqref{eq:estimator_single_purity}, and define
	\begin{equation}\label{eq:r2_e}
		r^{(e)}_2=\frac{\prod_{j=1}^{R-1}\mathcal{P}_2^{(e)}[{I_{j}\cup I_{j+1}}]}{\prod_{j=2}^{R-1}\mathcal{P}_2^{(e)}[I_j]}\,,
	\end{equation}
	which is our estimator for the global purity. Since the latter is typically exponentially small in $L$, we quantify the accuracy of $r^{(e)}_2$ by the relative error
	\begin{equation}\label{eq:purity_relative_error}
		\varepsilon_r =\left|\frac{r_2^{(e)}}{\mathcal{P}_2}-1\right|\,.
	\end{equation}
	
	Next, we seek to bound the number of measurements required  to guarantee that $\varepsilon_r$ is small with high probability. This problem is solved in Appendix~\ref{sec:relatve_error_purity}, where we prove the following result: for any arbitrarily small $\delta>0$, choosing 
	\begin{equation}\label{eq:precondition_1}
		M\geq {\rm max}\left\{2^{8k}, L^2\frac{2^{4k+10}}{k^2\delta^2}\right\}\,,
	\end{equation}
	the probability that $|\varepsilon_r|\geq \delta$ satisfies
	\begin{equation}\label{eq:main_result}
		{\rm Pr}\left[|\varepsilon_r|\geq \delta\right]\leq  \frac{2^{4k+11}L^3}{\delta^2k^3 M}\,,
	\end{equation}
	where we recall that $k=|I_j|$. The proof of Eq.~\eqref{eq:main_result} is based on a careful analysis of how the error on each factor in~\eqref{eq:r2_e} affects the global purity, making use of the statistical independence of $\mathcal{P}^{(e)}_2[I]$ for different $I$ and the so-called Chebyshev's inequality. Eqs.~\eqref{eq:total_M},~\eqref{eq:precondition_1}, and ~\eqref{eq:main_result} then imply that polynomially-many (in $L$) measurements and post-processing operations are enough to accurately estimate the purity, as anticipated.\footnote{The polynomial scaling of the post-processing operations follows straightforwardly from the definition~\eqref{eq:estimator_single_purity}, because the number of elements in each sum is $\sim M^2$, where $M=O(L^3)$.} We can rephrase this result in a more transparent way. For a given confidence level $\gamma={\rm Pr}\left[|\varepsilon_r|< \delta\right]$, Eq.~\eqref{eq:main_result} implies that it is enough to take a number of measurements
		\begin{equation}\label{eq:confidence_level_FDQC}
			M\geq \frac{2^{4k+11}L^3}{\delta^2k^3 (1-\gamma)}\,.
		\end{equation}

	Before concluding this section, a few remarks are in order. First, it is important to note that the individual factors in the formula~\eqref{eq:r2_e} could also be extracted using different RM schemes or even standard tomography for the density matrices $\rho_I$. Indeed, the possibility of estimating the global purity from polynomially-many measurements does not depend on the fact that we are using classical shadows, but rather on the factorization property~\eqref{eq:r2_e} (since its RHS only involves local subsystems). Still, classical shadows and RMs in general offer several advantages compared to state tomography. For instance, while RM schemes to estimate ${\rm Tr}[\rho_I^2]$ require a number of measurements scaling exponentially in $|I|$, the exponents are typically favorable compared to tomography~\cite{brydges2019probing,rath2023entanglement,stricker2022experimental}. In addition, classical shadows make it very easy to rigorously bound statistical errors, facilitating the analyses presented throughout this work.  
	
	Second, taking the logarithm of Eq.~\eqref{eq:final_product_formula_purity}, we obtain
	\begin{equation}\label{eq:final_formula_renyi}
		S^{(2)}(\rho)=\sum_{j=1}^{R-1}S^{(2)}(\rho_{I_j\cup I_{j+1}})-\sum_{j=2}^{R-1}S^{(2)}(\rho_{I_j})\,,
	\end{equation}
	where $	S^{(2)}(\rho)=-\log {\rm Tr}\rho^2$ is the second R\'enyi entropy.\footnote{We note that similar formulas previously appeared (albeit for the von Neumann, rather than for the R\'enyi entropy) in the context of approximate Markov-chain states~\cite{poulin2011markov,kato2019quantum,brandao2019finite, kim2017markovian}, see also~\cite{svetlichnyy2022matrix,svetlichnyy2022decay,haag2023typical,chen2020matrix,kim2021entropy,kim2021entropy2,kuwahara2020clustering,kuwahara2021improved}.} Therefore, our results can equivalently be formulated in terms of R\'enyi entropies, rather than purities (note that a small relative error on the latter implies a small additive error on the R\'enyi entropy). 
	
	Finally, as the state~\eqref{eq:FDQCs} can be represented exactly as a matrix-product operator (MPO)~\cite{silvi2019tensor}, it is instructive to compare our strategy with those based on MPDO tomography  ~\cite{qin2023stable,baumgratz2013scalable,baumgratz2013scalable2,baumgratz2013scalable2,holzapfel2018petz,lidiak2022quantum}. In many cases, these methods can efficiently provide an estimate of the system density matrix $\rho^{(e)}$, satisfying
	\begin{equation}\label{eq:norm_1}
		||\rho^{(e)}-\rho||_1\leq \delta\,,
	\end{equation}
	where $||\cdot ||_1$ denotes the trace norm, while $\delta =O(L^{-\alpha})$ is a small parameter vanishing polynomially in $L$. While this approximation allows us to accurately estimate the expectation value of any local observable, it might not be enough to extract the global R\'enyi-$2$ entropy. Indeed, given $\rho$ and $\sigma$ with $\delta=||\rho-\sigma||_1$, we have the following bound, which is known to be tight in general~\cite{audenaert2007sharp,zhihua2017sharp}
	\begin{align}\label{eq:inequality_renyi}
		|S^{(2)}(\rho)\!-\!S^{(2)}(\sigma)| &\leq 2^{L}\left[1\!-\!(1\!-\!2\delta)^2\right.
		\!-\!\left.\frac{4\delta^2}{\left(2^L\!-\!1\right)}\right]\nonumber\\
		&\sim 2^{L}\delta\,.
	\end{align}
	Therefore, a precise estimation of the R\'enyi-$2$ entropy requires an exponentially accurate reconstruction of the system density matrix, typically leading to unpractical overheads in $L$~\footnote{We note that a precise estimation of the von Neumann entanglement entropy, instead, only requires to reconstruct the target state up to an error which decays polynomially in the system size~\cite{nielsen2000continuity}. However, computing the von Neumann entanglement entropy for MPDOs is expected to be computationally hard in general.}. Instead, our method gets around this technical issue, as it does not rely on state tomography.  
	
	\subsection{The normalized PT-moment estimation}
	\label{sec:pt_moments}
	
	The ideas presented in the previous section may be applied to the PT moments, although a few subtleties must be taken into account. Denoting by $p^{(e)}_n$ our estimate for $p_n$, one is tempted to ask for a protocol that makes the relative error $|p^{(e)}_n/p_n-1|$ sufficiently small. This is, however, problematic: contrary to the moments $\mathcal{P}_n$, it is non-trivial to bound $|p_n|$ from below by a positive number. In order to get around this issue, we define the normalized PT moment
	\begin{equation}\label{eq:tilde_p_n_def}
		\tilde{p}_n[AB]=\frac{p_n[AB]}{\mathcal{P}_n[A]\mathcal{P}_n[B]}\,,
	\end{equation}
	and, instead of focusing on the relative error, we ask for an accurate estimate of $\tilde{p}_n$ up to a small \emph{additive} error. The motivation for this choice is two-fold. First, as it will be clear later, for FDQC states one can show that $\tilde{p}_n[AB]$ is independent of the system size. Therefore, contrary to the purity, one does not have to deal with numbers which are exponentially small in $L$. Second, we will see in Sec.~\ref{sec:efficient_entanglement_detection} that entanglement certification based on the $p_n$-PPT conditions  requires approximating $\tilde{p}_n[AB]$ up to a small additive error. 
	
	In order to estimate the normalized PT moment~\eqref{eq:tilde_p_n_def}, we once again rely on certain factorization properties of the density matrix. Consider the FDQC state~\eqref{eq:FDQCs} and take a partition of the system as in Fig.~\ref{fig:partition}$(b)$, with $|B_1|= |A_2|=k\geq 2\ell-1$, where $\ell$ is the depth of the circuit. As we show in Appendix~\ref{sec:factorization_pt_moments}, one can prove
	\begin{align}
		\label{eq:final_product_formula_PT}
		\!{\rm Tr}\left[\!\left(\rho_{AB}^{T_A}\right)^n\right]\!=\!{\rm Tr}\left[\!\left(\rho_{A_2 B_1}^{T_{A_2}}\right)^n\right]\! \frac{{\rm Tr}_A[\rho_{A}^n]{\rm Tr}_B[\rho_{B}^n]}{{\rm Tr}_{A_2}[\rho_{A_2}^n]{\rm Tr}_{B_1}[\rho_{B_1}^n]}.
	\end{align}
	This equation implies
	\begin{equation}
		\label{eq:factorization_sn}
		\tilde{p}_n[AB]=s_n[A_2B_1]\,,
	\end{equation}
	where we defined
	\begin{equation}\label{eq:def_sn}
		s_n[XY]=\frac{{\rm Tr}\left[\left(\rho_{XY}^{T_X}\right)^n\right]}{{\rm Tr}_{X}[\rho_{X}^n]{\rm Tr}_{Y}[\rho_{Y}^n]}\,.
	\end{equation}
	Note that $s_n[A_2B_1]$ only depends on the density matrix of a local subsystem and does not scale with the system size, as anticipated.\footnote{Because of the FDQC structure, the reduced density matrix over the region $A_2B_1$ is independent of the system size for $L\geq |A_2B_1|+2\ell-1$.} Therefore, it is possible to obtain an accurate estimate from a number of measurements and post-processing operations independent of $L$, for any arbitrary small additive error.
	
	To see this explicitly, we consider the following protocol. We introduce an estimator for $s_n[A_2B_1]$, namely
	\begin{equation}\label{eq:estimator_s}
		s^{(e)}_n=\frac{p^{(e)}_n[A_2B_1]}{\mathcal{P}^{(e)}_n[A_2]\mathcal{P}^{(e)}_n[B_1]}\,.
	\end{equation}
	The estimators for the moments $\mathcal{P}^{(e)}_n[A_2]$, $\mathcal{P}^{(e)}_n[B_1]$ and the PT-moment $p^{(e)}_n[A_2B_1]$ are defined in Eq.~\eqref{eq:estimator_single_PT}. We compute each of them out of $M$ different classical shadows, so that the total number of experimental runs is $M_T=3M$. This is done so that the statistical errors are independent, facilitating their analysis. After these steps, we obtain an estimate for $s_n[A_2B_1]$, and thus, due to Eq.~\eqref{eq:final_product_formula_PT}, for $\tilde{p}_n[AB]$. 
	
	Importantly, we can bound the additive statistical error which affects our estimate, namely
	\begin{equation}
		\varepsilon_{a} = \left|s_n^{(e)}-\tilde{p}_n[AB]\right|\,.
	\end{equation}
	For instance, focusing for simplicity on the case $n=3$, we prove in Appendix~\ref{sec:additive_error_PT} the following result: for any small $\delta>0$ and choosing
	\begin{equation}
		M\geq 27\frac{2^{11k+9}}{\delta^2}\,,
	\end{equation}
	we have
	\begin{equation}\label{eq:prob_inequality_PT}
		{\rm Pr}\left[\left|s^{(e)}_3-\tilde{p}_3[AB]\right|\geq \delta\right]\leq  81\frac{2^{11k+9}}{M\delta^2}\,,
	\end{equation}
	where $|A_2|=|B_1|=k\geq 2\ell-1$, with $\ell$ being the circuit depth. This result is proved by the same techniques used to derive Eq.~\eqref{eq:main_result}. Again, it is useful to rephrase our result in terms of the confidence level $\gamma$, yielding
		\begin{equation}\label{eq:confidence_level_PT_moment_FQDC}
			M\geq 81\frac{2^{11k+9}}{(1-\gamma)\delta^2}\,.
		\end{equation}

	Equation~\eqref{eq:prob_inequality_PT} states that a number of measurements not scaling with the system size is enough to guarantee that $\tilde{p}_3$ is approximated with arbitrary precision and high probability. Explicit performance guarantees such as~\eqref{eq:prob_inequality_PT} for higher integer values of $n$ are more cumbersome to derive, and will be omitted. Still, based on the analysis presented in Appendix~\ref{sec:additive_error_PT}, one can easily see that inequalities of the form~\eqref{eq:prob_inequality_PT} hold for higher $n$ too, where the RHS is always independent of $L$.
	
	\section{Finite-range correlated states}
	\label{sec:finite-range_states}
	
	In this section we present and discuss the most general form of our protocols. First, in Sec.~\ref{sec:AFCs} we give the definition of the AFCs, which state that the $n$-th powers of the system density matrix effectively factorize over regions of size smaller than some length scales $\xi_n$. Our definition might appear arbitrary at first, but we show later that such AFCs hold in large classes of states (see Secs.~\ref{sec:MPDOs} and ~\ref{sec:AFCs_numerics}). Next, in Sec.~\ref{sec:purity_PT_estimation_AFCs} we describe a protocol for purity and PT moment estimation in states satisfying the AFCs. Such protocols take as an input the value of the length scales $\xi_n$ (assumed to be known) and the desired accuracy, yielding as an output the numerical estimations for the purity and PT moments. The protocols only require polynomially-many (in $L$) measurements and post-processing operations. In addition, we can rigorously prove that the estimation errors are smaller than the desired threshold with high probability. Finally, we discuss the generality of the AFCs, proving analytically that they hold for MPDOs (Sec.~\ref{sec:MPDOs}) and presenting numerical evidence for their validity in Gibbs states of local Hamiltonians (Sec.~\ref{sec:AFCs_numerics}). 
	
	\subsection{The approximate factorization conditions}
	\label{sec:AFCs}
	
	The strategies developed so far rely on exact factorization conditions on the system density matrix, which are related to the sharp light-cone structure of correlation functions in FDQC states~\cite{farrelly2020review,piroli2020quantum,piroli2021quantum}. Away from these ideal cases, Eqs.~\eqref{eq:split_1_purity} and~\eqref{eq:final_product_formula_PT} do not hold, but it is natural to conjecture that they can be modified taking into account exponential corrections over scales governed by the system correlation lengths. This is done in this section, where we introduce a set of AFCs generalizing Eqs.~\eqref{eq:split_1_purity} and~\eqref{eq:final_product_formula_PT}, respectively.
	
	Given a state $\rho$, we say that it satisfies the \emph{purity AFC} if there exist  $\alpha_2, \beta_2, k_c> 0$ such that, for any interval $I=A\cup B\cup C$ as in Fig.~\ref{fig:partition}$(a)$, with $|B|=k\geq k_c$, we have
	\begin{align}\label{eq:approx_split_1}
		\!\!\!\left|{\rm Tr}(\rho_{I}^2)^{-1}\frac{{\rm Tr}_{AB}(\rho_{AB}^2){\rm Tr}_{BC}(\rho_{BC}^2)}{{\rm Tr}_{B}(\rho_{B}^2)}\!-1\right|\leq\alpha_2 e^{-\beta_2 |B|}.
	\end{align} 
	Similarly, we say that $\rho$ satisfies the \emph{PT-moment AFCs} if there exist $\alpha_n,\beta_n>0$ and an integer $k_c$ such that
	\begin{equation}\label{eq:approx_split_2}
		\left|\tilde{p}_n[AB]- s_n[A_2B_1]\right|\leq \alpha_n  e^{-\beta_n (|A_2|+|B_1|)} \,, 
	\end{equation}
	for all partitions as in Fig.~\ref{fig:partition}$(b)$ with $|B_1|= |A_2|=k\geq k_c$. Here $\tilde{p}_n[AB]$ and $s_n[A_2B_1]$ are defined in Eqs.~\eqref{eq:tilde_p_n_def} and ~\eqref{eq:def_sn}, respectively. Equations~\eqref{eq:approx_split_1} and~\eqref{eq:approx_split_2}  obviously generalize~\eqref{eq:split_1_purity} and~\eqref{eq:final_product_formula_PT}, introducing exponential corrections over length scales $\xi_n=\beta^{-1}_n$.
	
	While the definitions~\eqref{eq:approx_split_1} and~\eqref{eq:approx_split_2} might appear arbitrary at first, we show later that such AFCs hold in large classes of states (see Secs.~\ref{sec:MPDOs} and ~\ref{sec:AFCs_numerics}). This statement is very intuitive: on the one hand, we have shown that they hold exactly for FDQC states; on the other hand, in the absence of topological order ground states are known to be well-approximated by the latter~\cite{chen2010local,hastings2013classifying,zeng2015quantum,zeng2015gapped}. It is then natural to expect that the AFCs continue to hold as the temperature is increased or as the system is perturbed by incoherent noise, as these ingredients do not introduce long-range quantum correlations. In the next section, we assume that the AFCs hold, and describe a protocol for purity and PT moment estimation. We stress that we do not need any additional assumption on the state to be measured. For instance, we do not require that it can be represented efficiently by an MPDO. 
	
	Our protocols take as an input the values of $\alpha_n$, $\beta_n$, and the desired accuracy. For instance, the purity estimation protocol takes as an input $\alpha_2$, the correlation length $\xi_2=\beta^{-1}_2$, and the target threshold relative error $\delta$. Importantly, it is not necessary to know the values of $\alpha_2$ and $\beta_2$ exactly: it is sufficient to have two estimates $\tilde\alpha_2$, $\tilde\xi_2$, which upper-bound them. This is because if $\alpha_2\leq \tilde{\alpha}_2$ and $\xi_2\leq \tilde{\xi}_2$, then Eq.~\eqref{eq:approx_split_1} also holds replacing $\alpha_2$ and $\beta_2$ with $\tilde\alpha_2$ and $\tilde\beta_2=\tilde{\xi}_2^{-1}$, respectively. We can choose $\tilde{\alpha}_2$ and $\tilde{\xi}_2$ arbitrarily, but the number of measurements and post-processing operations increase parametrically with $\tilde{\alpha}_2$ and $\tilde{\xi}_2$, see Eq.~\eqref{eq:final_result_purity_short_range}. Therefore, the more accurately we can estimate the values of $\alpha_2$ and $\beta_2$, \emph{i.e.} the more information we have on the state to be measured, the more efficiently we can estimate the purity and, similarly, the PT moments. It is important to note that, in this respect, the situation is completely analogous to MPDO tomography. In that case, one \emph{assumes} that the density matrix of the system $\rho$ can be described accurately by some MPO with a bond dimension $D$. Such bond dimension does not need to be known exactly, but only an upper bound for it needs to be known. In practice, one can take $\alpha_2$ and $\beta_2^{-1}$ larger than any expected length scale in the system.
	
	Finally, suppose that, for given state $\rho$, we have an ansatz for $\alpha_2$ and $\xi_2$. From the experimental point of view, an important question is whether it is possible to efficiently certify that Eq.~\eqref{eq:approx_split_1} holds, with the ansatz values $\alpha_2$ and $\xi_2$, for the state to be measured. While at the moment this is an open question, a simple consistency check consists in repeating the purity estimation protocol (explained in \ref{sec:purity_PT_estimation_AFCs}) for increasing values of the input ansatz values $\alpha_2$ and $\xi_2$, and checking that the estimated value for the purity does not change, up to the expected precision. If the estimated purity does change, this is a ``red flag'' signalling the failure of the AFCs. We refer to Sec.~\ref{sec:outlook} for further discussions on the possibility to efficiently certify the validity of the AFCs.
	
	\subsection{Estimation protocols for states satisfying the AFCs}
	\label{sec:purity_PT_estimation_AFCs}

	First, we consider estimating the global purity of a state satisfying the AFCs~\eqref{eq:approx_split_1}. We assume the following:
	\begin{itemize}
		\item we know that the system is prepared in a state $\rho$ satisfying the AFCs~\eqref{eq:approx_split_1};
		\item we know two numerical values $\alpha_2$ and $\beta_2$ for which Eq.~\eqref{eq:approx_split_1} is satisfied.
	\end{itemize}
	Below, we detail our protocol to efficiently estimate the purity. The proof that the protocol works is non-trivial and is presented in detail in Appendix~\ref{sec:purity_short_range}. Intuitively, we use the fact that, because of the AFC~\eqref{eq:approx_split_1}, the estimator $r^{(e)}_2$ defined in Eq.~\eqref{eq:r2_e} yields a good approximation for the purity, for values of $|I_j|$ which grow mildly (logarithmically) in $L$.
	
	The protocol takes as an input the values of $\alpha_2$ and $\beta_2$, and consists of the following steps:
	\begin{enumerate}
		\item Choose the desired relative error $\delta$ on the purity. This can be any arbitrarily small number $\delta>0$;
		\item Take a partition of the system as in Fig.~\ref{fig:cartoon}, where $|I_j|=k$ and choose
		\begin{equation}\label{eq:global_approximation_condition}
			k \geq \xi_2\log(7\alpha_2 L/\delta)\,,
		\end{equation}
		where $\xi_2=\beta_2^{-1}$. The reason for the functional form appearing in the RHS is technical and explained in Appendix~\ref{sec:purity_short_range};
		\item As in Sec.~\ref{sec:purity_factorization_FDQC}, perform $M_I=M$ classical-shadow measurements to estimate the purity of each interval $I$, with $I=I_j$ or $I=I_j\cup I_{j+1}$ [we use the set of measurements $M_I$ to only determine the purity over $I$, so that the total number of experimental runs is given by~\eqref{eq:total_M}].
		\item From the classical shadows obtained out of the measurements, compute $r^{(e)}_2$ defined Eq.~\eqref{eq:r2_e}. This is the output of our protocol, giving us an estimate for the global purity.
	\end{enumerate}
	
	As mentioned, it is intuitive that the output of this protocol $r^{(e)}_2$ is, with high probability, an accurate approximation for the purity. This is proven rigorously in Appendix~\ref{sec:purity_short_range}, where we show\footnote{More precisely, Eq.~\eqref{eq:final_result_purity_short_range} holds for $M$ sufficiently large, namely $\!M\!\geq {\rm max}\left\{
		\left(7\alpha_2 L/\delta\right)^{8\xi_2\log 2}\!, (L^27^22^{10}/\delta^2)\left(7\alpha_2 L/\delta\right)^{4\xi_2\log 2}
		\right\}$, see Appendix~\ref{sec:purity_short_range}.} 
	\begin{equation}\label{eq:final_result_purity_short_range}
		\!\!{\rm Pr}\left[\left|(r_2^{(e)}/\mathcal{P}_2)-1\right|\geq \delta\right]\!\leq \frac{7^2 2^{11}L^3}{\delta^2 M}\left(\frac{7\alpha_2 L}{\delta}\right)^{4\xi_2 \log 2}\,.
	\end{equation}
	Therefore, by a polynomial number of measurements $M$ we can guarantee that the probability that the relative error $|r_2^{(e)}/\mathcal{P}_2-1|$ is larger than $\delta$ is arbitrarily small. Rephrasing our result in terms of the confidence level $\gamma$, we thus obtain
		\begin{equation}\label{eq:confidence_level_purity_AFC}
			M\geq \frac{7^2 2^{11}L^3}{\delta^2 (1-\gamma)}\left(\frac{7\alpha_2 L}{\delta}\right)^{4\xi_2 \log 2}\,.
		\end{equation}
	
	This formalizes the anticipated result. 
	
	Note that $r^{(e)}_2$ defined above is a faithful estimator for the quantity
	\begin{equation}\label{eq:def_r2}
		r_2=\frac{\prod_{j=1}^{R-1}{\rm Tr}_{I_j\cup I_{j+1}}(\rho^2_{I_{j}\cup I_{j+1}})}{\prod_{j=2}^{R-1}{\rm Tr}_{I_j}\left[\rho_{I_j}^{2}\right]}\,,
	\end{equation}
	but because~\eqref{eq:final_product_formula_purity} does not hold exactly, it is not a faithful estimator of the global purity $\mathcal{P}_2$. Still, it is a good approximation, as it is clear from the relation
	\begin{equation}\label{eq:global_approximation}
		\left|\frac{r_2}{\mathcal{P}_2}-1\right|\leq  \frac{4\alpha_2L}{k}e^{-\beta_2 k}\,,
	\end{equation}
	which holds for
	\begin{equation}\label{eq:large_k_condition}
		k\geq \xi_2\log (2\alpha_2 L)\,,
	\end{equation}
	and follows directly from~\eqref{eq:approx_split_1}, cf. Appendix~\ref{sec:purity_short_range}. In practice, the RHS of~\eqref{eq:global_approximation} makes it necessary to consider intervals $I_j$ whose length $k$ grows logarithmically in $L$ to keep the error below the desired threshold. 
	
	A similar protocol works for the PT moments. In this case, a number of measurements which does not scale with the system size is enough to guarantee arbitrary accuracy with high probability, cf. Appendix~\ref{sec:purity_short_range}. We note that the results of this section imply that the number of post-processing operations to estimate the purity and PT moments is also at most polynomial in $L$. This is because the estimators~\eqref{eq:estimator_single_purity} and~\eqref{eq:estimator_single_PT} are constructed summing a number of terms which is polynomial in $M$, and hence at most polynomial in $L$. 
	
	In the rest of this section, we discuss the generality of the AFCs. We prove analytically that they hold for the important class of translation-invariant MPDOs and provide numerical evidence that they are also satisfied by thermal states of local Hamiltonians. These results showcase the generality and versatility of our approach. 
	
	\subsection{Proof of AFCs in Matrix Product Density Operators}
	\label{sec:MPDOs}
	
	We now focus on MPDOs, a very general class of states providing accurate approximations for several short-range correlated density matrices, including thermal states of local Hamiltonians~\cite{cirac2017matrix,cirac2021matrixproduct}. We study the translation-invariant case, where we can provide analytic results. We recall that a translation-invariant MPDO $\sigma$ is defined by a four-index tensor $A^{j,k}_{a,b}$, with $j,k=0,1$ and $a, b=0,1\ldots \chi-1$, where $\chi$ is its \emph{bond dimension}~\cite{silvi2019tensor}. For a system size $L$, its matrix elements read
	\begin{equation}\label{eq:MPDO}
		\braket{i_1,\ldots ,i_L|\sigma_L |j_1,\ldots ,j_L}\!=\!A^{i_1,j_1}_{a_1,a_2}A^{i_2,j_2}_{a_2,a_3}\cdots A^{i_L,j_L}_{a_L,a_1},
	\end{equation}
	where repeated indices are summed over. This defines a family of normalized states\footnote{We assume that the local tensors $A$ generate a positive operator for all system sizes $L$, namely $\sigma_L\geq 0$ for all $L$.} 
	\begin{equation}
		\rho_L=\frac{\sigma_L}{{\rm Tr[\sigma_L]}}\,.
	\end{equation} 
	Using the standard notation from tensor-network theory~\cite{silvi2019tensor}, we can represent Eq.~\eqref{eq:MPDO} as
	\begin{equation}\label{eq:rho_tensor}
		\sigma_L = \raisebox{-11pt}{\includegraphics[scale=0.21]{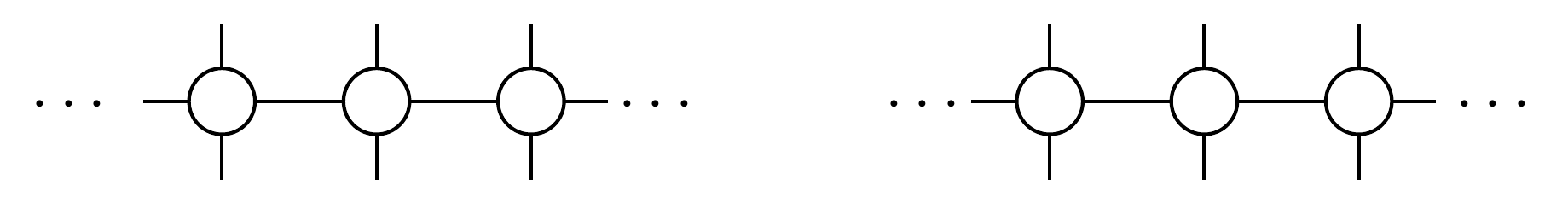}}
	\end{equation}
	Here each circle correspond to a tensor $A^{j,k}_{a,b}$. Lower and upper legs are associated with input and output degrees of freedom, respectively, while the remaining ones correspond to the ``virtual'' indices $a_j=0,\ldots \chi-1$. Note that contracted legs indicate pairs of indices which are summed over. 
	
	We first focus on the purity. We are able to show that MPDOs satisfy the purity AFCs under a few technical assumptions, which encode the fact that all correlation lengths are finite but are otherwise very general. Technically, we impose a few conditions on the spectrum and eigenstates of the transfer matrices\footnote{Note that $\tau_n$ is not Hermitian and that, for $n=1$, we recover the standard transfer matrix~\cite{cirac2017matrix}.} 
	\begin{equation}\label{eq:tau_n}
		\tau_n=
		\raisebox{-45pt}{\includegraphics[scale=0.44]{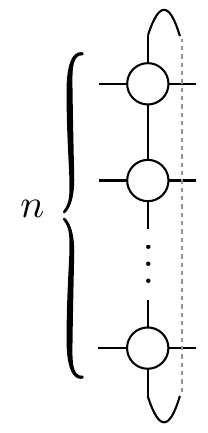}}\,,
	\end{equation}
	where the top and bottom legs are contracted. Informally, we ask that $\tau_1$ and $\tau_2$ have non-degenerate eigenvalues with largest absolute value, and that the corresponding eigenstates are not orthogonal. These conditions are general in the sense that one needs to choose fine-tuned examples to violate them. We refer the reader to Appendix~\ref{sec:factorization_MPDOs} for a precise statement and a detailed discussion. 
	
	Under these assumptions, we show that $\rho_L$ satisfies the purity AFCs. Note that, different from the rest of this work, here we are considering periodic boundary conditions. Since Eq.~\eqref{eq:approx_split_1} was introduced for open boundary conditions, we consider directly its global version~\eqref{eq:global_approximation}, which can immediately be generalized to the periodic case. Clearly, the expression for $r_2$ must be modified with respect to~\eqref{eq:def_r2}: taking into account periodic boundary conditions, we replace it with
	\begin{equation}\label{eq:r_2_periodi}
		r_2=\frac{\prod_{j=1}^{R}{\rm Tr}_{I_j\cup I_{j+1}}(\sigma^2_{I_{j}\cup I_{j+1}})}{\prod_{j=1}^{R}{\rm Tr}_{I_j}\left[\sigma_{I_j}^{2}\right]}\,,
	\end{equation}
	where we assume without loss of generality that $R=L/k$ is an integer, while the normalization factors ${\rm Tr}[\sigma^2]$ cancel out. 
	
	\begin{figure*}
		\includegraphics[width=0.95\textwidth]{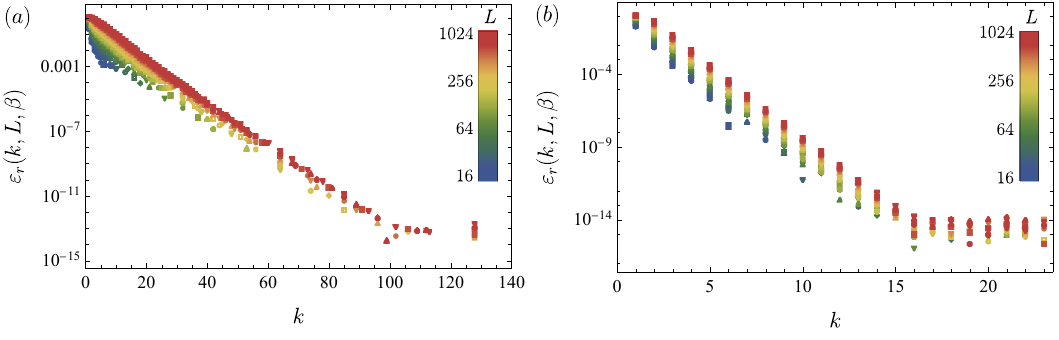}
		\caption{Relative error $\varepsilon_r(k,L,\beta)$, defined in Eq.~\eqref{eq:relative_error_numerics}, as a function of $k$, for $\beta=2$ and different values of $L$. In the left and right plots we report numerical data corresponding to the Ising and XXZ spin chains, respectively. In the former case, we chose $h_x=1.1$, $h_z=-0.04$, while the maximal bond dimension used in the computation is $\chi=32$. In the latter case, we set $\Delta=2$, while $\chi=64$.}
		\label{fig:decayingK}
	\end{figure*}
	
	The task of proving Eq.~\eqref{eq:global_approximation} [using the definition~\eqref{eq:r_2_periodi}] is carried out in Appendix~\ref{sec:factorization_MPDOs}. There, we derive the following statement: given the family of MPDOs $\rho_L$ as above, there exist $\zeta>0$, and $C>0$ (independent of $L$) such that, for any integer $k\geq k_{\rm min}$, with
	\begin{equation}\label{eq:final_kc}
		k_{\rm min}={\rm max}\left\{1, \zeta \log (20 C \chi^2 L)\right\}\,,
	\end{equation}
	we have
	\begin{equation}\label{eq:MPDO_approximation}
		\left| \frac{r_2}{\mathcal{P}_2}-1  \right| \leq \chi^2 (80C+32) \frac{L}{k} e^{-k /\zeta}\,,
	\end{equation}
	for all $L\geq {\rm max}\{\zeta\log(2^{5}\chi^2), 4k +\zeta\log (2\chi)\}$. The correlation length $\zeta$ and the constant $C$ depend on both the eigenvalues and eigenvectors of the transfer matrices $\tau_1$ and $\tau_2$, cf.~Appendix~\ref{sec:factorization_MPDOs}. Therefore, the purity approximate factorization formula~\eqref{eq:global_approximation} holds with $\alpha_2=\chi^2(20C+8)$, and $\beta_2=1/\zeta$. In turn, using the results of the previous sections, this implies that the purity of MPDOs can be estimated efficiently using our approach.\footnote{Note that, with these identifications, Eq.~\eqref{eq:global_approximation_condition} implies $k\geq k_{\rm min}$, where $k_{\rm min}$ is defined in Eq.~\eqref{eq:final_kc}.}
	
	Similarly, the AFCs for PT moments could be verified analytically for bulk-translation-invariant MPDOs with suitable open boundary conditions. This is slightly more involved, as a few additional technical assumptions on the boundaries are needed. For this reason, we do not discuss this explicitly. Instead, the AFCs for PT moments will be numerically analyzed in detail in the next section, together with the purity AFCs, for the physically interesting case of Gibbs states of local Hamiltonians in a few concrete examples.
	
	\subsection{Numerical study of AFCs in Gibbs states}
	\label{sec:AFCs_numerics}
	
	\begin{figure*}
		\includegraphics[width=0.95\textwidth]{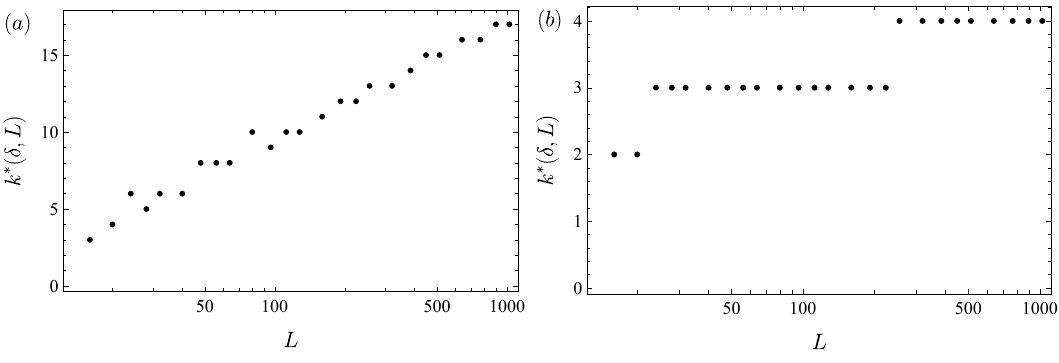}
		\caption{Numerical results for the minimal region size $k^\ast(\delta,L)$  required to achieve a specified precision $\delta$ of purity estimation, see Eq.~\eqref{kStar}. In the plots we chose $\delta=0.01$ and $\beta=2$, while the left and right figure correspond to the Ising and XXZ spin chains, respectively.  The Hamiltonian parameters and bond dimensions are the same as in Fig.~\ref{fig:decayingK}.}
		\label{fig:kStar}
	\end{figure*}
	
	As mentioned, it is natural to conjecture that the AFCs are very general, as they encode the fact that all the spatial correlation lengths in the system are finite. In this section, we provide numerical evidence supporting this claim, studying thermal states of two prototypical $1D$ local Hamiltonians. We focus in particular on the Ising model with transverse and longitudinal fields
	\begin{equation}\label{eq:ising_hamiltonian}
		H_{\rm I}=-\frac{1}{4}\sum_{j=1}^{L-1}\left[\sigma^z_j \sigma^z_{j+1}+h_x\sigma^x_j+h_z \sigma^z_j\right]\,,
	\end{equation}
	and the so-called XXZ Heisenberg spin chain
	\begin{equation}\label{eq:xxz_hamiltonian}
		H_{\rm XXZ}=-\frac{1}{4}\sum_{j=1}^{L-1}\left[\sigma^x_j \sigma^x_{j+1}+\sigma^y_j \sigma^y_{j+1}+\Delta \sigma^z_j \sigma^z_{j+1}\right]\,.
	\end{equation}
	We will study the corresponding Gibbs states
	\begin{equation}
		\rho_\beta=\frac{\exp\left(-\beta H\right)}{Z_\beta}\,,
	\end{equation}
	where $Z_\beta$ is a normalization factor.
	
	Note that the  Hamiltonians $H_I$ and $H_{\rm XXZ}$ are integrable for $h_z=0$~\cite{sachdev1999quantum} and all values of $\Delta$~\cite{korepin1997quantum}, respectively, although we do not expect that integrability plays any role in the following discussion. Note also that both Hamiltonians feature second-order quantum phase transitions at zero temperature: For $H_{\rm I}$ the critical line is $h_z=1$, while $H_{\rm XXZ}$ displays a critical phase for $|\Delta|<1$. However, we will be interested in finite-temperature states, and since in $1D$ the temperature always introduces a finite length scale, we similarly expect that quantum criticality does not play any role.\footnote{An interesting question is whether our approach could be extended to pure critical states. Although in this case there is no spatial length scale, it is known that they can be accurately approximated by MPSs with bond-dimension scaling polynomially in the system size~\cite{verstraete2006matrix}, suggesting that our approach could be generalized to include that case as well. We leave this problem for future work.} 
	
	In order to assess the validity of the AFC for the purity, we test its global version~\eqref{eq:global_approximation}. To this end, we numerically compute $r_2(k)$, as defined in Eq.~\eqref{eq:def_r2}, for increasing values of the interval size $k=|I_j|$, and the global purity $\mathcal{P}_2$. Then, we evaluate the corresponding relative error
	\begin{equation}\label{eq:relative_error_numerics}
		\varepsilon_r(k,L,\beta)=\left|\frac{r_2(k)}{\mathcal{P}_2}-1\right|\,,
	\end{equation}
	where we have made the dependence on $k$, $\beta$, and $L$ explicit. The calculations are carried out using the iTensor library~\cite{fishman2022itensor}, by first approximating the thermal states by MPOs of finite bond dimension, denoted by $\chi$, and subsequently taking powers and traces of such MPOs. For each quantity, this procedure is repeated for increasing $\chi$, until convergence is reached. We refer to Appendix~\ref{sec:appendix_numerics} for further details. 
	
	For fixed values of $L$, we first test the exponential dependence in $k$ of $\varepsilon_{r}(k,L,\beta)$ predicted by~\eqref{eq:global_approximation}. Examples of our numerical data are reported in Fig.~\ref{fig:decayingK}, showing a very clear exponential decay.  For the Ising model, we chose a non-zero value of the longitudinal field, in order to break integrability. Note that, for very large $k$, we see an apparent plateau. We attribute this behavior to finite numerical precision of our computations (note the very small values of $\varepsilon_r$ corresponding to these plateaus). It is interesting to note that the decay rate for the XXZ chain is faster than the Ising model, although higher bond dimensions are needed in order to accurately approximate the corresponding Gibbs state by an MPO. In any case, we see that quite small values of $k$ are enough to obtain a very good approximation of the purity. We have repeated the calculations for different values of the Hamiltonian parameters and $\beta$, finding consistent results.
	
	Next, we study the functional form of $\varepsilon_r(k,L,\beta)$. From the numerical point of view, it is not straightforward to test that it is asymptotically bounded by the RHS of Eq.~\eqref{eq:global_approximation}. Therefore, we proceed differently. Given an arbitrary fixed $\delta>0$, we define the minimal region size required to achieve a specified precision~$\delta$,
	\begin{equation}\label{kStar}
		k^\ast(\delta,L)={\rm min}\left\{k: \varepsilon_r(k,L,\beta)< \delta\right\}\,.
	\end{equation}
	Assuming that $\varepsilon_r$ satisfies an asymptotic bound of the form~\eqref{eq:global_approximation}, it is easy to show that $k^\ast(\delta,L)$ grows at most logarithmically in $L$, namely there exist $\alpha_2$ and $\beta_2$ such that
	\begin{equation}\label{eq:scaling_kstar}
		k^\ast(\delta,L)\leq \frac{1}{\beta_2} \log (\alpha_2 L/\delta)\,.
	\end{equation}
	When this condition holds, the estimation protocol explained in Sec.~\ref{sec:purity_PT_estimation_AFCs} is efficient. Importantly, compared to Eq.~\eqref{eq:approx_split_2},  Eq.~\eqref{eq:scaling_kstar} is straightforward to test by our numerical computations.
	
	We have verified that Eq.~\eqref{eq:scaling_kstar} is satisfied for several values of the temperature and the Hamiltonian parameters. An example of our numerical results is given in Fig.~\ref{fig:kStar}. For the Ising model, we see a very clear logarithmic scaling. For the Heisenberg chain, higher bond dimensions are required to achieve the same accuracy. Accordingly, we were not able to simulate sufficiently large system sizes to obtain a clear logarithmic behavior. Still, our results are always  consistent with a  growth of $k^\ast(\delta,L)$ which is at most logarithmic in $L$, cf. Fig.~\ref{fig:kStar}.  Thus, our numerical data fully support the validity of Eq.~\eqref{eq:scaling_kstar}. In fact, note that the values of $k^\ast(\delta, L)$ are found to be very small in practice. This is true even when the required bond dimension for MPO calculations is large, as in the case of the XXZ Heisenberg chain. This might make our approach useful already for relatively small system sizes, even if compared against MPO tomographic methods.
	
	Finally, we study the AFC for the PT moments~\eqref{eq:approx_split_1}. To this end, we define the additive error
	\begin{equation}\label{eq:additive_error}
		\varepsilon_a(k,L,\beta) =	\left|\tilde{p}_n[AB]- s_n[A_2B_1]\right|\,,
	\end{equation}
	where we have made the dependence on $k$, $\beta$, and $L$ explicit. Because the numerical cost of evaluating the trace of the $n$-th power of an MPO increases with $n$, we have restricted our analyses to $n= 3$.
	
	\begin{figure*}[t]
		\includegraphics[width=0.95\textwidth]{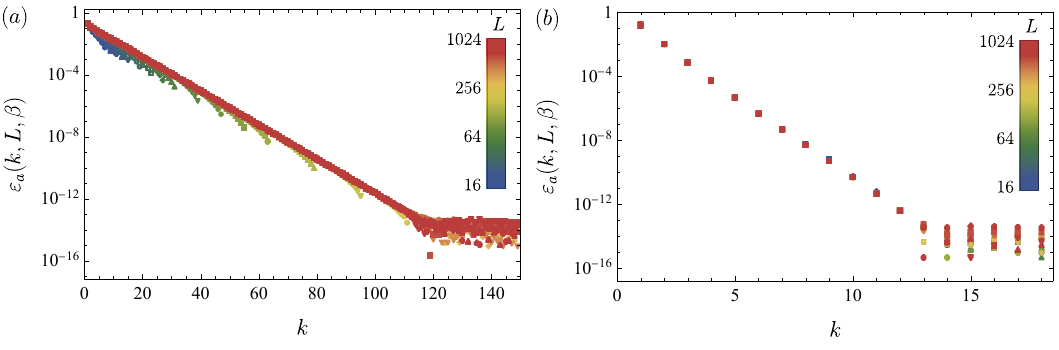}
		\caption{Additive error $\varepsilon_a(k,L,\beta)$, defined in Eq.~\eqref{eq:additive_error}, as a function of $k$, for $\beta=2$ and different values of $L$. In the left and right plots we report numerical data corresponding to the Ising and XXZ spin chains, respectively. The Hamiltonian parameters and bond dimensions are chosen as in Fig.~\ref{fig:decayingK}. }
		\label{fig:decayingK_p3}
	\end{figure*}
	
	In Fig.~\ref{fig:decayingK_p3} we report an example of our data for $\varepsilon_a(k,L,\beta)$ as a function of $k$, for increasing system sizes. From the plots, two things are apparent. First, for fixed $L$, the additive error displays a clear exponential decay. Second, its values are independent of $L$ (for values of $k$ much smaller than the system size). This behavior is perfectly consistent with Eq.~\eqref{eq:approx_split_2}. Note that, as in the previous plots, we observe irregular behavior for sufficiently large $k$, where $\varepsilon_a$ approaches a plateau and displays a $L$-dependence. We interpret these effects as arising due to numerical inaccuracies (note the very small values of $\varepsilon_a$). Similarly, when $k$ approaches $L/2$ we observe $L$-dependent behavior, which can be attributed to finite size effects. We have repeated our calculations for different $\beta$ and Hamiltonian parameters (in each case, we have checked that the bond dimension was large enough), always finding consistent results.

		\subsection{Practical feasibility of purity estimation}
		\label{sec:full_simulation}
		
		In Sec.~\ref{sec:purity_PT_estimation_AFCs}, we have shown that, for states satisfying the AFCs, the purity and the PT moments may be estimated by a number of measurements growing only polynomially in the system size. Still, the degrees of the polynomials in our bounds are high, cf. for example Eq.~\eqref{eq:confidence_level_PT_moment_FQDC}. Accordingly, one may wonder whether our approach could be applied in practice in experiments with current repetition rates. It is important to note, however, that our analysis was mainly concerned with establishing the polynomial dependence rigorously, so that our bounds may likely be improved by more refined mathematical derivations. More importantly, as we have stressed, our protocol may be improved by choosing different ways of estimating the local purities and PT moments, a simple variant being the method of ``common randomized measurements''~\cite{vermersch2024enhanced}. These improvements are expected to lead to polynomial bounds with lower exponents.
		
		Going beyond general bounds, it is important to illustrate how our estimation protocol performs in practice. This is done in this section, where we exemplify its feasibility by presenting a classical simulation of the full protocol in an explicit case. For simplicity, we focus on random MPSs~\cite{cirac2017matrix,cirac2021matrixproduct} of $N$ qubits and consider the purity of the reduced density matrix over half system. The choice of pure MPSs is motivated by the fact that it makes it easier to classically simulate the probability distribution of the measurement outcomes. At the same time, random states model typical physical states without allowing for fine-tuning. Our simulations show that the number of measurements needed to reconstruct the purities is order of magnitudes smaller compared to our theoretical bounds, demonstrating that our protocol is experimentally feasible in current quantum platforms.

		The details of our simulation are as follows. We initialize the system in a random MPS with complex coefficients and bond dimension $\chi$~\cite{fishman2022itensor}. Once the random MPS is constructed, we save a copy of it and simulate the RM process (in particular, we imagine that we have experimental access to many identical copies of the state). In our simulation, we consider a slightly more general measurement scheme compared to the one detailed in Sec.~\ref{sec:RM_toolbox}. In particular, we make use of the RMs described in Ref.~\cite{elben2018renyi} where each random unitary $u_j$ is re-used for $n_M$ local measurements. Therefore, denoting by $n_U$ the number of distinct unitaries, the total number of experiments is $n_Un_M$ (so that the classical-shadow approach described in Ref.~\ref{sec:RM_toolbox} is a special case with $n_M=1$, $n_U=M$). Each simulated experimental run produces a bit-string ${\bf s}^{(m,k)}=(s_1^{m,k},\ldots , s_L^{m,k})$, with $m=1,\ldots n_U$, $k=1\ldots n_M$. Note that, numerically, we can obtain the correct probability distribution for the measurement outcomes using the perfect-sampling algorithm for MPSs~\cite{ferris2012perfect}. We have chosen to simulate the RM scheme with re-use of local unitaries as this scheme is more efficient to simulate and it has been used in a number of recent experiments~\cite{brydges2019probing,satzinger2021realizing,stricker2022experimental,zhu2022cross,hoke2023measurement}

		Finally, we collect the simulated measurement outcomes to construct the purity estimator~\eqref{eq:r2_e}. Instead of Eq.~\eqref{eq:estimator_single_purity}, we use the following estimator for the local purities
		\begin{equation}\label{eq:Hamming_estimator}
			\mathcal{P}^{(e)}_2[I]= \frac{2^N}{n_U n_M(n_M-1)}\sum_{m=1}^{n_U} \sum_{\substack{k,k'=1\\ k\neq k'}}^{n_M} (-2)^{-D[{\bf s}_I^{(m,k)}, {\bf s}_I^{(m,k')}]}
		\end{equation}
		where $D[{\bf s}_I^{(m,k)}, {\bf s}_I^{(m,k')}]$ is the Hamming distance between the bit-stings ${\bf s}_I^{(m,k)}$ and ${\bf s}_I^{(m,k')}$ on subsystem $I$.
		Compared to~\eqref{eq:estimator_single_purity}, Eq.~\eqref{eq:Hamming_estimator} achieves a more accurate estimate for many repetitions $n_M$ using a few set of distinct unitaries $u_M$ and is expected to be more robust against miscalibration-errors~\cite{elben2023randomized}.
		We also checked numerically that our results are qualitatively unchanged when using estimators based on classical shadows.

	\begin{figure*}[t]
		\includegraphics[width=0.99\textwidth]{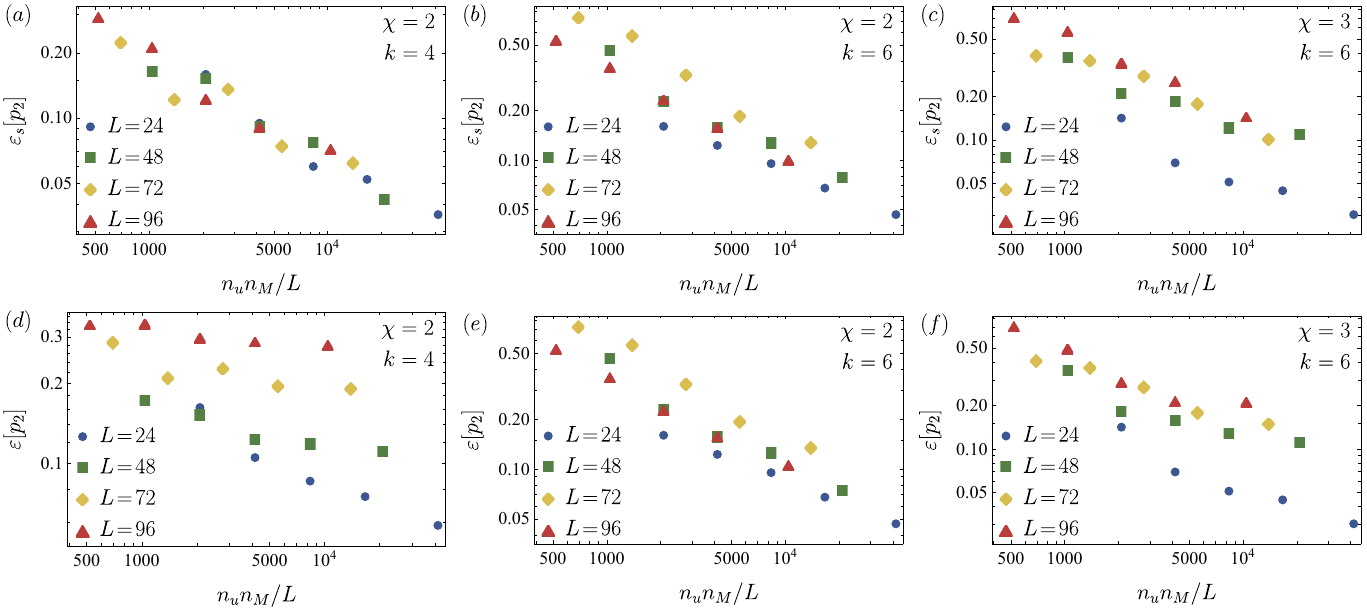}
		\caption{Statistical and full relative error on the bipartite purity in random MPSs, as a function of the number of experimental runs $n_Un_M$, with fixed $n_M=1000$. For each plot, we report the bond dimension $\chi$ of the random MPS and the length of the intervals $|I_{j}|=k$ in which the system is partitioned in our estimation protocol.}
		\label{fig:simulation_measurements}
	\end{figure*}

		Using the above simulation procedure, we have studied the estimator $r_2^{(e)}$ defined in Eq.~\eqref{eq:estimator_s} as a function of the number of measurements $n_Un_M$. We computed in particular the statistical error $\varepsilon_s= r_2^{(e)}/r_2$, where $r_2$ is the r.h.s. of Eq.~\eqref{eq:final_product_formula_purity} and the full relative error $\varepsilon=\varepsilon_r$ defined in Eq.~\eqref{eq:purity_relative_error}. In Fig.~\ref{fig:simulation_measurements}, we report our data for $\chi=2,3$, $k=4,6$ and system sizes up to $L=96$ (each plot corresponds to an average error over $20$ instances of the random MPS). 
		
		From the plots, we see that roughly $10^4$ measurements are enough to obtain a relative error below $10\%$ for system sizes up to $L\sim 100$.~\footnote{Note that in this range of the parameters, statistical and full errors are seen to be numerically close, signalling that the deviations from exact AFCs are small for $k=6$, both for $\chi=2$ and $\chi=3$.} In addition, roughly $10^5$ measurements are needed to reach an error around $5\%$ for system sizes up to $L\sim 50$ (note that, for these sizes, a direct application of RM schemes to estimate global purities is unfeasible). As announced, these numbers are order of magnitudes smaller than what predicted by our general bounds, suggesting that these bounds are far from being tight. 
		
		We stress that the number of measurements performed in our simulations are of the order of those performed in current trapped-ion and superconducting qubit experiments, see e.g. Refs.~\cite{brydges2019probing,satzinger2021realizing,stricker2022experimental,zhu2022cross,hoke2023measurement,vitale2023estimation}. The results reported in this section thus substantiate the claim that our protocol is practically useful for global purity and entanglement estimation in today's quantum platforms.

	\section{Mixed-state entanglement detection}
	\label{sec:efficient_entanglement_detection}
	
	In this section we study the $p_n$-PPT conditions for states satisfying the AFCs. In Sec.~\ref{sec:pn_PPT_fn} we first introduce a set of quantities, which we call $f_n(\rho)$, displaying the following properties: (i) $f_n(\rho)<0$ if and only if $\rho$ violates the $p_n$-PPT conditions and $(ii)$ $f_n(\rho)$ can be estimated efficiently if $\rho$ satisfies the AFCs, representing natural probes for mixed-state entanglement. Next, in Sec.~\ref{sec:examples}, we compute $f_n$ in a few concrete cases including Gibbs states of local Hamiltonians, and identify the values of the system parameters for which $f_2, f_3<0$. The results of this section show that our approach can be successful in detecting entanglement even for large numbers of qubits and highly mixed states, thus being practically useful in many situations routinely encountered in available quantum platforms.
	
	\subsection{The $p_n$-PPT conditions and mixed-state entanglement probes}
	\label{sec:pn_PPT_fn}
	
	Given the density matrix $\rho_{AB}$ on a bipartite system, its logarithmic negativity~\eqref{eq:negativity} can be extracted from the knowledge of all PT moments $p_n$, with $n=1,\ldots,{\rm dim}(\mathcal{H}_A\otimes \mathcal{H}_B)$. Therefore, while our method is efficient to estimate the individual PT moments, reconstructing the logarithmic negativity is still hard, as it requires an exponential number of them.
	
	Luckily, the $p_n$-PPT conditions introduced in Sec.~\ref{sec:intro_ppt_conditions} allow us to detect mixed-state entanglement from only a few PT moments. For example, the first two non-trivial elements in the family~\eqref{eq:pn_PPT} only involve the PT moments up to $n=5$. Explicitly, they read
	\begin{subequations}\label{eq:ppt_conditions_statement}
		\begin{align}
			p_3{\rm -PPT}: &\quad	p_3p_1\geq p_2^2\,,\label{eq:p3}\\
			p_5{\rm -PPT}: &\quad	p_3p_5\geq p_4^2\label{eq:p5}\,.
		\end{align}
	\end{subequations}
	The conditions~\eqref{eq:ppt_conditions_statement} can be equivalently rewritten in terms of the quantities
	\begin{subequations}\label{eq:ppt_conditions_functions}
		\begin{align}
			f_3(\rho_{AB})&=\tilde{p}_3\tilde{p}_1-\tilde{p}_2^2\frac{\mathcal{P}_2^{2}[A]\mathcal{P}_2^{2}[B]}{\mathcal{P}_1[A]\mathcal{P}_1[B]\mathcal{P}_3[A]\mathcal{P}_3[B]}\,,\\
			f_5(\rho_{AB})&=\tilde{p}_5\tilde{p}_3-\tilde{p}_4^2\frac{\mathcal{P}_4^{2}[A]\mathcal{P}_4^{2}[B]}{\mathcal{P}_3[A]\mathcal{P}_3[B]\mathcal{P}_5[A]\mathcal{P}_5[B]}\,,
		\end{align}
	\end{subequations}
	where $\tilde{p}_n$ are the normalized PT moments defined in Eq.~\eqref{eq:tilde_p_n_def}. Indeed $f_3$ and $f_5$ are negative if and only if the inequalities~\eqref{eq:ppt_conditions_statement} are violated.
	
	Crucially, $f_3(\rho_{AB})$ and $f_5(\rho_{AB})$ can be estimated efficiently if $\rho_{AB}$ satisfies the AFCs. Indeed, they only depend on $\mathcal{P}_n$ and the normalized PT moments $\tilde{p}_n$, both of which can be estimated using the protocol introduced in Sec.~\ref{sec:finite-range_states}.\footnote{Note that, while in Sec.~\ref{sec:purity_PT_estimation_AFCs} we only considered the purity $\mathcal{P}_2$, similar AFCs can be defined for $\mathcal{P}_n$ with $n>2$, and hence similar estimation protocols work to estimate higher moments. Note also that, for the case of FDQCs, the proof of exact factorization conditions for $\mathcal{P}_2$ can be trivially extended to any $\mathcal{P}_n$ with $n>2$, cf. Appendix~\ref{sec:appendix_factorization}.} We note that such estimation protocols allow us to reconstruct $f_n(\rho_{AB})$ up to any arbitrary small additive error. Namely, for any arbitrary positive constant $\delta$, one can construct an estimator $f^{(e)}_n$, with $|f^{(e)}_n-f_n|\leq \delta$ by polynomially many (in $L$) measurements and post-processing operations. This is easily seen recalling that $\tilde{p}_n$ ($\mathcal{P}_n$) can be estimated up to any additive (relative) error, and using that
	\begin{align}\label{eq:intermediate_inequalities}
		|\tilde{p}_n[I]|\leq 1\,,\qquad \frac{(\mathcal{P}_n[I])^{2}}{\mathcal{P}_{n-1}[I]\mathcal{P}_{n+1}[I]}\leq 1\,,
	\end{align}
	where the first inequality from $|\tilde{p}_n[I]|\leq \tilde p_2[I]=\mathcal{P}_2[I]$, while the second follows from H\"older's inequality.  
	
	At this point, it is important to ask whether the quantities $f_n$ are practically useful in the many-body context considered in this work. For example, one could be worried that the conditions~\eqref{eq:ppt_conditions_statement} are never violated for large $L$ (even if the logarithmic negativity is non-zero), or that $f_n$ is exponentially small in $L$ for highly mixed states, thus requiring an exponential accuracy for its estimation. To address this question, consider a family of states $\rho^{(L)}_{AB}$, defined for increasing system sizes $L$ and satisfying the AFCs. Suppose that we extract $f_n(\rho^{(L)}_{AB})$ using our estimation protocol for a fixed approximation error $\delta_n$. Then, the $p_n$-PPT conditions~\eqref{eq:ppt_conditions_statement} allow us to successfully detect entanglement if
	\begin{subequations}\label{eq:negativity_conditions}
		\begin{align}
			f_{3}(\rho^{(L)}_{AB})&\leq -C_3\,,\label{eq:negativity_conditions_1}\\
			f_{5}(\rho^{(L)}_{AB})&\leq -C_5\,,\label{eq:negativity_conditions_2}
		\end{align}
	\end{subequations}
	where $C_3, C_5>0$ are constants (independent of $L$) with $C_n\geq 2\delta_n$. Indeed, in these cases the statistical inaccuracy is smaller (with high probability) than the amount by which $f_n(\rho^{(L)}_{AB})$ is negative, allowing us to certify entanglement. 
	
	In the next subsection, we study a few natural examples of states satisfying the AFCs showing that~\eqref{eq:negativity_conditions} hold either for all values of $L$ or up to very large system sizes. Therefore, we exhibit concrete examples where $f_n$ are provably useful for entanglement detection.
	
	\subsection{Mixed-state entanglement detection in FDQC and Gibbs states}
	\label{sec:examples}
	
	We start by studying the relations~\eqref{eq:negativity_conditions} in the simplest case of FDQC states~\eqref{eq:FDQCs}. We consider a family of states where the single-qubit density matrices in Eq.~\eqref{eq:FDQCs} are parametrized as
	\begin{equation}
		\sigma_i(\gamma)=\gamma\ket{a_i}\bra{a_i}+(1-\gamma) \ket{b_i}\bra{b_i}\,.
	\end{equation}
	Here $\ket{a_i}$, $\ket{b_i}$ are arbitrary orthonormal states, while $0\leq \gamma\leq 1/2$ plays the role of a depolarizing parameter, controlling both purity and bipartite entanglement [$\gamma$ is not to be confused with the symbol previously used for confidence levels]. We ask for which values of $\gamma$ the conditions~\eqref{eq:ppt_conditions_statement} are violated. 
	
	This problem is analyzed in detail in Appendix~\ref{sec:ppt_condition_for_short_range}. We prove that, for generic choices of the unitary gates, there exist $\gamma_3, \gamma_5>0$ such that Eqs.~\eqref{eq:negativity_conditions_1} and ~\eqref{eq:negativity_conditions_2}  are satisfied, respectively, for all $\gamma\leq \gamma_3/L$ and $\gamma\leq \gamma_5/L^{1/3}$. Now, choosing in particular $\gamma_L= \gamma_5/L^{1/3}$, we have
	\begin{align}
		{\rm Tr}[\rho^{(L)}(\gamma_L)^2]&=[\gamma_L^2+(1-\gamma_L)^2]^L\nonumber\\
		& \leq \exp\left[-2(\gamma_L -\gamma_L^2)L\right]\nonumber\\
		&\leq  \exp\left[-\gamma_L L\right]=\exp\left[-\gamma_5 L^{2/3}\right]\,,
	\end{align}
	namely
	\begin{equation}
		S_2[\rho^{(L)}(\gamma_L)]\geq \gamma_5 L^{2/3}\,.
	\end{equation}
	Therefore, $\rho^{(L)}(\gamma_L)$ provides an example where the $p_5$-PPT condition is violated by a finite amount $C_5$ and, at the same time, the global R\'enyi-$2$ entropy grows polynomially, albeit sub-linearly, in the system size.\footnote{It is interesting to note that the $p_3$-PPT condition only allows us to detect bipartite entanglement in states with $O(1)$ entropy. Therefore, the $p_5$-PPT condition improves this result by allowing the entropy to scale as $O(L^{2/3})$.  It is then natural to conjecture that higher-$n$ PPT conditions lead to a further improvement of the maximal scaling to $O(L^{\alpha})$, with $\alpha$ approaching $1$ in the large-$n$ limit.}
	
	\begin{figure*}
		\includegraphics[width=0.95\textwidth]{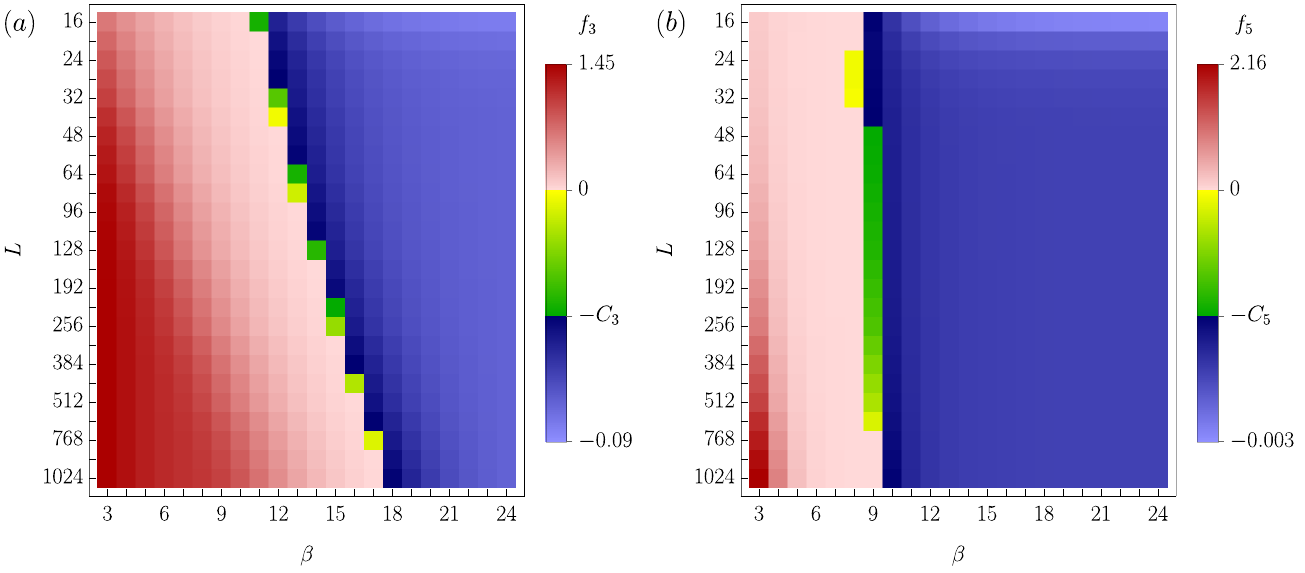}
		\caption{Density plots for the coefficients $f_3(\beta,L)$, $f_5(\beta,L)$, defined in Eq.~\eqref{eq:ppt_conditions_functions} as a function of $\beta$ and $L$. The data correspond to thermal states in the Ising model~\eqref{eq:ising_hamiltonian}, with $h_x=1.1$ and $h_z=-0.04$. Red (blue) regions indicate the values of $\beta$ and $L$ for which $f_3(\beta,L)$, $f_5(\beta,L)$ are larger (smaller) than some small values $-C_3=0.01$ and $-C_5=0.001$, respectively. }
		\label{fig:ppt_conditions}
	\end{figure*}
	
	As a final interesting example, let us consider once again finite-temperature states of local Hamiltonians. In this case, the purity decays exponentially in $L$ and we expect that the conditions~\eqref{eq:ppt_conditions_functions} fail to detect the presence of entanglement in the thermodynamic limit. In the following we give evidence that they can nevertheless be useful when considering systems of large but finite size.
	
	To support this claim, we study thermal states in the Ising model~\eqref{eq:ising_hamiltonian}, and numerically compute the coefficients~\eqref{eq:negativity_conditions} for different temperatures and system sizes. The computations are performed using tensor-network methods, following the same strategy explained in Sec.~\ref{sec:AFCs_numerics}. An example of our results is reported in Fig.~\ref{fig:ppt_conditions}. In these plots, we have fixed small but otherwise arbitrary positive numbers $C_3$ and $C_5$, and identified the values of $\beta$ and $L$ for which $	f_{3}(\beta, L)\leq -C_3$ and 	$f_{5}(\beta, L)\leq -C_5$.
	
	The data corresponding to the $p_3$-PPT condition are reported in the left plot, which clearly shows that the values of the temperature $T=\beta^{-1}$ for which the entanglement can be detected decrease with $L$, consistent with our previous analysis on FDQC states. On the other hand, the right plot shows that $f_5(\beta,L )<-C_5$ up to values of the temperature ($T\simeq 0.1)$ that do not significantly depend on $L$, at least up to very large system sizes ($L\simeq 1024$). We expect that this is a finite-size effect, namely that increasing $L$ further, the values of $T$ for which entanglement can be detected will decrease. Still, this example shows how the $p_n$-PPT conditions may be able to detect entanglement even for highly mixed states and up to very large system sizes, making the estimation protocol for PT moments practically very useful. 
	
	\section{Outlook}
	\label{sec:outlook}
	
	We have studied the problem of estimating global entropies and entanglement in many-qubit states, assuming the validity of certain AFCs which encode the fact that all the spatial correlations lengths in the system are finite. We have shown that the AFCs allow one to reconstruct entropies and PT moments from local information which can be extracted efficiently using standard RM schemes. Exploiting this fact,  we have devised a very simple strategy for entropy and entanglement estimation which is provably accurate, requiring polynomially-many local measurements and post-processing operations. We have discussed the generality of the AFCs, providing both analytic and numerical evidence for their validity in different classes of states, including MPDOs and thermal states of local Hamiltonians. 
	
	As we have argued, the protocols proposed in this work could be practically useful to detect bipartite mixed-state entanglement for large numbers of qubits in today’s quantum platforms. This is especially true when considering current NISQ devices from the point of view of quantum simulation, since the AFCs hold under assumptions that are very common in the context of many-body physics. In addition, our work also raises important questions and opens up a number of interesting directions. 
	
	First, our protocols assume that the state to be measured satisfies the AFCs. Therefore, an important question from the experimental perspective is whether it is possible to devise a complementary (and still efficient) strategy to \emph{certify} their validity, similar to what can be done in the case of MPS~\cite{cramer2010efficient} or entanglement-Hamiltonian tomography~\cite{joshi2023exploring}.
	
	As we have discussed in Sec.~\ref{sec:AFCs}, a simple but effective consistency check can be carried out by repeating our protocols multiple times, each time with an increased ansatz value for the correlation lengths. If the estimated purities or PT moments change beyond the expected precision over different repetitions, we obtain a ``red flag'' signalling the failure of the AFCs in the state to be measured. This process, however, does not allow one to rigorously rule out the presence of long-range correlations, as this would require running the estimation protocols using, as an input, correlation lengths which are proportional to the system size. We envision that a way to tackle the certification problem could be to identify a hierarchy of properties which are strictly stonger than the AFCs and for which rigorous certification protocols can be found more easily. At the moment, however, this remains an open problem.
	
	Next, while we have focused on $1D$ systems, a very natural direction is to extend our approach to higher spatial dimensions. In this case, additional technical complications arise, but we expect that efficient estimation strategies for the purity and PT moments exists under similar conditions encoding the absence of long-range correlations. We stress that this problem is particularly important, since tomographic methods for many-qubit states (based, for instance on tensor or neural networks) are much less developed in higher dimensions. 
	
	Finally, we mention that our approach could also be applied to probe other quantities which generally require exponentially many measurements. Straightforward examples include the participation entropies~\cite{luitz2014participation,stephan2009shannon,stephan2009renyi,alcaraz2013,stephan2014renyi} and the so-called stabilizer R\'enyi entropies~\cite{leone2022stabilizer}, but we expect that similar ideas could be implemented in other contexts. One example is that of cross-platform verification protocols~\cite{elben2020cross}, which involve probing the (possibly exponentially small) overlaps between different density matrices. Similarly, efficient estimation strategies for the so-called Loschmidt echo~\cite{quan2006decay} also appear to be within the reach of the techniques developed in this work. While these problems require dealing with additional technical subtleties compared to the analysis of the entropy and PT moments, we believe that the latter are not substantial and could be overcome. We leave the study of these applications to future research. 
	
	\section*{Acknowledgments}
	
	BV acknowledges funding from the Austrian Science Foundation (FWF, P 32597 N), and from the French National Research Agency via the JCJC project QRand (ANR-20-CE47-0005), and via the research programs Plan France 2030 EPIQ (ANR-22-PETQ-0007),  QUBITAF (ANR-22-PETQ-0004) and HQI (ANR-22-PNCQ-0002).
	ML and MS acknowledge support by the European Research Council (ERC) under the European Union’s Horizon 2020 research and innovation program (Grant Agreement No. 850899). MS acknowledges hospitality of KITP supported in part by the National Science Foundation under Grants No.~NSF PHY-1748958 and PHY-2309135. JIC is supported by the Hightech Agenda Bayern Plus through the Munich Quantum Valley and the German Federal Ministry of Education and Research (BMBF) through  EQUAHUMO (Grant No. 13N16066). PZ acknowledges funding from the European Union’s Horizon 2020
	research and innovation programme under grant agreement
	No 101113690 (PASQuanS2.1). \\
	
	\noindent \emph{Note Added}: After completion of this work, a work by Gondolf \emph{et. al} appeared on the arXiv~\cite{gondolf2024conditional}, showing that any translation-invariant, finite-range Hamiltonian in 1$D$ at any inverse
		temperature $\beta>0$ satisfies the purity AFCs, in complete agreement with our numerical results of Sec.~\ref{sec:AFCs_numerics} and further substantiating the generality of our approach.
	
	\appendix
	
	\section{Factorization condition for FDQC states}
	\label{sec:appendix_factorization}
	
	\subsection{The purity}
	\label{sec:proof_toy_model}
	In this Appendix we show that the FDQC states~\eqref{eq:FDQCs} satisfy the factorization condition~\eqref{eq:split_1_purity}. In fact, we prove the stronger condition
	\begin{align}\label{eq:strong_factorization}
		&{\rm Tr}_{B^{\otimes n}}\left[\overrightarrow{\Pi}_B \left(\rho_{ABC}^{\otimes n}\right)\right]\qquad&\nonumber\\
		&= \frac{{\rm Tr}_{B^{\otimes n}}\left[\overrightarrow{\Pi}_B \left(\rho_{AB}^{\otimes n}\right)\right]\otimes  {\rm Tr}_{B^{\otimes n}}\left[\overrightarrow{\Pi}_B \left(\rho_{BC}^{\otimes n}\right)\right]}{{\rm Tr}_B(\rho_B^n)}\,,
	\end{align}
	where we introduced the $n$-copy cyclic permutation operator $\overrightarrow{\Pi}_B\ket{i_1,\ldots i_n}=\ket{i_n,i_1,\ldots i_{n-1}}$, and we assumed $|B|\geq (2\ell-1)$, where $\ell$ is the circuit depth. We make use of a graphical derivation. We show the case $n=2$, $\ell=4$, and $|B|=8$ for concreteness, but it is clear that the proof generalizes for all values of $n$ and $\ell$. The reduced density matrix over the interval $\mathcal{I}=A\cup B\cup C$ reads 
	\begin{equation}\label{eq:picture_dm}
		\raisebox{-48pt}{\includegraphics[scale=0.31]{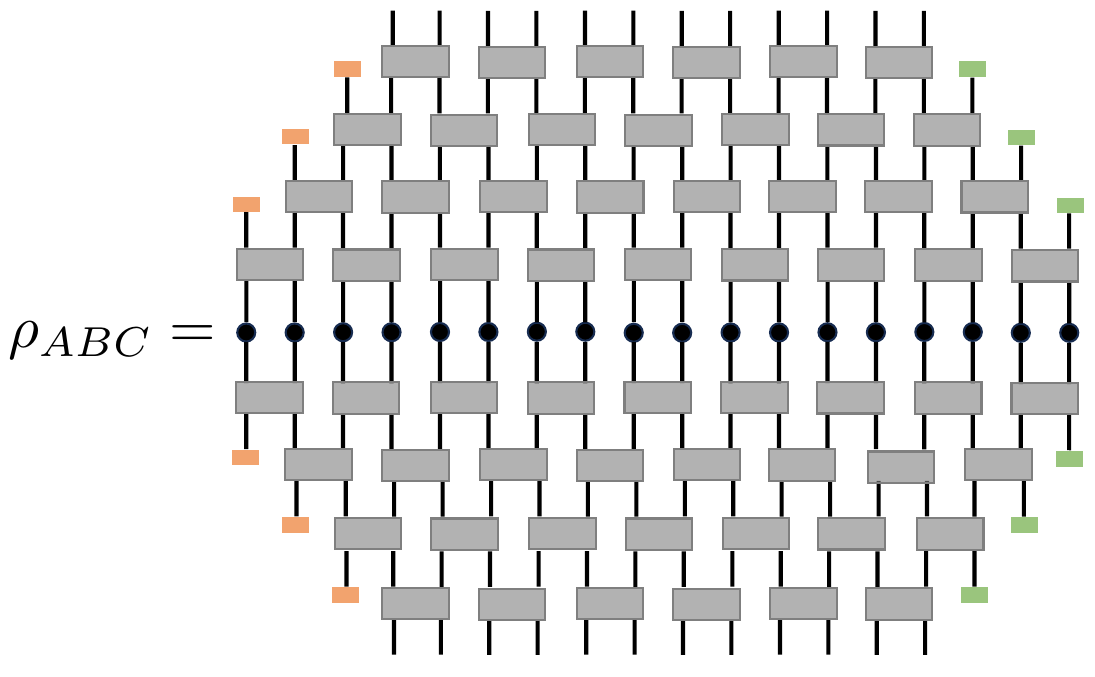}}
	\end{equation}
	Following the standard tensor-network notation, lower (upper) dangling legs correspond to the input (output) qubits. In addition, here and in the following upper and lower legs in the same column which are marked with a small rectangle of the same color are traced over. Finally, black dots correspond to the density matrices $\sigma_j$. 
	
	The left-hand side of Eq.~\eqref{eq:strong_factorization} can be represented as
	\begin{align}
		\raisebox{-35pt}{\includegraphics[scale=0.32]{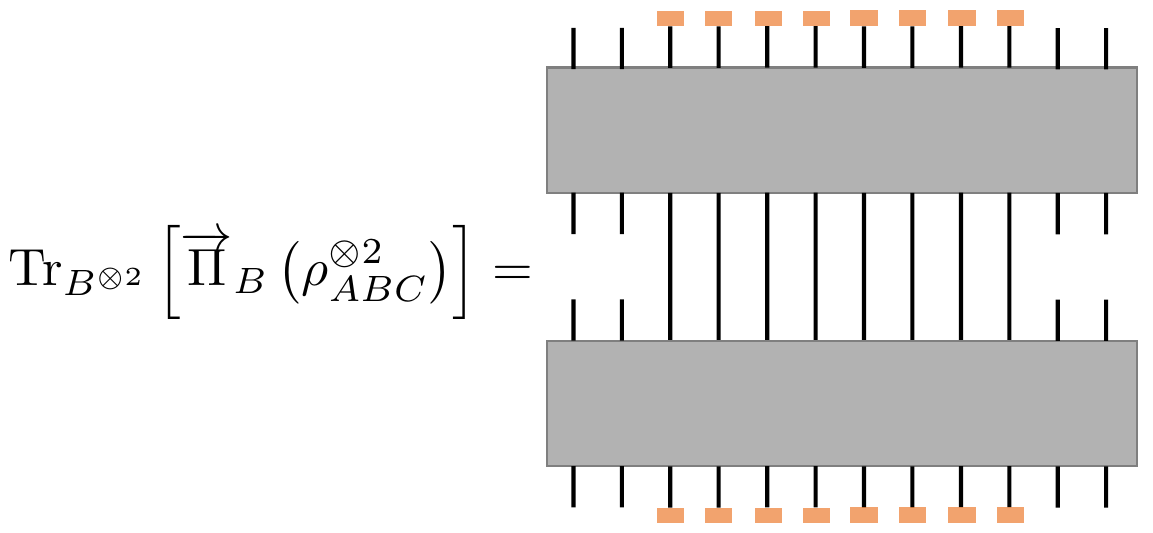}}
	\end{align}
	where dangling legs are not contracted, while the gray rectangle is a short-hand notation for the reduced density matrix~\eqref{eq:picture_dm}. Using unitarity of the gates, we get  ${\rm Tr}_{B^{\otimes n}}\left[\overrightarrow{\Pi}_B \left(\rho_{ABC}^{\otimes n}\right)\right]=T_1T_2T_3$, where
	\begin{equation*}
		\raisebox{-5pt}{\includegraphics[scale=0.24]{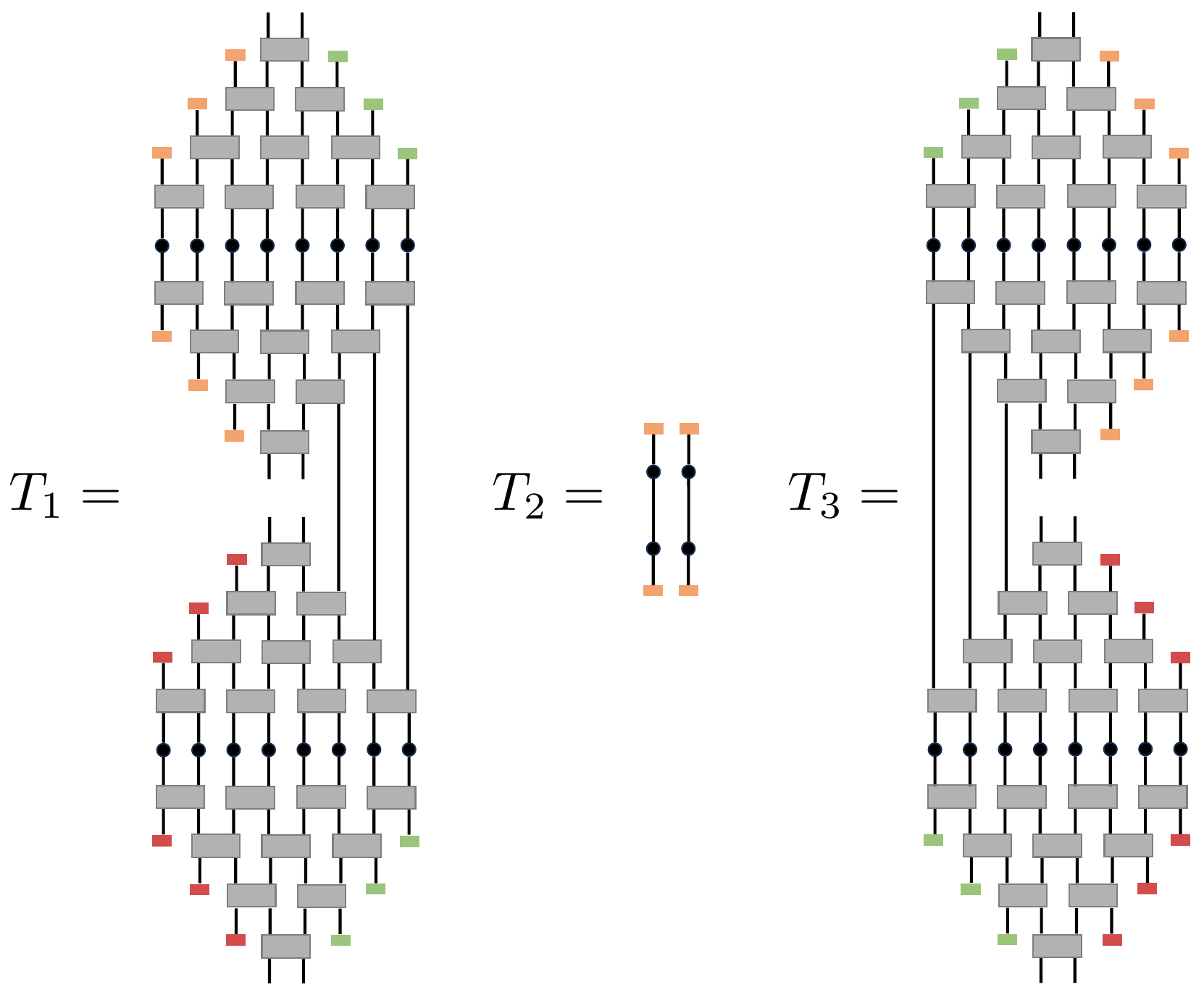}}
	\end{equation*}
	Note that $T_1$ and $T_3$ are operators, while $T_2$ is a number. Let us compare this graphical expression with the one in the RHS of Eq.~\eqref{eq:strong_factorization}. The first and second terms in the numerator read, respectively
	\begin{align*}
		\raisebox{-68pt}{\includegraphics[scale=0.24]{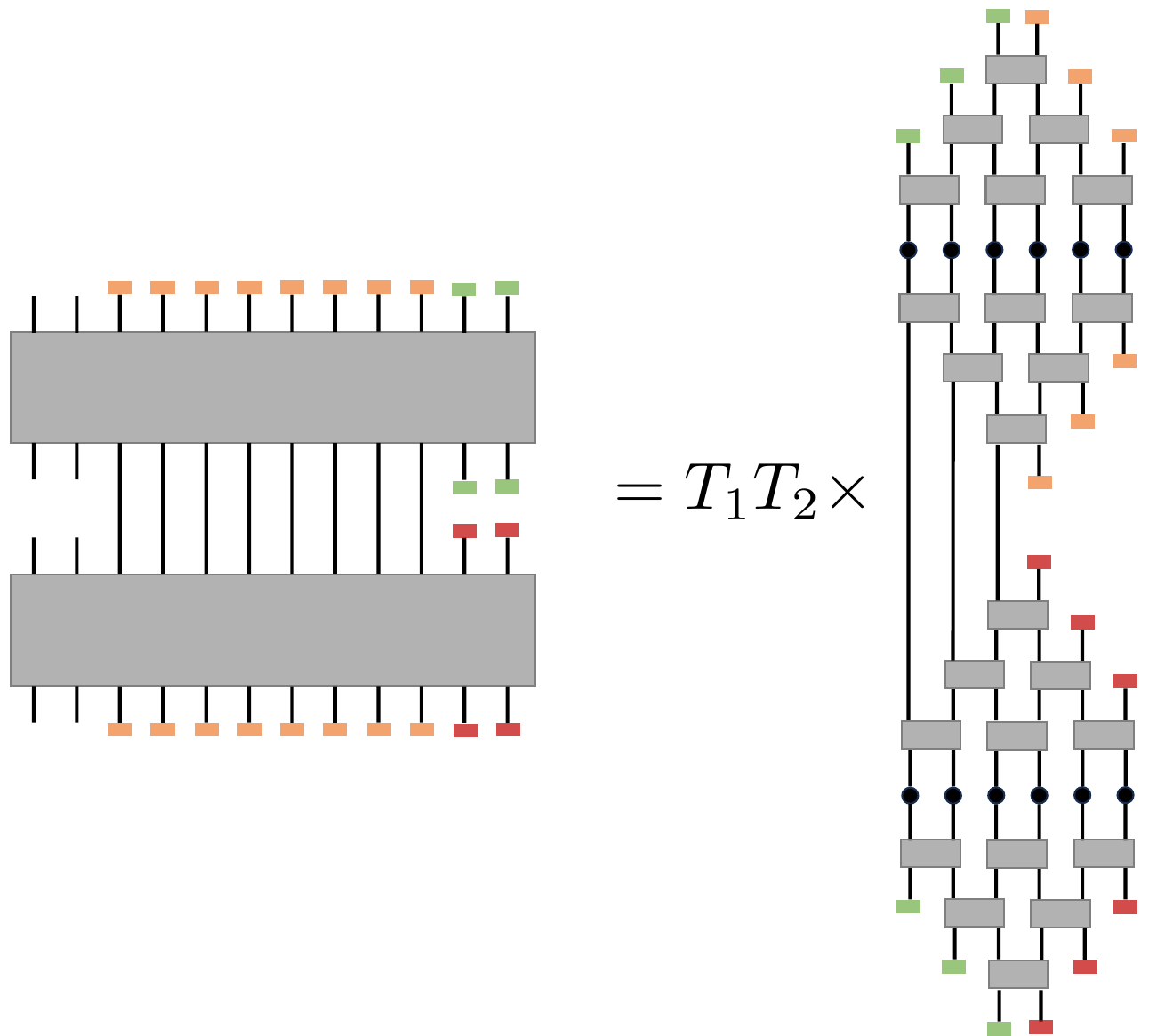}}\\
		\raisebox{-68pt}{\includegraphics[scale=0.24]{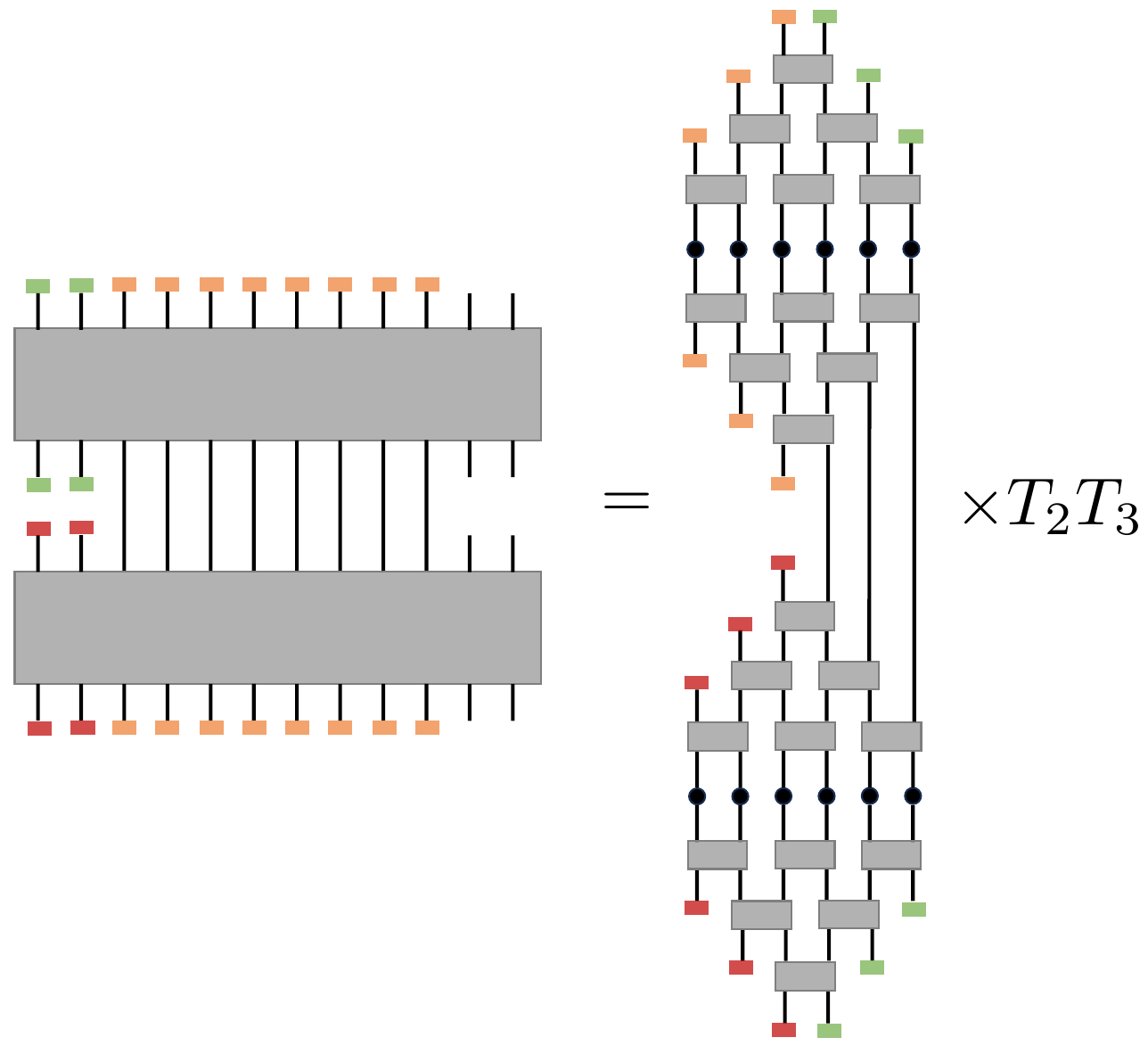}}\,.
	\end{align*}
	Similarly, the denominator in Eq.~\eqref{eq:strong_factorization} reads
	\begin{equation*}		\raisebox{-49pt}{\includegraphics[scale=0.24]{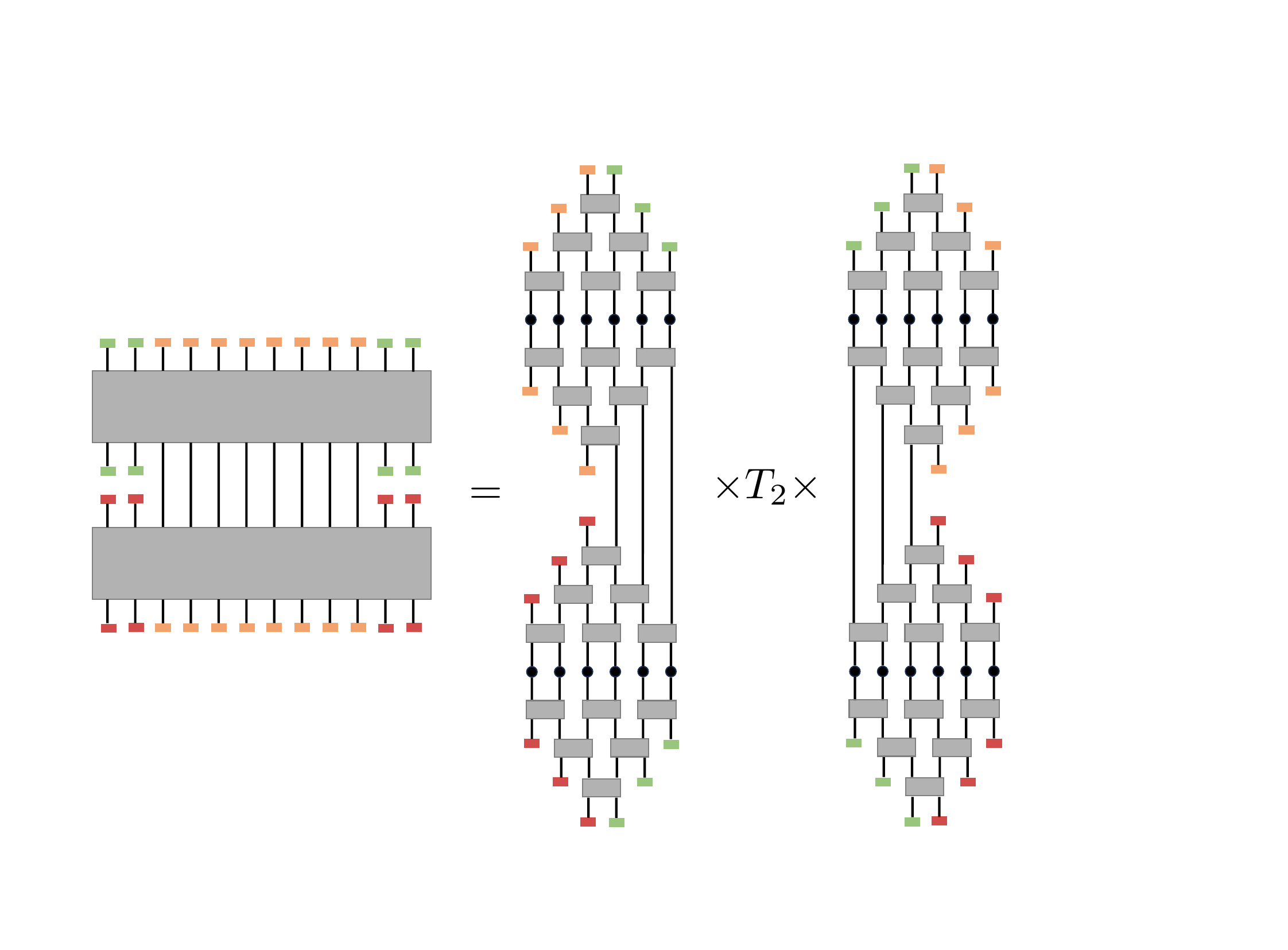}}
	\end{equation*}
	Putting everything together, and simplifying the common factors in the numerator and denominator, we arrive at Eq.~\eqref{eq:strong_factorization}.
	Finally, using the fact that 
	$\Tr(\rho_X^n)=\Tr(\overrightarrow{\Pi}_X\rho_X^{\otimes n})$, and the relation
	$\overrightarrow{\Pi}_{ABC}=\overrightarrow{\Pi}_{A}\otimes \overrightarrow{\Pi}_{B} \otimes \overrightarrow{\Pi}_{C}$, we arrive at
	\begin{equation}
		{\rm Tr}(\rho_{ABC}^n)=\frac{{\rm Tr}_{AB}(\rho_{AB}^n){\rm Tr}_{BC}(\rho_{BC}^n)}{{\rm Tr}_{B}(\rho_{B}^n)}\,.
	\end{equation}
	For $n=2$, this proves Eq.~\eqref{eq:split_1_purity}. 
	
	Next, we prove Eq.~\eqref{eq:final_product_formula_purity}. Let us consider a partition of the whole system $\mathcal{S}=A\cup B\cup C$, where $|B|=|C|=k\geq 2\ell-1$. We first make use of Eq.~\eqref{eq:split_1_purity}. Setting now $A^{(0)}:=A$, $B^{(0)}:=B$, $C^{(0)}:=C$ we can iterate this operation, each time splitting the interval $A^{(j)}\cup B^{(j)}$ into three adjacent regions $A^{(j+1)}$, $B^{(j+1)}$, and $C^{(j+1)}$ with $|B^{(j+1)}|=|C^{(j+1)}|=k$. We continue this procedure until $|A^{R}|=k$, which gives us Eq.~\eqref{eq:final_product_formula_purity}.\\
	
	\subsection{The PT moments}
	\label{sec:factorization_pt_moments}

	The goal of this Appendix is to prove Eq.~\eqref{eq:final_product_formula_PT}. 
	
	We start by recalling the expression for the PT moments~\cite{carteret2005noiseless}
	\begin{align}
		{\rm Tr}\left[\left(\rho_{AB}^{T_A}\right)^n\right]&
		=
		{\rm Tr}
		\left[
		\overleftarrow{\Pi}_A
		\overrightarrow{\Pi}_B
		\left(\rho_{AB}^{\otimes n}
		\right)
		\right],
	\end{align}
	where we introduced the $n$-copy forward (backward) cyclic permutation operator in the $n$-replica space, defined by
	\begin{align}
		\overrightarrow{\Pi}_A\ket{i_1,\ldots i_n}&=\ket{i_n,i_1,\ldots i_{n-1}}\,,\nonumber\\
		\overleftarrow{\Pi}_B\ket{i_1,\ldots i_n}&=\ket{i_2,i_3,\ldots i_{1}}\,.
	\end{align}
	Using this notation, it is easy to see that Eq.~\eqref{eq:strong_factorization} implies
	\begin{align}\label{eq:transposed_strong_factorization}
		&{\rm Tr}_{B^{\otimes n}}\left[\overleftarrow{\Pi}_B \left(\rho_{ABC}^{\otimes n}\right)\right]=\nonumber\\
		&\frac{{\rm Tr}_{B^{\otimes n}}\left[\overleftarrow{\Pi}_B \left(\rho_{AB}^{\otimes n}\right)\right]\otimes  {\rm Tr}_{B^{\otimes n}}\left[\overleftarrow{\Pi}_B \left(\rho_{BC}^{\otimes n}\right)\right]}{{\rm Tr}_B(\rho_B^n)}\,,
	\end{align}
	Eq.~\eqref{eq:transposed_strong_factorization} can be proved by taking the partial-transpose with respect to $A^{\otimes n}$ and $C^{\otimes n}$ in both sides of~\eqref{eq:strong_factorization}, followed by complex conjugation. Next, let us take a bipartition of the system into $A$ and $B$, each divided into two intervals $A=A_1\cup A_2$ and $B=B_1\cup B_2$ with $|A_2|=|B_1|=k$, cf. Fig.~\ref{fig:partition}$(b)$. Making use of Eq.~\eqref{eq:strong_factorization}, we have
	
	\begin{align}
		{\rm Tr}\left[\left(\rho_{AB}^{T_A}\right)^n\right]&={\rm Tr}_{A^{\otimes n}}{\rm Tr}_{B_2^{\otimes n}}\overleftarrow{\Pi}_A\overrightarrow{\Pi}_{B_2}{\rm Tr}_{B_1^{\otimes n}}\overrightarrow{\Pi}_{B_1}\rho^{\otimes n}\nonumber\\
		&=\frac{{\rm Tr}_{A^{\otimes n}\otimes B_1^{\otimes n}}(\overleftarrow{\Pi}_A \overrightarrow{\Pi}_{B_1}\rho_{AB_1}^{\otimes n})}{{\rm Tr}_{B_1}(\rho_{B_1}^n)}\nonumber\\
		&\times \frac{{\rm Tr}_{B_1^{\otimes n} \otimes B_2^{\otimes n}}(\vec{\Pi}_{B_1}\vec{\Pi}_{B_2}\rho_{B_1B_2}^{\otimes n})}{{\rm Tr}_{B_1}(\rho_{B_1}^n)}\,.
	\end{align}
	Using 
	\begin{align}
		&{\rm Tr}_{A^{\otimes n}\otimes B_1^{\otimes n}}(\overleftarrow{\Pi}_A \overrightarrow{\Pi}_{B_1}\rho_{AB_1}^{\otimes n})\nonumber\\
		&={\rm Tr}_{A_1^{\otimes n}}{\rm Tr}_{B_1^{\otimes n}}\overleftarrow{\Pi}_{A_1}\overrightarrow{\Pi}_{B_1}{\rm Tr}_{A_2^{\otimes n}}\overleftarrow{\Pi}_{A_2}\rho_{A_1A_2B_1}^{\otimes n}\nonumber\\
		&={\rm Tr}_{A_1^{\otimes n}\otimes A_2^{\otimes n}}(\overleftarrow{\Pi}_{A_1} \overleftarrow{\Pi}_{A_2}\rho_{A_1A_2}^{\otimes n})\nonumber\\
		&{\rm Tr}_{A_2^{\otimes n} \otimes B_1^{\otimes n}}(\overleftarrow{\Pi}_{A_2}\overrightarrow{\Pi}_{B_1}\rho_{A_2B_1}^{\otimes n})
		\left[{\rm Tr}_{A_2}(\rho_{A_2}^n)\right]^{-1}\,,
	\end{align}
	we finally arrive at Eq.~\eqref{eq:final_product_formula_PT}.
	
	\section{Statistical-error analysis in FDQC states}
	
	\subsection{Statistical-error analysis for the purity }
	\label{sec:relatve_error_purity}
	In this Appendix we prove Eq.~\eqref{eq:main_result}. We consider the protocol explained in Sec.~\ref{sec:purity_factorization_FDQC} and denote by $\mathcal{P}^{(e)}_2[I]$ the estimates for $\mathcal{P}_2[I]={\rm Tr}(\rho_I^2)$ obtained from the classical-shadow approach. These local purities are estimated with a non-zero relative error, \emph{i.e.} $\mathcal{P}_2^{(e)}[I]/\mathcal{P}_2[I]=(1+\varepsilon^{I})$. We define 
	\begin{equation}\label{eq:def_varepsilon_appendix}
		\varepsilon={\rm max}\left\{|\varepsilon^{I}|: I\in \{I_{j}\}_j\cup\{I_{j}\cup I_{j+1}\}_{j}\right\}\,.
	\end{equation}
	We will take $r^{(e)}_2$, defined in Eq.~\eqref{eq:r2_e}, as the experimental estimate for the global purity $\mathcal{P}_2={\rm Tr}(\rho^2)$, choosing the intervals $I_j$ with $|I_j|=k\geq 2\ell-1$, where $\ell$ is the circuit depth, so that Eq.~\eqref{eq:final_product_formula_purity} holds.
	
	We start our proof from a few preliminary lemmas.
	\begin{lem}\label{lem:inequality_eps}
		Suppose $\varepsilon\leq 1/2$. Then
		\begin{equation}\label{eq:inequality_eps_appendix}
			\left |\frac{r_2^{(e)}}{\mathcal{P}_2}-1\right |\leq e^{(4L/k) \varepsilon}-1\,.
		\end{equation}
	\end{lem}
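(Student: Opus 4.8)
The plan is to pass to logarithms and reduce everything to a count of the number of factors appearing in $r_2^{(e)}$. Since the state is a FDQC state with $|I_j|=k\geq 2\ell-1$, the exact factorization formula~\eqref{eq:final_product_formula_purity} holds, so $\mathcal{P}_2=r_2$ with $r_2$ the product in Eq.~\eqref{eq:def_r2}. Writing $\mathcal{P}_2^{(e)}[I]=(1+\varepsilon^I)\mathcal{P}_2[I]$ and dividing $r_2^{(e)}$ by $\mathcal{P}_2=r_2$, the exact local purities cancel factor by factor, leaving
\begin{equation}
	\frac{r_2^{(e)}}{\mathcal{P}_2}=\frac{\prod_{j=1}^{R-1}\left(1+\varepsilon^{I_j\cup I_{j+1}}\right)}{\prod_{j=2}^{R-1}\left(1+\varepsilon^{I_j}\right)}\,.
\end{equation}

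Next I would take the logarithm of this ratio, call it $\Delta$, so that $\Delta$ is a signed sum of $2R-3$ terms of the form $\pm\log(1+\varepsilon^I)$ ($R-1$ from the numerator, $R-2$ from the denominator). The elementary inequality $|\log(1+x)|\leq 2|x|$, valid for all $|x|\leq 1/2$ (which applies here precisely because $\varepsilon\leq 1/2$), bounds each term by $2\varepsilon$. Combining this with $R=L/k$ and the crude count $2R-3<2R$ gives
\begin{equation}
	|\Delta|\leq 2\varepsilon(2R-3)<4R\varepsilon=\frac{4L}{k}\varepsilon\,.
\end{equation}

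Finally I would exponentiate. Since $r_2^{(e)}/\mathcal{P}_2=e^{\Delta}$, and for any real $\Delta$ with $|\Delta|\leq D$ one has $|e^{\Delta}-1|\leq e^{D}-1$ (the positive branch is monotone, while $1-e^{-D}\leq e^{D}-1$ follows from $e^{D}+e^{-D}\geq 2$), the bound on $|\Delta|$ with $D=(4L/k)\varepsilon$ immediately yields Eq.~\eqref{eq:inequality_eps_appendix}.

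The argument is essentially bookkeeping, so I do not anticipate a genuine obstacle. The only points requiring care are the two elementary inequalities---$|\log(1+x)|\leq 2|x|$ on $[-1/2,1/2]$, and $|e^{\Delta}-1|\leq e^{D}-1$ whenever $|\Delta|\leq D$---together with the correct counting of factors. The factor $4$ in the exponent arises from combining the constant $2$ in the logarithm bound with the $2R$ factor count, and the hypothesis $\varepsilon\leq 1/2$ is needed precisely to make the logarithm bound valid.
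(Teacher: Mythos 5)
Your proof is correct and takes essentially the same route as the paper's: both reduce $r_2^{(e)}/\mathcal{P}_2$ via exact factorization to the ratio $\prod_{j=1}^{R-1}(1+\varepsilon^{I_j\cup I_{j+1}})/\prod_{j=2}^{R-1}(1+\varepsilon^{I_j})$, trap it between $e^{\pm(4L/k)\varepsilon}$ (your term-by-term bound $|\log(1+x)|\leq 2|x|$ on $[-1/2,1/2]$ is equivalent to the paper's $\tfrac{1+\varepsilon}{1-\varepsilon}\leq 1+4\varepsilon\leq e^{4\varepsilon}$), and symmetrize with $1-e^{-a}\leq e^{a}-1$. Passing to logarithms is only a cosmetic repackaging of the paper's multiplicative bounding, so no substantive comparison is needed.
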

	\begin{proof}
		Setting $\mathcal{P}_2^{(e)}[I]=\mathcal{P}_2[I](1+\varepsilon^{I}$),recalling $R=L/k$, where $k=|I_j|$ and using~\eqref{eq:final_product_formula_purity}, we have
		\begin{equation}
			\frac{r_2^{(e)}}{\mathcal{P}_2}=\frac{\prod_{j=1}^{R-1} (1+\varepsilon^{I_j\cup I_{j+1}})}{\prod_{j=2}^{R-1}(1+\varepsilon^{I_j})}\,,
		\end{equation}
		as so, using $|\varepsilon^{I}|\leq \varepsilon$,
		\begin{align}
			\frac{r_2^{(e)}}{\mathcal{P}_2}&\leq \frac{(1+\varepsilon )^{R-1}}{(1-\varepsilon)^{R-2}}\leq \left[ \frac{1+\varepsilon}{1-\varepsilon}\right]^{R}\,,\nonumber\\
			\frac{r_2^{(e)}}{\mathcal{P}_2}&\geq \frac{(1-\varepsilon )^{R-1}}{(1+\varepsilon)^{R-2}}\geq \left[ \frac{1-\varepsilon}{1+\varepsilon}\right]^{R}\,.
		\end{align}
		Since $\varepsilon\leq 1/2$, and using $(1+x)\leq e^{x}$, we have
		\begin{equation}
			\frac{1+\varepsilon}{1-\varepsilon}=1+\frac{2\varepsilon}{1-\varepsilon}\leq 1+4\varepsilon\leq e^{4\varepsilon}\,.
		\end{equation}
		Therefore, 
		\begin{equation}
			e^{-(4 L/k) \varepsilon}\leq \frac{r_2^{(e)}}{\mathcal{P}_2}\leq e^{(4 L/k) \varepsilon}\,.
		\end{equation}
		Finally, using 
		\begin{equation}
			(1-e^{-a})\leq e^{a}-1\,,
		\end{equation}
		we obtain~\eqref{eq:inequality_eps_appendix}
	\end{proof}
	\begin{lem}\label{lem:single_bound}
		Let $M\geq 2^{4|I|}$. Then
		\begin{equation}\label{eq:single_bound_1}
			{\rm Pr}\left(|\varepsilon^{I}|\geq \alpha\right)\leq \frac{2^{2|I|+3}}{\alpha^2 M}\,.
		\end{equation}	
	\end{lem}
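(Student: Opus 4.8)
The plan is to exploit that $\mathcal{P}_2^{(e)}[I]$ is a faithful (unbiased) estimator of $\mathcal{P}_2[I]$, recast the relative-error event as a deviation from the mean, and then apply Chebyshev's inequality together with the variance bound~\eqref{eq:variance_purity} that has already been recorded. Concretely, I would write $\mathcal{P}_2^{(e)}[I]=\mathcal{P}_2[I](1+\varepsilon^{I})$, so that the event $|\varepsilon^{I}|\geq \alpha$ is identical to $|\mathcal{P}_2^{(e)}[I]-\mathcal{P}_2[I]|\geq \alpha\,\mathcal{P}_2[I]$. Since $\mathbb{E}[\mathcal{P}_2^{(e)}[I]]=\mathcal{P}_2[I]$, Chebyshev's inequality gives
\begin{equation}
{\rm Pr}\left(|\varepsilon^{I}|\geq \alpha\right)\leq \frac{{\rm Var}\left[\mathcal{P}_2^{(e)}[I]\right]}{\alpha^2\,\mathcal{P}_2[I]^2}\,.
\end{equation}

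Next I would insert the variance bound~\eqref{eq:variance_purity} and use the elementary lower bound on the purity, $\mathcal{P}_2[I]={\rm Tr}(\rho_I^2)\geq 1/\dim\mathcal{H}_I=2^{-|I|}$. Writing $d=2^{|I|}$, the first variance term, after bounding $1/\mathcal{P}_2[I]\leq d$, contributes at most $4d^2/(\alpha^2 M)$; the second term, after bounding $1/\mathcal{P}_2[I]^2\leq d^2$, contributes at most $2d^6/(\alpha^2(M-1)^2)$. The target is then to show that the sum of these two contributions is at most $2^{2|I|+3}/(\alpha^2 M)=8d^2/(\alpha^2 M)$.

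The only quantitative step, and the mild obstacle, is to show that under the hypothesis $M\geq 2^{4|I|}=d^4$ the second term is dominated by the first. This reduces to verifying $d^4 M/(M-1)^2\leq 2$: since $d^4\leq M$ one has $d^4 M/(M-1)^2\leq (M/(M-1))^2$, and because $d\geq 2$ forces $M\geq 16$, this is at most $(16/15)^2<2$, whence $2d^6/(M-1)^2\leq 4d^2/M$. Adding the two contributions yields $8d^2/(\alpha^2 M)$, which is exactly~\eqref{eq:single_bound_1}. I would present these manipulations explicitly, but they are routine once the lower bound on the purity and the assumption $M\geq 2^{4|I|}$ are in place.
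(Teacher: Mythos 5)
Your proposal is correct and follows essentially the same route as the paper: Chebyshev's inequality applied to the unbiased estimator $\mathcal{P}_2^{(e)}[I]$, the variance bound~\eqref{eq:variance_purity}, the purity lower bound $\mathcal{P}_2[I]\geq 2^{-|I|}$, and the hypothesis $M\geq 2^{4|I|}$ to show the second variance term is dominated by the first. The only difference is cosmetic — the paper first converts $2/(M-1)^2$ into $4/M^2$ and absorbs the second term at the variance level, while you compare the two contributions after dividing by $\mathcal{P}_2[I]^2$, using $M\geq 16$ — but the quantitative content is identical.
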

	\begin{proof}
		We start from the bound~\eqref{eq:variance_purity}. Since $M\geq 2^{4|I|}$, we have in particular $M\geq 4$ and Eq.~\eqref{eq:variance_purity} implies
		\begin{align}
			{\rm Var}\left[\mathcal{P}_2^{(e)}[I]\right]&\leq 4\left(\frac{2^{|I|}\mathcal{P}_2[I]}{M}\right)+4\left(\frac{2^{2|I|}}{M}\right)^2\,.
		\end{align}
		Since by hypothesis $M\geq 2^{4|I|}$, while $\mathcal{P}_2[I]\geq 2^{-|I|}$, we have
		\begin{equation}
			4\left(\frac{2^{2|I|}}{M}\right)^2\leq 4\left(\frac{2^{|I|}\mathcal{P}_2[I]}{M}\right),
		\end{equation}
		and so 
		\begin{equation}\label{eq:relative_variance}
			{\rm Var}\left[\mathcal{P}_2^{(e)}[I]\right]\leq  8\frac{2^{|I|}\mathcal{P}_2[I]}{M} \,.
		\end{equation} 
		Eq.~\eqref{eq:single_bound_1} then follows using Chebyshev's inequality and $\mathcal{P}_2[I]\geq 2^{-|I|}$. 	
	\end{proof}
	
	We are ready to state and prove the main result of this section.
	\begin{thm}\label{th:main_theorem_1}
		Let $0\leq \delta\leq 1$ and take
		\begin{equation}
			M\geq {\rm max}\left\{2^{8k}, L^2\frac{2^{4k+10}}{k^2\delta^2}\right\}\,,
		\end{equation}
		where $k=|I_j|\geq 2$ (the case $k=1$ is trivial). Then	
		\begin{equation}\label{eq:main_result_appendix}
			{\rm Pr}\left[\left|(r_2^{(e)}/\mathcal{P}_2)-1\right|\geq \delta\right]\leq  \frac{2^{4k+11}L^3}{\delta^2k^3 M}\,.
		\end{equation}
	\end{thm}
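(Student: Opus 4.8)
The plan is to reduce the statement to a single tail bound on the quantity $\varepsilon={\rm max}\{|\varepsilon^I|\}$ controlling the local purity estimates, and then to union-bound that tail via Lemma~\ref{lem:single_bound}. The first step is to convert the target relative error on the \emph{global} purity into a threshold on $\varepsilon$. By Lemma~\ref{lem:inequality_eps}, on the event $\{\varepsilon\le 1/2\}$ one has $|r_2^{(e)}/\mathcal{P}_2-1|\le e^{(4L/k)\varepsilon}-1$. Choosing the threshold $\alpha=\tfrac{k}{4L}\log(1+\delta)$, so that $e^{(4L/k)\alpha}-1=\delta$, any realization with $\varepsilon<\alpha$ then satisfies $|r_2^{(e)}/\mathcal{P}_2-1|<\delta$. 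Since $k\le L$ and $\delta\le 1$ force $\alpha\le\tfrac14<\tfrac12$, Lemma~\ref{lem:inequality_eps} does apply on $\{\varepsilon<\alpha\}$, yielding the containment $\{|r_2^{(e)}/\mathcal{P}_2-1|\ge\delta\}\subseteq\{\varepsilon\ge\alpha\}$. This reduces the theorem to bounding ${\rm Pr}[\varepsilon\ge\alpha]$.

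Next I would expand $\{\varepsilon\ge\alpha\}=\bigcup_I\{|\varepsilon^I|\ge\alpha\}$, the union running over the $R-1$ double intervals $I_j\cup I_{j+1}$ (each of size $2k$) appearing in the numerator of~\eqref{eq:final_product_formula_purity} and the $R-2$ single intervals $I_j$ (each of size $k$) in the denominator, with $R=L/k$. A union bound combined with Lemma~\ref{lem:single_bound}---whose hypothesis $M\ge 2^{4|I|}$ is met for \emph{every} interval precisely because of the first assumption $M\ge 2^{8k}$, the worst case being $|I|=2k$---gives
\begin{equation}
{\rm Pr}[\varepsilon\ge\alpha]\le (R-1)\frac{2^{4k+3}}{\alpha^2 M}+(R-2)\frac{2^{2k+3}}{\alpha^2 M}\le (2R-3)\frac{2^{4k+3}}{\alpha^2 M}\le \frac{L}{k}\,\frac{2^{4k+4}}{\alpha^2 M}\,,
\end{equation}
where the dominant contribution comes from the size-$2k$ intervals supplying the factor $2^{4k+3}$. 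Finally, using $\log(1+\delta)\ge\delta/2$ for $0\le\delta\le 1$ gives $\alpha^2\ge k^2\delta^2/(64L^2)$, hence $1/\alpha^2\le 64L^2/(k^2\delta^2)$; inserting this produces ${\rm Pr}[\varepsilon\ge\alpha]\le 2^{4k+10}L^3/(\delta^2 k^3 M)\le 2^{4k+11}L^3/(\delta^2 k^3 M)$, the desired bound~\eqref{eq:main_result_appendix}. Here the role of the hypotheses on $M$ is transparent: $M\ge 2^{8k}$ is exactly what licenses Lemma~\ref{lem:single_bound} on the largest (2$k$-qubit) intervals, and the full assumption $M\ge{\rm max}\{2^{8k},\,L^2 2^{4k+10}/(k^2\delta^2)\}$ is the sufficient condition quoted in the statement.

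The conceptual heart of the estimate---and the step requiring the most care---is the \emph{multiplicative amplification} of local errors in the factorization formula~\eqref{eq:final_product_formula_purity}. Because $r_2^{(e)}$ is assembled from $\sim L/k$ factors, a per-factor relative error $\varepsilon$ compounds into a global error that grows like $e^{4L\varepsilon/k}$, as made precise by Lemma~\ref{lem:inequality_eps}. Keeping the global error below $\delta$ therefore forces the per-interval threshold to shrink as $\alpha\sim k\delta/L$, and it is exactly the resulting $1/\alpha^2\sim L^2$ factor coming from Chebyshev's inequality inside Lemma~\ref{lem:single_bound}, multiplied by the $O(L/k)$ intervals, that generates the overall polynomial-in-$L$ overhead and the final $L^3$ scaling. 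The remaining labor is purely bookkeeping: tracking the two distinct interval sizes $k$ and $2k$ through the union bound, and carefully collecting the powers of $2$ and of $\delta$ so as to land on the stated constants.
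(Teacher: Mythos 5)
Your proposal is correct, and the arithmetic checks out: the union bound gives $(R-1)\,2^{4k+3}/(\alpha^2 M)+(R-2)\,2^{2k+3}/(\alpha^2 M)\leq (L/k)\,2^{4k+4}/(\alpha^2 M)$, and with $\alpha\geq k\delta/(8L)$ you land on $2^{4k+10}L^3/(\delta^2 k^3 M)$, a factor of $2$ \emph{better} than the stated bound. However, your route through the tail bound is genuinely different from the paper's. Where you apply a plain union bound over the $2R-3$ intervals, the paper exploits the statistical independence of the per-interval errors (guaranteed by the dedicated measurement batches) to write ${\rm Pr}[\varepsilon<x]$ exactly as a product $\prod_I\bigl(1-{\rm Pr}[|\varepsilon^I|\geq x]\bigr)$, lower-bounds it by $\bigl(1-2^{4k+3}/(x^2M)\bigr)^{2R}$, and then converts via $1-z\geq e^{-2z}$ (valid for $z\leq 1/2$) and $1-e^{-z}\leq z$. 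This is precisely where the paper \emph{consumes} the second hypothesis $M\geq L^2\,2^{4k+10}/(k^2\delta^2)$: it is needed to ensure $2^{4k+3}/(x^2M)\leq 1/2$ before invoking $1-z\geq e^{-2z}$. In your argument that hypothesis is never used --- the union bound needs only $M\geq 2^{8k}$ to license Lemma~\ref{lem:single_bound} on the size-$2k$ intervals --- so your characterization of it as ``the sufficient condition quoted in the statement'' is slightly off: your proof actually establishes the theorem without it (for smaller $M$ the bound is simply vacuous, exceeding $1$). Your reduction step is also cosmetically different but equivalent: you invert $e^{(4L/k)\alpha}-1=\delta$ exactly and then use $\log(1+\delta)\geq\delta/2$, whereas the paper fixes the threshold $x=k\delta/(8L)$ up front and uses $e^{z}-1\leq 2z$; both yield $\alpha\geq k\delta/(8L)$. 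The trade-off: the paper's product form is the natural exact expression under independence, while your union bound is more elementary, slightly sharper in constants, and robust even if the per-interval estimators were correlated --- meaning that for this theorem the protocol's insistence on fresh measurement batches per interval is not actually load-bearing.
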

	\begin{proof}
		Recalling the definition~\eqref{eq:def_varepsilon_appendix}, note that if $\varepsilon< (k/8 L)\delta$, then $\varepsilon<1/2$ and also $e^{(4L/k)\varepsilon}-1< \delta$. Using Lemma~\ref{lem:inequality_eps}, this implies
		\begin{align}\label{eq:almost_there}
			{\rm Pr}\left[\left|(r_2^{(e)}/\mathcal{P}_2)-1\right|\right.&\left.\geq \delta\right]\leq {\rm Pr}[ \varepsilon \geq (k/8 L)\delta]\nonumber\\
			&=1-{\rm Pr}[ \varepsilon < (k/8L)\delta]\,.
		\end{align}
		In more detail, the validity of the first inequality can be seen as follows: Suppose $\varepsilon<  (k/8L)\delta$. Then necessarily $\varepsilon< 1/2$ and so also $|r^{(e)}_2/\mathcal{P}_2-1|<\delta$ (by Lemma~\ref{lem:inequality_eps}). Therefore, the set of cases in which $|r^{(e)}_2/\mathcal{P}_2-1|\geq\delta$ must be contained in the set of cases in which $\varepsilon \geq  (k/8L)\delta$. In formula, this is the first line of Eq.~\eqref{eq:almost_there}. 
		
		From the definition of $\varepsilon$ and Lemma~\ref{lem:single_bound}, we have
		\begin{align}
			{\rm Pr}[ \varepsilon < x]&= \prod_j\left(1- {\rm Pr}[ |\varepsilon^{I_j}| \geq x]\right)\nonumber\\
			&\times  \prod_j\left(1- {\rm Pr}[|\varepsilon^{I_j\cup I_{j+1}}| \geq x]\right)\nonumber\\
			&\geq \left[\left(1-\frac{2^{2k+3}}{x^2M}\right)\left(1-\frac{2^{4k+3}}{x^2M}\right)\right]^{R}\,,
		\end{align}
		where we used that the random variables $\varepsilon^{I_j}$ are statistically independent. Therefore
		\begin{equation}
			{\rm Pr}[ \varepsilon < x]\geq \left(1-\frac{2^{4k+3}}{x^2M}\right)^{2R}\,.
		\end{equation}
		
		Setting $x=\delta k/(8L)$, we have by hypothesis $2^{4k+3}/x^2M\leq 1/2$. Therefore, using $1-z\geq e^{-2z}$ for $0\leq z\leq 1/2$, we obtain
		\begin{equation}
			{\rm Pr}[ \varepsilon < (k/8L)\delta]\geq \exp\left[ - \frac{2^{4k+11}L^3}{M\delta^2 k^3} \right]\,.
		\end{equation}
		Plugging this into~\eqref{eq:almost_there}, and using $1-e^{-z}\leq z$, we finally obtain~\eqref{eq:main_result_appendix}.
	\end{proof}
	
	\subsection{Statistical-error analysis for the PT moments}
	\label{sec:additive_error_PT}
	
	In this Appendix we prove Eq.~\eqref{eq:prob_inequality_PT}. Considering the same protocol and using the same notation as Sec.~\ref{sec:pt_moments}, we set 
	\begin{subequations}\label{eq:single_bound_1_PT}
		\begin{align}
			p_3^{(e)}[A_2B_1]&=p_3[A_2B_1]+\varepsilon^{A_2B1}\,,\\
			\mathcal{P}_3^{(e)}[A_2]&=\mathcal{P}_3[A_2](1+\varepsilon^{A_2})\,,\\
			\mathcal{P}_3^{(e)}[B_1]&=\mathcal{P}_3[B_1](1+\varepsilon^{B_1})\,.
		\end{align}
	\end{subequations}
	Note that $\varepsilon^{A_2B_1}$ is an additive error, while $\varepsilon^{A_2}$, $\varepsilon^{B_1}$ are relative errors. We also define
	\begin{equation}\label{eq:def_varepsilon_PT}
		\varepsilon=2^{2|A_2|+2|B_1|}{\rm max}\left\{ |\varepsilon^{A_2B_1}|,
		|\varepsilon^{A_2}|, |\varepsilon^{B_1}| \right\}\,.
	\end{equation} 
	In the following, we will set $|A_2|=|B_1|=k$, where $k\geq 2\ell-1$ with $\ell$ being the circuit depth. With this choice, Eq.~\eqref{eq:factorization_sn} holds. 
	
	For clarity, we organize the proof into lemmas.
	\begin{lem}\label{lem:inequality_eps_PT}
		Suppose $\varepsilon\leq 1$. Then
		\begin{equation}\label{eq:inequality_eps_PT}
			\left |	s_3^{(e)}[A_2B_1]-s_3[A_2B_1]\right|\leq 16 \varepsilon \,,
		\end{equation}
		where $s_n[A_2B_1]$ is defined in Eq.~\eqref{eq:factorization_sn}.
		\begin{proof}
			We start from
			\begin{align}
				s^{(e)}_3[A_2B_1]&=\frac{p_3[A_2B_1]+\varepsilon^{A_2B_1}}{\mathcal{P}_3[A_2](1+\varepsilon^{A_2})\mathcal{P}_3[B_1](1+\varepsilon^{B_1})}\,,
			\end{align}
			which gives us
			\begin{align}
				&|s^{(e)}_3[A_2B_1]-s_3[A_2B_1]|\leq\nonumber\\
				& \frac{|p_3[A_2B_1]|}{\mathcal{P}_3[A_2]\mathcal{P}_3[B_1]}\frac{|\varepsilon^{A_2}|+|\varepsilon^{B_1}|+|\varepsilon^{A_2}\varepsilon^{B_1}|}{(1+\varepsilon^{A_2})(1+\varepsilon^{B_1})}\nonumber\\
				&
				+\frac{|\varepsilon^{A_2B_1}|}{\mathcal{P}_3[A_2]\mathcal{P}_3[B_1](1+\varepsilon^{A_2})(1+\varepsilon^{B_1})}\,.
			\end{align}
			Denoting by $\lambda_j$ the eigenvalues of $\rho^{T_{A_2}}_{A_2B1}$, we have 
			\begin{equation}
				|p_3[A_2B_1]|\leq \sum_j|\lambda_j|^3\leq \sum_j\lambda_j^2=\mathcal{P}[A_2B_1]\leq 1\,,
			\end{equation}
			where we used $\lambda_j\in [-1/2, 1]$~\cite{rana2023negative} and ${\rm Tr}[(\rho_{AB}^{T_A})^2]={\rm Tr}[\rho_{AB}^2]$~\cite{elben2020mixed}. In addition, since $\varepsilon\leq 1$,  clearly $|\varepsilon^{A_2}|\leq 1/2$, $|\varepsilon^{B_1}|\leq 1/2$ and so also $|\varepsilon^{A_2}\varepsilon^{B_1}|<|\varepsilon^{A_2}|$. Putting everything together, and using $\mathcal{P}_3[I]\leq 2^{-2|I|}$, we finally arrive at Eq.~\eqref{eq:inequality_eps_PT}.
		\end{proof}
	\end{lem}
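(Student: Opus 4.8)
The plan is to expand the estimator, isolate the three independent error contributions, and control them using an a priori bound on the PT moment. First I would insert the definitions~\eqref{eq:single_bound_1_PT} into the estimator~\eqref{eq:estimator_s}, obtaining
\begin{equation}
	s_3^{(e)}[A_2B_1]=\frac{p_3[A_2B_1]+\varepsilon^{A_2B_1}}{\mathcal{P}_3[A_2]\mathcal{P}_3[B_1](1+\varepsilon^{A_2})(1+\varepsilon^{B_1})}\,.
\end{equation}
Subtracting $s_3[A_2B_1]$ from~\eqref{eq:def_sn} and reducing to a common denominator, the numerator of the difference collapses to $\varepsilon^{A_2B_1}-p_3[A_2B_1]\,(\varepsilon^{A_2}+\varepsilon^{B_1}+\varepsilon^{A_2}\varepsilon^{B_1})$. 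The triangle inequality then splits the error into a term carrying $|\varepsilon^{A_2B_1}|$ and a term carrying $|p_3[A_2B_1]|$ times the relative errors on the two single-region moments.

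I expect the main obstacle to be a uniform a priori bound $|p_3[A_2B_1]|\leq 1$: this is essential because the PT moment $p_3$ is not sign-definite and could otherwise spoil the estimate. I would obtain it spectrally, diagonalizing $\rho_{A_2B_1}^{T_{A_2}}$ and using that its eigenvalues $\lambda_j$ satisfy $\lambda_j\in[-1/2,1]$~\cite{rana2023negative}. From $|\lambda_j|\leq 1$ one has $|\lambda_j|^3\leq\lambda_j^2$, whence
\begin{equation}
	|p_3[A_2B_1]|\leq\sum_j|\lambda_j|^3\leq\sum_j\lambda_j^2={\rm Tr}\!\left[(\rho_{A_2B_1}^{T_{A_2}})^2\right]={\rm Tr}[\rho_{A_2B_1}^2]\leq 1\,,
\end{equation}
where the identity ${\rm Tr}[(\rho^{T_A})^2]={\rm Tr}[\rho^2]$ is the standard fact $p_2=\mathcal{P}_2$~\cite{elben2020mixed}. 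This removes the $p_3$-dependence from the bound.

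With this in hand the remainder is bookkeeping controlled by the definition~\eqref{eq:def_varepsilon_PT} and the hypothesis $\varepsilon\leq 1$. The denominators are bounded below by $\mathcal{P}_3[I]\geq 2^{-2|I|}$ (the minimum of ${\rm Tr}\,\rho_I^3$ over states on $|I|$ qubits, reached at the maximally mixed state), so that $1/(\mathcal{P}_3[A_2]\mathcal{P}_3[B_1])\leq 2^{2|A_2|+2|B_1|}$; multiplied against any of $|\varepsilon^{A_2B_1}|,|\varepsilon^{A_2}|,|\varepsilon^{B_1}|$ this prefactor gives a quantity bounded by $\varepsilon$, by the very definition~\eqref{eq:def_varepsilon_PT}. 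Moreover $\varepsilon\leq 1$ forces $|\varepsilon^{A_2}|,|\varepsilon^{B_1}|\leq 1/2$, so the cross term obeys $|\varepsilon^{A_2}\varepsilon^{B_1}|\leq|\varepsilon^{A_2}|$ and the correction factor obeys $(1+\varepsilon^{A_2})(1+\varepsilon^{B_1})\geq 1/4$. Collecting the four error contributions---each at most $\varepsilon$ after applying the prefactor---and dividing by the lower bound $1/4$ on the correction factor yields $|s_3^{(e)}[A_2B_1]-s_3[A_2B_1]|\leq 4\times 4\,\varepsilon=16\varepsilon$, as claimed. The only genuinely nontrivial ingredient is the spectral bound $|p_3|\leq 1$; the rest is a careful accounting of the three statistically independent errors.
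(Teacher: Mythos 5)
Your proof is correct and takes essentially the same route as the paper's: the identical common-denominator decomposition (the paper just writes it as two separate fractions), the same spectral bound $|p_3[A_2B_1]|\leq 1$ via $\lambda_j\in[-1/2,1]$ together with ${\rm Tr}[(\rho^{T_A})^2]={\rm Tr}[\rho^2]$, and the same final accounting with $(1+\varepsilon^{A_2})(1+\varepsilon^{B_1})\geq 1/4$ and $1/(\mathcal{P}_3[A_2]\mathcal{P}_3[B_1])\leq 2^{2|A_2|+2|B_1|}$, giving $4\times 4\,\varepsilon = 16\varepsilon$. Incidentally, you state the moment bound in the correct direction, $\mathcal{P}_3[I]\geq 2^{-2|I|}$, where the paper's text contains a typo ($\leq$).
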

	
	\begin{lem}\label{lem:single_bound_PT}
		Let $M\geq 3\cdot 2^{8k}$. Then
		\begin{subequations}
			\label{eq:single_bound_PT}
			\begin{align}
				{\rm Pr}\left(|\varepsilon^{A_2}|\geq \alpha\right)&\leq 27\frac{2^{3k}}{\alpha^2 M}\,, \label{eq:single_bound_PT_1}\\
				{\rm Pr}\left(|\varepsilon^{B_1}|\geq \alpha\right)&\leq 27\frac{2^{3k}}{\alpha^2 M}\,, \label{eq:single_bound_PT_2}\\
				{\rm Pr}\left(|\varepsilon^{A_2B_1}|\geq \alpha\right)&\leq 27\frac{2^{2k}}{\alpha^2 M}\,. \label{eq:single_bound_PT_3}
			\end{align}	
			
		\end{subequations}
	\end{lem}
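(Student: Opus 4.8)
The plan is to treat Lemma~\ref{lem:single_bound_PT} as the PT-moment analogue of Lemma~\ref{lem:single_bound}: for each of the three estimators I would combine Chebyshev's inequality with the variance bound~\eqref{eq:variance_PT3}, specialised to the relevant region. Since all three estimators are faithful, the only inputs needed are (i) the variance bound, (ii) the lower bound $\mathcal{P}_3[I]\geq 2^{-2|I|}$ (the maximally mixed state minimises $\Tr\rho^3$ over the simplex), and (iii) the hypothesis $M\geq 3\cdot 2^{8k}$, which lets me write $M-1\geq M/2$ and $M-2\geq M/2$ and thereby discard the subleading terms of~\eqref{eq:variance_PT3}. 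The additive error $\varepsilon^{A_2B_1}$ is the most direct; the relative errors $\varepsilon^{A_2}$, $\varepsilon^{B_1}$ require dividing the variance by $\mathcal{P}_3^2$ and demand a little more care.

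First I would handle the additive error $\varepsilon^{A_2B_1}=p_3^{(e)}[A_2B_1]-p_3[A_2B_1]$. Chebyshev gives ${\rm Pr}(|\varepsilon^{A_2B_1}|\geq\alpha)\leq {\rm Var}[p_3^{(e)}[A_2B_1]]/\alpha^2$. I would then apply~\eqref{eq:variance_PT3} with $|AB|=2k$, bounding $\Tr(\rho_{A_2B_1}^4)\leq 1$ and $p_2[A_2B_1]\leq 1$. Using $M\geq 3\cdot 2^{8k}$, the second and third terms of~\eqref{eq:variance_PT3} are suppressed by powers of $2^{-2k}$ relative to the leading term $9\cdot2^{2k}/M$, so the whole variance is at most $27\cdot 2^{2k}/M$, which yields~\eqref{eq:single_bound_PT_3}.

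The relative errors are the crux. By Chebyshev, ${\rm Pr}(|\varepsilon^{A_2}|\geq\alpha)\leq {\rm Var}[\mathcal{P}_3^{(e)}[A_2]]/(\alpha^2\mathcal{P}_3[A_2]^2)$, where I regard $\mathcal{P}_3[A_2]=\Tr(\rho_{A_2}^3)$ as a PT moment with trivial transposed part, so that~\eqref{eq:variance_PT3} applies with $|AB|=k$. Combining the crudest lower bound $\mathcal{P}_3[A_2]\geq 2^{-2k}$ with the leading variance term $9\cdot 2^k/M$ would only give an exponent $2^{5k}$, which is too large. To recover $2^{3k}$ I would instead control the ratios that appear after dividing by $\mathcal{P}_3^2$ via elementary moment inequalities: $\Tr\rho^4\leq\Tr\rho^3$ (since $\lambda^4\leq\lambda^3$ for $\lambda\in[0,1]$) gives $\Tr\rho_{A_2}^4/\mathcal{P}_3[A_2]^2\leq 1/\mathcal{P}_3[A_2]\leq 2^{2k}$; Cauchy--Schwarz $\Tr\rho^2\leq(\Tr\rho^3)^{1/2}$ gives $\mathcal{P}_2[A_2]/\mathcal{P}_3[A_2]^2\leq \mathcal{P}_3[A_2]^{-3/2}\leq 2^{3k}$; and $1/\mathcal{P}_3[A_2]^2\leq 2^{4k}$. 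Substituting these into the three terms of~\eqref{eq:variance_PT3} and again using $M\geq 3\cdot 2^{8k}$ to suppress the last two, the leading contribution becomes $9\cdot (2^k/M)\cdot 2^{2k}=9\cdot 2^{3k}/M$ and the total is bounded by $27\cdot 2^{3k}/M$, giving~\eqref{eq:single_bound_PT_1}; the identical argument with $A_2\to B_1$ yields~\eqref{eq:single_bound_PT_2}.

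The main obstacle is precisely this last step: a direct application of Chebyshev with the crudest lower bound on $\mathcal{P}_3$ loses a spurious factor $2^{2k}$, so the key is to cancel the small numerator factors $\Tr\rho^4$ and $\Tr\rho^2$ against the small denominator $\mathcal{P}_3^2$ through the inequalities $\Tr\rho^2\leq\sqrt{\Tr\rho^3}$ and $\Tr\rho^4\leq\Tr\rho^3$. Everything else is a routine combination of~\eqref{eq:variance_PT3}, the bound $\mathcal{P}_3\geq 2^{-2k}$, and the hypothesis on $M$, with the constant $27$ left with ample margin.
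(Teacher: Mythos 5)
Your proposal is correct and follows essentially the same route as the paper's proof: Chebyshev's inequality applied to the faithful estimators, the variance bound~\eqref{eq:variance_PT3} with $|AB|=2k$ (resp.\ $k$), the lower bound $\mathcal{P}_3[I]\geq 2^{-2|I|}$, and the hypothesis $M\geq 3\cdot 2^{8k}$ to absorb the subleading variance terms, with the crucial cancellation for the relative errors coming from exactly the paper's inequality $\Tr(\rho^4)/(\Tr\rho^3)^2\leq 1/\Tr\rho^3\leq 2^{2k}$. The only cosmetic differences are the order of operations (you divide by $\mathcal{P}_3^2$ before suppressing the subleading terms, whereas the paper first collapses all three variance terms into $27\,(2^{k}/M)\Tr(\rho^4)$ using H\"older and $\Tr\rho^4\geq 2^{-3|AB|}$) and your use of Cauchy--Schwarz $\Tr\rho^2\leq\sqrt{\Tr\rho^3}$ in place of the paper's H\"older step for the middle term, neither of which changes the substance.
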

	\begin{proof}
		We start from the bound~\cite{rath2021quantum}
		\begin{align}
			\operatorname{Var}\left[p^{(e)}_3[AB]\right] &\leq 9 \frac{2^{|AB|}}{M} \operatorname{Tr}\left(\rho_{AB}^4\right)\nonumber\\
			+&18 \frac{2^{3 |AB|}}{(M-1)^2} p_2[AB]+6 \frac{2^{6 |AB|}}{(M-2)^3}\,.
		\end{align}
		Since $M\geq  3\cdot 2^{8k}$ (and $k\geq 1$), then
		\begin{equation}
			\frac{18}{(M-1)^2}\leq 	\frac{27}{M^2}\,,\quad \frac{6}{(M-2)^3}\leq 	\frac{9}{M^3}\,.
		\end{equation}
		Therefore
		\begin{align}
			\operatorname{Var}\left[p^{(e)}_3[AB]\right] &\leq 9 \frac{2^{|AB|}}{M} \operatorname{Tr}\left(\rho_{AB}^4\right)\nonumber\\
			+&27 \frac{2^{3 |AB|}}{M^2} p_2[AB]+9 \frac{2^{6 |AB|}}{M^3}\,.
		\end{align}
		Since $M\geq 3\cdot 2^{8k}$, we have
		\begin{align}
			9 \frac{2^{|AB|}}{M}\operatorname{Tr}\left(\rho_{AB}^4\right)&\geq 27 \frac{2^{3 |AB|}}{M^2} p_2[AB]\,,\nonumber\\
			9 \frac{2^{|AB|}}{M}\operatorname{Tr}\left(\rho_{AB}^4\right)&\geq 9 \frac{2^{6 |AB|}}{M^3}\,.
		\end{align}
		In the first line we have used that $p_2[I]={\rm Tr}[\rho_I^2]$ and H\"older's inequality, which guarantees 
		\begin{equation}
			{\rm Tr}[\rho^{2}_I]/{\rm Tr}[\rho^{4}_I]\leq 2^{|I|}/{\rm Tr[\rho_I^{2}]}\leq 2^{2|I|}\,,
		\end{equation}
		while in the second line we have used ${\rm Tr}(\rho_{AB}^{4})\geq 2^{-3|AB|}=2^{-6k}$. Putting all together, we get
		\begin{equation}
			\operatorname{Var}\left[p^{(e)}_3[AB]\right] \leq 27 \frac{2^{2k}}{M}\operatorname{Tr}\left(\rho_{AB}^4\right)\,.
		\end{equation}
		
		Similarly, we have~\cite{rath2021quantum}
		\begin{align}
			\operatorname{Var}\left[\mathcal{P}_3[A_2]\right] &\leq 9 \frac{2^{|A_2|}}{M} \operatorname{Tr}\left(\rho_{A_2}^4\right)\nonumber\\
			+&18 \frac{2^{3 |A_2|}}{(M-1)^2} \operatorname{Tr}\left(\rho_{A_2}^2\right)+6 \frac{2^{6 |A_2|}}{(M-2)^3}\nonumber\\
			\leq & 27 \frac{2^{k}}{M} \operatorname{Tr}\left(\rho_{A_2}^4\right)\,.
		\end{align}
		Using ${\rm Tr}\left(\rho_{A_2}^4\right)/({\rm Tr}(\rho_{A_2}^{3}))^2\leq 1/{\rm Tr}(\rho_{A_2}^{3})\leq 2^{2k}$, we obtain
		\begin{equation}
			{\rm Var}\left[\frac{\mathcal{P}^{(e)}_3[A_2]}{\mathcal{P}_3[A_2]}\right]\leq 27 \frac{2^{k}}{M} \frac{{\rm Tr}\left(\rho_{A_2}^4\right)}{({\rm Tr}(\rho_{A_2}^{3}))^2}\leq 27 \frac{2^{3k}}{M}\,,
		\end{equation}
		and analogously for $\mathcal{P}^{(2)}_3(B_1)$. Eqs.~\eqref{eq:single_bound_PT} follow using Chebyshev's inequality. 
	\end{proof}
	
	We are ready to state and prove the main result of this section, yielding Eq.~\eqref{eq:prob_inequality_PT}.
	\begin{thm}\label{th:main_theorem_PT}
		Let $0\leq \delta\leq 1$ and take
		\begin{equation}
			M\geq 27\frac{2^{11k+9}}{\delta^2}\,.
		\end{equation}
		Then	
		\begin{equation}\label{eq:main_result_PT}
			{\rm Pr}\left[\left|s^{(e)}_3[A_2B_1]-s_3[A_2B_1]\right|\geq \delta\right]\leq  81\frac{2^{11k+9}}{M\delta^2}\,.
		\end{equation}
	\end{thm}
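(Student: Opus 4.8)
The plan is to follow the template already established for the purity estimate in Theorem~\ref{th:main_theorem_1}, since the two substantive ingredients have been isolated in the preceding lemmas and only need to be combined. First I would invoke the deterministic perturbative bound of Lemma~\ref{lem:inequality_eps_PT}: whenever the combined error $\varepsilon$ defined in Eq.~\eqref{eq:def_varepsilon_PT} satisfies $\varepsilon\le 1$, one has $|s_3^{(e)}[A_2B_1]-s_3[A_2B_1]|\le 16\varepsilon$. Since $0\le\delta\le 1$, the event $\{\varepsilon<\delta/16\}$ already forces $\varepsilon<1$, and hence through the lemma forces $|s_3^{(e)}-s_3|<\delta$. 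Taking the contrapositive yields the reduction ${\rm Pr}[|s_3^{(e)}-s_3|\ge\delta]\le{\rm Pr}[\varepsilon\ge\delta/16]$, converting the statement into a single tail bound on $\varepsilon$.

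Next I would exploit the fact that the three estimators $p_3^{(e)}[A_2B_1]$, $\mathcal{P}_3^{(e)}[A_2]$ and $\mathcal{P}_3^{(e)}[B_1]$ are constructed from disjoint sets of classical shadows (this is the purpose of taking $M_T=3M$), so that $\varepsilon^{A_2B_1}$, $\varepsilon^{A_2}$ and $\varepsilon^{B_1}$ are statistically independent. Writing $\varepsilon=2^{4k}\max\{|\varepsilon^{A_2B_1}|,|\varepsilon^{A_2}|,|\varepsilon^{B_1}|\}$ with $|A_2|=|B_1|=k$, the event $\{\varepsilon<\delta/16\}$ is precisely the event that each of the three errors lies below the threshold $\alpha=\delta/2^{4k+4}$. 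By independence the probability of this intersection factorizes, and each factor is lower-bounded by $1-27\cdot 2^{3k}/(\alpha^2 M)$ using the Chebyshev estimates of Lemma~\ref{lem:single_bound_PT} (the $A_2B_1$ factor carries only $2^{2k}\le 2^{3k}$, so the single weakest bound dominates all three). The hypothesis $M\ge 27\cdot 2^{11k+9}/\delta^2$ with $\delta\le 1$ guarantees $M\ge 3\cdot 2^{8k}$, so Lemma~\ref{lem:single_bound_PT} indeed applies.

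Finally, substituting $\alpha^2=\delta^2/2^{8k+8}$ collapses everything into the single quantity $z=27\cdot 2^{11k+8}/(\delta^2 M)$, and the same hypothesis forces $z\le 1/2$. At this point I would apply the elementary inequality $1-z\ge e^{-2z}$, valid for $0\le z\le 1/2$, to get ${\rm Pr}[\varepsilon<\delta/16]\ge(1-z)^3\ge e^{-6z}$, and then $1-e^{-w}\le w$ to conclude ${\rm Pr}[\varepsilon\ge\delta/16]\le 6z=81\cdot 2^{11k+9}/(\delta^2 M)$, which is exactly Eq.~\eqref{eq:main_result_PT}. The proof is therefore mostly bookkeeping once the lemmas are in hand; the one step I would verify most carefully is the interplay of the exponential prefactors, namely that the normalization $2^{4k}$ inside $\varepsilon$ combines with the threshold $\alpha\sim\delta\,2^{-4k}$ to produce the $2^{8k}$ in $\alpha^{-2}$, and that using $1-z\ge e^{-2z}$ rather than a cruder Bernoulli bound is precisely what makes the final constant land on the stated $81\cdot 2^{11k+9}$ rather than on a slightly smaller value.
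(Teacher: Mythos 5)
Your proposal is correct and follows essentially the same route as the paper's own proof in Appendix~\ref{sec:additive_error_PT}: the reduction via Lemma~\ref{lem:inequality_eps_PT} to the tail event $\{\varepsilon\geq\delta/16\}$, the independence factorization combined with the Chebyshev bounds of Lemma~\ref{lem:single_bound_PT}, and the final use of $1-z\geq e^{-2z}$ followed by $1-e^{-w}\leq w$ reproduce the paper's argument step for step, landing on the identical constant $81\cdot 2^{11k+9}/(M\delta^2)$. Your explicit checks that the hypothesis implies $M\geq 3\cdot 2^{8k}$ and $z\leq 1/2$ are also exactly the points the paper relies on.
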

	\begin{proof}
		Recall the definition~\eqref{eq:def_varepsilon_PT} and note that if $\varepsilon\leq \delta /16$, then trivially $\varepsilon<1$ and so, by Lemma~\ref{lem:inequality_eps_PT}, $|s^{(e)}_3-s_3|\leq \delta$. This implies
		\begin{align}\label{eq:almost_there_PT}
			{\rm Pr}[|s^{(e)}_3-s_3|&\geq \delta]\leq {\rm Pr}[ \varepsilon \geq \delta /16]\nonumber\\
			&=1-{\rm Pr}[ \varepsilon < \delta /16]\,.
		\end{align}
		In more detail, the validity of the first inequality can be seen as follows: Suppose $\varepsilon < \delta /16$. Then trivially $\varepsilon<1$ and so necessarily also $|s^{(e)}_3-s_3|<\delta$ (by Lemma~\ref{lem:inequality_eps_PT}). Therefore, the set of cases in which $|s^{(e)}_3-s_3|\geq \delta$ is contained in the set of cases for which $\varepsilon \geq \delta /16$. In formula, this is the first line of Eq.~\eqref{eq:almost_there_PT}.
		
		From the definition of $\varepsilon$ and Lemma~\ref{lem:single_bound_PT}, we have
		\begin{align}
			& {\rm Pr}[ \varepsilon < x]= (1-{\rm Pr}[|\varepsilon^{A_2}|\geq x 2^{-4k}])\nonumber\\
			&\times (1-{\rm Pr}[|\varepsilon^{B_1}|\geq x 2^{-4k}]) (1-{\rm Pr}[|\varepsilon^{A_2B_1}|\geq x2^{-4k} ])\,\nonumber\\
			&\geq\left(1-27\frac{2^{11k}}{x^2 M}\right)^3\,,
		\end{align}
		where we used that the random variables $\varepsilon^{I_j}$ are statistically independent. 
		Setting $x=\delta /16$, we have by hypothesis $27\cdot 2^{11k}/x^2M<1/2$. Therefore, using $1-z\geq e^{-2z}$ for $0\leq z\leq 1/2$, we obtain
		\begin{equation}
			{\rm Pr}[ \varepsilon < \delta /16]\geq \exp\left[ - 81\frac{2^{11k+9}}{M\delta^2} \right]\,.
		\end{equation}
		Plugging into~\eqref{eq:almost_there_PT}, and using $1-e^{-z}\leq z$, we finally obtain~\eqref{eq:main_result_PT}.
	\end{proof}
	
	\section{Statistical-error analysis for states satisfying the AFCs}
	\label{sec:purity_short_range}
	
	In this Appendix we prove the main claims reported in Sec.~\ref{sec:purity_PT_estimation_AFCs}. Let $\rho$ be a state satisfying~\eqref{eq:approx_split_1} for all partitions as in Fig.~\ref{fig:partition}$(a)$, with $|B|=k\geq k_c$ (and open boundary conditions). We first prove Eq.~\eqref{eq:final_result_purity_short_range}. 
	
	We start with the following:
	\begin{lem}\label{lem:inequality_short_range}
		Let 
		\begin{equation}\label{eq:appendix_inquality_k}
			k\geq \frac{\log(\alpha_2 L/\delta)}{\beta_2}\,,
		\end{equation}
		for $0\leq \delta \leq 1/2$. Then, defining $r_2$ as in Eq.~\eqref{eq:def_r2}, we have
		\begin{equation}\label{eq:main_result_appendix_short_range}
			\left|\frac{r_2}{{\rm Tr}(\rho^2)}-1\right|\leq 4\delta\,.
		\end{equation}
	\end{lem}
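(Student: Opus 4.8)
The plan is to express the ratio $r_2/{\rm Tr}(\rho^2)$ as a product of $R-2$ correction factors, each controlled by a single use of the purity AFC~\eqref{eq:approx_split_1}, and then to bound this product using the hypothesis on $k$. Throughout I assume $k\geq k_c$, so that~\eqref{eq:approx_split_1} is applicable; in the regime of interest this is guaranteed by~\eqref{eq:appendix_inquality_k}.

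First I would set up a telescoping recursion. Introduce the nested intervals $J_m=I_1\cup\cdots\cup I_m$, so that $J_R=\mathcal{S}$ and ${\rm Tr}(\rho_{J_R}^2)={\rm Tr}(\rho^2)$. For each $m=2,\dots,R-1$ I apply~\eqref{eq:approx_split_1} to the tripartition $J_{m+1}=A\cup B\cup C$ with $A=J_{m-1}$, $B=I_m$, $C=I_{m+1}$; since $|B|=k\geq k_c$ this yields
\begin{equation}
\frac{{\rm Tr}(\rho_{J_m}^2)\,{\rm Tr}(\rho_{I_m\cup I_{m+1}}^2)}{{\rm Tr}(\rho_{I_m}^2)}={\rm Tr}(\rho_{J_{m+1}}^2)\,(1+\eta_m),\qquad |\eta_m|\leq \alpha_2 e^{-\beta_2 k}.
\end{equation}
Taking the product of these $R-2$ identities over $m$ (with ${\rm Tr}(\rho_{J_2}^2)={\rm Tr}(\rho_{I_1\cup I_2}^2)$ as base case) and cancelling the telescoping factors ${\rm Tr}(\rho_{J_m}^2)$, the surviving numerator and denominator terms reassemble \emph{exactly} into the definition~\eqref{eq:def_r2} of $r_2$, giving $r_2/{\rm Tr}(\rho^2)=\prod_{m=2}^{R-1}(1+\eta_m)$.

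The remaining task is to bound this product. Writing $\mu=\alpha_2 e^{-\beta_2 k}$ and $N=R-2\leq L/k$, I would use $\bigl|\prod_m(1+\eta_m)-1\bigr|\leq (1+\mu)^N-1\leq e^{N\mu}-1$. The hypothesis~\eqref{eq:appendix_inquality_k}, namely $\beta_2 k\geq \log(\alpha_2 L/\delta)$, gives $\mu\leq \delta/L$, hence $N\mu\leq \delta/k\leq \delta\leq 1/2$; applying $e^x-1\leq 2x$ on $[0,1/2]$ then yields $\bigl|r_2/{\rm Tr}(\rho^2)-1\bigr|\leq 2N\mu\leq 2\delta\leq 4\delta$, which is~\eqref{eq:main_result_appendix_short_range}. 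Equivalently, one may first extract the $L$-dependent bound $\tfrac{4\alpha_2 L}{k}e^{-\beta_2 k}$ of~\eqref{eq:global_approximation} and substitute the hypothesis on $k$.

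I expect the main obstacle to be the bookkeeping in the telescoping step rather than any analytic estimate: one must check that the product of single-split factorizations collapses precisely onto the expression~\eqref{eq:def_r2}, with the index ranges in the numerator ($j=1,\dots,R-1$) and denominator ($j=2,\dots,R-1$) matching after cancellation, and that every intermediate tripartition $(J_{m-1},I_m,I_{m+1})$ genuinely satisfies the AFC hypothesis $|B|=k\geq k_c$. Once the product form is secured, the final estimate is routine.
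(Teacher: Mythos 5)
Your proposal is correct and follows essentially the same route as the paper's proof: iterating the AFC~\eqref{eq:approx_split_1} across the intervals $I_j$ to write $r_2/{\rm Tr}(\rho^2)=\prod_{m=2}^{R-1}(1+\eta_m)$ with $|\eta_m|\leq\alpha_2 e^{-\beta_2 k}$, then bounding the product via $(1+\mu)^{R-2}\leq e^{(L/k)\mu}$ and $e^x-1\leq 2x$ together with the hypothesis $\mu\leq\delta/L$. The only cosmetic difference is that you handle the two-sided deviation via the standard inequality $\bigl|\prod_m(1+\eta_m)-1\bigr|\leq(1+\mu)^{R-2}-1$ (valid here since $\mu\leq 1/2$), whereas the paper bounds the lower side separately using $1-z\geq e^{-2z}$; your version even yields the slightly sharper constant $2\delta$ in place of $4\delta$.
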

	\begin{proof}
		First, applying iteratively~\eqref{eq:approx_split_1}, we get (recalling that $R=L/k$)
		\begin{equation}
			\frac{r_2}{{\rm Tr}(\rho^2)}=\prod_{j=2}^{R-1}(1+\varepsilon_j)\,,
		\end{equation}
		with $|\varepsilon_j|\leq \alpha_2 e^{-\beta_2 k}$, and so
		\begin{equation}
			\left|\frac{r_2}{{\rm Tr}(\rho^2)}-1\right|=\left|\prod_{j=2}^{R-1}(1+\varepsilon_j)-1\right|\,.
		\end{equation}
		We have
		\begin{align}
			&\prod_{j=2}^{R-1}(1+\varepsilon_j)-1\leq (1+\alpha_2 e^{-\beta_2k})^{R-2}-1\nonumber\\
			&\leq \exp\left[\alpha_2 (L/k) e^{-\beta_2k}\right]-1\,.
		\end{align}
		Analogously, 
		\begin{equation}
			\prod_{j=2}^{R-1}(1+\varepsilon_j)-1\geq (1-\alpha_2 e^{-\beta_2k})^{R-2}-1\,.
		\end{equation}
		Due to~\eqref{eq:appendix_inquality_k}, we have $\alpha_2 e^{-\beta_2k}\leq 1/2$. Therefore, using $1-z\geq e^{-2z}$ for $0\leq z\leq 1/2$, we have
		\begin{equation}
			\prod_{j=2}^{R-1}(1+\varepsilon_j)-1\geq \exp\left[-2\alpha_2 (L/k) e^{-\beta_2k}\right]-1\,.
		\end{equation}
		Therefore
		\begin{equation}
			e^{-y}-1\leq \prod_{j=2}^{R-1}(1+\varepsilon_j)-1\leq e^{y/2}-1\leq e^{y}-1\,,
		\end{equation}
		where $y=2\alpha_2 (L/k) e^{-\beta_2k}$. Since $1-e^{-y}\leq e^{y}-1$, this implies
		\begin{equation}
			\left| \prod_{j=2}^{R-1}(1+\varepsilon_j)-1\right|\leq \exp\left[2\alpha_2 (L/k) e^{-\beta_2k}\right]-1\,.
		\end{equation}
		Finally, due to~\eqref{eq:appendix_inquality_k}, we have $2\alpha_2 (L/k) e^{-\beta_2k}\leq 1$. Therefore, using $e^{z}-1\leq 2z$ for $0\leq z\leq 1$, we arrive at
		\begin{equation}
			\left| \frac{r_2}{{\rm Tr}(\rho^2)} -1\right|\leq  \frac{4\alpha_2L}{k}e^{-\beta_2 k}\leq 4\delta\,.
		\end{equation}
	\end{proof}
	
	Next, we prove Eq.~\eqref{eq:final_result_purity_short_range}, via the following:
	\begin{thm}
		Let $0\leq \delta \leq 1/2$ and set
		\begin{equation}
			k= \frac{\log(\alpha_2 L/\delta)}{\beta_2}\,.
		\end{equation}
		Choosing
		\begin{equation}\label{eq:inequality}
			M\geq {\rm max}\left\{
			\left(\frac{\alpha_2 L}{\delta}\right)^{\frac{8\log 2}{\beta_2}}, \frac{L^22^{10}}{\delta^2}\left(\frac{\alpha_2 L}{\delta}\right)^{\frac{4\log 2}{\beta_2}}
			\right\}\,,
		\end{equation}
		and recalling the definition~\eqref{eq:r2_e}, we have
		\begin{align}
			&{\rm Pr}\left[\left|(r_2^{(e)}/\mathcal{P}_2)-1\right|\geq 7\delta\right]\leq \frac{2^{11}L^3}{\delta^2 M}\left(\frac{\alpha_2 L}{\delta}\right)^{\frac{4\log 2}{\beta_2}}\,.
		\end{align}
	\end{thm}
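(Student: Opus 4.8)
The plan is to decompose the full relative error into a \emph{systematic} part, arising because the exact factorization~\eqref{eq:final_product_formula_purity} no longer holds, and a \emph{statistical} part, arising from the finite number of measurements. Concretely, I would write $r_2^{(e)}/\mathcal{P}_2 = (r_2^{(e)}/r_2)(r_2/\mathcal{P}_2)$, where $r_2$ is the deterministic product~\eqref{eq:def_r2}. The second factor is purely deterministic: since we set $k=\log(\alpha_2 L/\delta)/\beta_2$, Lemma~\ref{lem:inequality_short_range} gives $|r_2/\mathcal{P}_2 - 1|\leq 4\delta$ with no randomness involved. The first factor is the statistical error of estimating the local purities making up $r_2$, and the key observation is that $r_2^{(e)}$ is a \emph{faithful} estimator of $r_2$ (not of $\mathcal{P}_2$), so this factor is governed by exactly the same mechanism already analyzed for FDQC states in Appendix~\ref{sec:relatve_error_purity}.

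For the statistical factor I would reuse the FDQC machinery essentially verbatim, with $\mathcal{P}_2$ replaced by $r_2$ throughout. Writing $\mathcal{P}_2^{(e)}[I] = \mathcal{P}_2[I](1+\varepsilon^I)$ and $\varepsilon = \max_I|\varepsilon^I|$ as in~\eqref{eq:def_varepsilon_appendix}, the identity $r_2^{(e)}/r_2 = \prod_j(1+\varepsilon^{I_j\cup I_{j+1}})/\prod_j(1+\varepsilon^{I_j})$ holds, so Lemma~\ref{lem:inequality_eps} yields $|r_2^{(e)}/r_2 - 1|\leq e^{(4L/k)\varepsilon}-1$ whenever $\varepsilon\leq 1/2$. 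Imposing the threshold $\varepsilon < (k/8L)\delta$ then forces $|r_2^{(e)}/r_2 - 1| < \delta$, exactly as in the proof of Theorem~\ref{th:main_theorem_1}.

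Combining the two factors is the step that pins down the constant $7\delta$: if $|r_2^{(e)}/r_2-1|<\delta$ and $|r_2/\mathcal{P}_2-1|\leq 4\delta$, then $|r_2^{(e)}/\mathcal{P}_2-1|\leq \delta + 4\delta + 4\delta^2 \leq 7\delta$, using $4\delta^2\leq 2\delta$ for $\delta\leq 1/2$. Hence the event $\{|r_2^{(e)}/\mathcal{P}_2-1|\geq 7\delta\}$ is contained in $\{\varepsilon \geq (k/8L)\delta\}$, and it remains only to bound the latter probability. For this I would invoke Lemma~\ref{lem:single_bound} together with the statistical independence of the $\varepsilon^I$ across disjoint measurement sets, obtaining ${\rm Pr}[\varepsilon<(k/8L)\delta]\geq (1-2^{4k+3}/(x^2 M))^{2R}$ with $x=(k/8L)\delta$ and $R=L/k$. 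The stated lower bounds on $M$ guarantee $M\geq 2^{8k}$ (so that Lemma~\ref{lem:single_bound} applies for intervals of size up to $2k$) and $2^{4k+3}/(x^2 M)\leq 1/2$; then $1-z\geq e^{-2z}$ followed by $1-e^{-z}\leq z$ turn the product into ${\rm Pr}[\varepsilon\geq x]\leq 2^{4k+11}L^3/(k^3\delta^2 M)\leq 2^{4k+11}L^3/(\delta^2 M)$. The final cosmetic step is to substitute $k=\log(\alpha_2 L/\delta)/\beta_2$, which converts $2^{4k}$ into $(\alpha_2 L/\delta)^{4\log 2/\beta_2}$ and reproduces the claimed bound. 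I expect the only genuinely delicate point to be the bookkeeping of constants in the error composition, namely ensuring that the systematic $4\delta$, the statistical $\delta$, and the cross term collapse to exactly $7\delta$, together with verifying that the two thresholds on $M$ simultaneously satisfy all the inequalities required by Lemmas~\ref{lem:inequality_eps} and~\ref{lem:single_bound}.
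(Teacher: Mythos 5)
Your proposal is correct and follows essentially the same route as the paper's proof: the same decomposition $r_2^{(e)}/\mathcal{P}_2=(r_2^{(e)}/r_2)(r_2/\mathcal{P}_2)$, the deterministic bound $|r_2/\mathcal{P}_2-1|\leq 4\delta$ from Lemma~\ref{lem:inequality_short_range}, the containment of the bad event in $\{|r_2^{(e)}/r_2-1|\geq\delta\}$ yielding the $7\delta$ threshold, and the same final substitution $2^{4k}=(\alpha_2 L/\delta)^{4\log 2/\beta_2}$ with $k^3\geq 1$ dropped. The only (cosmetic) difference is that the paper invokes Theorem~\ref{th:main_theorem_1} as a black box applied with $\mathcal{P}_2$ replaced by $r_2$, whereas you unfold its proof inline via Lemmas~\ref{lem:inequality_eps} and~\ref{lem:single_bound}.
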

	\begin{proof}
		Set $r_2^{(e)}/r_2=(1+\varepsilon_1)$ and $r_2/\mathcal{P}_2=(1+\varepsilon_2)$, where $r^{(e)}_2$ is defined in Eq.~\eqref{eq:r2_e}. Using Lemma~\ref{lem:inequality_short_range} (and that $\delta\leq 1/2$),  we have $|\varepsilon_2|\leq 4\delta$ and so
		\begin{align}
			\left|\frac{r_2^{(e)}}{\mathcal{P}_2}-1\right|&= \left|\frac{r_2^{(e)}}{r_2}\frac{r_2}{\mathcal{P}_2}-1\right| \leq 4\delta + 3 |\varepsilon_1| \,.
		\end{align}
		If $|\varepsilon_1|< \delta$, then $|r^{(e)}_2/\mathcal{P}_2-1|< 7 \delta$. Therefore, the set of cases in which $|r^{(e)}_2/\mathcal{P}_2-1|\geq 7 \delta$ is contained in the set of cases in which $|\varepsilon_1|\geq \delta$. In formula, 
		\begin{align}
			{\rm Pr}[|(r_2^{(e)}/\mathcal{P}_2)-1|&\geq 7\delta]\leq {\rm Pr}[|(r_2^{(e)}/r_2)-1|\geq \delta|]\,.
		\end{align}
		Thanks to Eq.~\eqref{eq:inequality}, we can use Theorem~\ref{th:main_theorem_1}, yielding
		\begin{align}
			{\rm Pr}[|(r_2^{(e)}/r_2)-1|\geq \delta|]&\leq  \frac{2^{4k+11}L^3}{\delta^2k^3 M}\leq  \frac{2^{4k+11}L^3}{\delta^2M}\nonumber\\
			&\leq  \frac{2^{11}L^3}{\delta^2 M}\left(\frac{\alpha_2 L}{\delta}\right)^{\frac{4\log 2}{\beta_2}}\,,
		\end{align}
		which completes the proof.
	\end{proof}
	\noindent Note that, in order to recover Eqs.~\eqref{eq:global_approximation_condition} and ~\eqref{eq:final_result_purity_short_range}, we simply rename $\delta'=7\delta$.
	
	Finally, we present a technical result showing that the PT moments can be estimated efficiently, assuming that the state to be measured satisfies the AFCs. We focus for simplicity on the case $n=3$ and follow a protocol similar to that of Sec.~\ref{sec:pt_moments}. Namely, we take $s^{(e)}_3[A_2B_1]$, defined in~\eqref{eq:estimator_s}, as our estimator for $\tilde{p}_3[AB]$, and, for any $0\leq \delta \leq 1$, we choose
	\begin{equation}
		k= \frac{\log(2\alpha_3 /\delta)}{2\beta_3}\,.
	\end{equation}
	Finally, we perform $M$ measurements to estimate $p_3[A_2B_1]$, $\mathcal{P}_3[A_2]$, and $\mathcal{P}_3[B_1]$ each (so the total number is $M_T=3M$). 
	We can then prove the following:
	\begin{thm}
		For any $0\leq \delta\leq 1/2$, set
		\begin{equation}\label{eq:k_dependence}
			k= \frac{\log(2\alpha_3/\delta)}{2\beta_3}\,.
		\end{equation}
		If
		\begin{equation}\label{eq:inequality_PT}
			M\geq 27\frac{2^{9}}{\delta^2}\left(\frac{2\alpha_3}{\delta}\right)^{\frac{11\log 2}{2\beta_3}}\,,
		\end{equation}
		then
		\begin{align}
			{\rm Pr}\left[\left|s^{(e)}_3-\tilde{p}_3\right|\geq \delta\right] \leq \frac{81\cdot 2^{9}}{\delta^2 M}\left(\frac{2\alpha_3}{\delta}\right)^{\frac{11\log 2}{2\beta_3}}\,.
		\end{align}
	\end{thm}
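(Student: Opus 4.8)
The plan is to split the total error $|s^{(e)}_3-\tilde{p}_3[AB]|$ into a deterministic \emph{systematic} part, arising because the factorization~\eqref{eq:factorization_sn} now holds only approximately, and a \emph{statistical} part, arising from the finite number of measurements. By the triangle inequality I would write
\begin{equation}
\left|s^{(e)}_3-\tilde{p}_3[AB]\right|\leq \left|s^{(e)}_3-s_3[A_2B_1]\right|+\left|s_3[A_2B_1]-\tilde{p}_3[AB]\right|\,,
\end{equation}
where the second term is governed purely by the PT-moment AFC~\eqref{eq:approx_split_2}, while the first term measures only how faithfully the estimator $s^{(e)}_3$ of Eq.~\eqref{eq:estimator_s} reproduces the \emph{local} quantity $s_3[A_2B_1]$ defined in Eq.~\eqref{eq:def_sn}.

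For the systematic term I would simply invoke the AFC~\eqref{eq:approx_split_2} with $|A_2|=|B_1|=k$, giving $|s_3[A_2B_1]-\tilde{p}_3[AB]|\leq \alpha_3 e^{-2\beta_3 k}$. The choice $k=\log(2\alpha_3/\delta)/(2\beta_3)$ prescribed in Eq.~\eqref{eq:k_dependence} is engineered precisely so that $e^{-2\beta_3 k}=\delta/(2\alpha_3)$, whence this term is bounded by $\delta/2$ uniformly in $L$. It then follows that the event $\{|s^{(e)}_3-\tilde{p}_3[AB]|\geq\delta\}$ is contained in the event $\{|s^{(e)}_3-s_3[A_2B_1]|\geq\delta/2\}$, so it suffices to bound the probability of the latter, purely statistical, event.

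The statistical term is then controlled by Theorem~\ref{th:main_theorem_PT}. The key observation—and the main conceptual point of the whole argument—is that this theorem, although derived in the FDQC setting, is in fact a \emph{state-independent} statistical statement: its proof, through Lemmas~\ref{lem:inequality_eps_PT} and~\ref{lem:single_bound_PT}, uses only the generic variance bounds for the classical-shadow estimators together with the universal inequalities $|p_3[A_2B_1]|\leq 1$ and $\mathcal{P}_3[I]\leq 2^{-2|I|}$, never the exact factorization itself. Since $s^{(e)}_3$ is an unbiased estimator of $s_3[A_2B_1]$ for any state, Theorem~\ref{th:main_theorem_PT} applies verbatim to $s_3[A_2B_1]$ here. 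Applying it with threshold $\delta/2$ supplies both the required lower bound on $M$ and the probability bound, each carrying a prefactor $2^{11k}$. The proof is completed by substituting the chosen $k$ via $2^{11k}=(2\alpha_3/\delta)^{11\log 2/(2\beta_3)}$, which converts the exponential-in-$k$ prefactors into the advertised power law in $\alpha_3/\delta$. I expect no genuine obstacle beyond recognizing the generality of Theorem~\ref{th:main_theorem_PT}; the remaining work is the routine bookkeeping of numerical constants through the $\delta\to\delta/2$ rescaling.
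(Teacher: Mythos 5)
Your proposal follows the paper's proof essentially verbatim: the same triangle-inequality split into a systematic and a statistical part, the same use of the AFC~\eqref{eq:approx_split_2} with the choice~\eqref{eq:k_dependence} to make the systematic term $\leq\delta/2$, the same event containment, and the same appeal to Theorem~\ref{th:main_theorem_PT}, whose state-independence the paper uses tacitly and you rightly make explicit. One caveat on constants: applying Theorem~\ref{th:main_theorem_PT} at threshold $\delta/2$, as your containment requires, yields $2^{11}$ in place of $2^9$ in both the condition on $M$ and the final probability bound, whereas the stated $2^9$ comes from the paper's own proof applying the theorem at threshold $\delta$ (its displayed containment silently drops the factor $1/2$), so your more careful bookkeeping proves the theorem only with constants a factor of $4$ worse --- a discrepancy inherited from the paper rather than a flaw in your argument.
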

	\begin{proof}
		Our estimator for the normalized PT moment $\tilde{p}_3[AB]$ is
		\begin{equation}
			s^{(e)}_3=\frac{p^{(e)}_3[A_2B_1]}{\mathcal{P}^{(e)}_3[A_2]\mathcal{P}^{(e)}_3[B_1]}\,.
		\end{equation}
		First, suppose $|s_3-s^{(e)}_3|<\delta/2$. Then, using Eqs.~\eqref{eq:approx_split_2} and~\eqref{eq:k_dependence}, we have 
		\begin{align}
			|s^{(e)}_3-\tilde{p}_3|&\leq |s^{(e)}_3-s_3|+|s_3-\tilde{p}_3|\nonumber\\
			&\leq (\delta/2)+(\delta/2)=\delta\,.
		\end{align}
		Therefore, the set of cases in which $|s^{(e)}_3-\tilde{p}_3|\geq \delta$ is contained in the set of cases in which $|s^{(e)}_3-s_3|\geq \delta/2$, namely
		\begin{align}
			{\rm Pr}[|s^{(e)}_3-\tilde{p}_3|\geq \delta]&\leq {\rm Pr}[|s^{(e)}_3-s_3|\geq \delta] \,.
		\end{align}
		Finally, thanks to Eq.~\eqref{eq:inequality_PT}, we can apply Theorem~\ref{th:main_theorem_PT}, yielding
		\begin{align}
			{\rm Pr}\left[\left|s^{(e)}_3-\tilde{p}_3\right|\geq \delta\right] \leq \frac{81\cdot 2^{9}}{\delta^2 M}\left(\frac{2\alpha_3}{\delta}\right)^{\frac{11\log 2}{2\beta_3}}\,.
		\end{align}
		
	\end{proof}
	
	\section{AFCs and MPDOs}
	\label{sec:factorization_MPDOs}
	
	In this Appendix we prove that the purity AFCs hold for MPDOs. To this end, we assume the following conditions on the transfer matrices~\eqref{eq:tau_n}:
	\begin{enumerate}[label=(\Alph*)]
		\item The matrices $\tau_1$ and $\tau_2$ admit the spectral decomposition
		\begin{subequations}\label{eq:transfer_matrices}
			\begin{align}
				\tau_1&=\sum_{j=0}^{\chi-1}\lambda_j |R^{(1)}_j\rangle\langle L^{(1)}_j|\,,\\
				\tau_2&=\sum_{j=0}^{\chi^2-1}\mu_j |R^{(2)}_j\rangle\langle L^{(2)}_j|\,,
			\end{align}
		\end{subequations}
		where we assume $|\lambda_0|>|\lambda_j|$, $|\mu_0|>|\mu_j|$ for all $j\geq 1$, \emph{i.e.}~$\tau_1$, $\tau_2$ have a trivial Jordan form and a finite gap.\footnote{The assumption that $\tau_1$ and $\tau_2$ can be diagonalized is purely technical and not necessary. However, we keep it here as it makes the analysis simpler and it is in any case quite general.} Note that we can also assume without the loss of generality that $\lambda_0=1$ (which implies $\mu_0>0$, since $\rho_L>0$ for all $L$). Finally, $\ket{R^{(n)}_j}$ and $\bra{L_j^{(n)}}$ are the left and right eigenstates, \emph{i.e}, they are vectors on the left/right virtual indices of $\tau_1$, and $\tau_2$ which are normalized such that 
		\begin{equation}
			\braket{L^{(n)}_j| R^{(n)}_k}=\delta_{j,k}\,\quad n=1,2.
		\end{equation}
		\item  We further need to assume the technical conditions
		\begin{subequations}
			\label{eq:eigenvectors_conditions}			
			\begin{align}
				(\langle L^{(1)}_0 |\otimes \langle L^{(1)}_0 |) | R^{(2)}_0\rangle&\neq 0\,, \\
				\langle L^{(2)}_0|(| R^{(1)}_0 \rangle\otimes| R^{(1)}_0 \rangle)&\neq 0\,,
			\end{align}
			which again are quite general, as orthogonality requires fine-tuning.
		\end{subequations}
	\end{enumerate}
	
	We use the same notations and assumptions as in Sec.~\ref{sec:MPDOs}, so that
	\begin{equation}\label{eq:p2_sigma_appendix}
		\mathcal{P}_2=\frac{{\rm Tr}[\sigma^2]}{({\rm Tr}[\sigma])^{2}}\,,
	\end{equation}
	and
	\begin{equation}
		r_2=\frac{\prod_{j=1}^{R}{\rm Tr}_{I_j\cup I_{j+1}}(\sigma^2_{I_{j}\cup I_{j+1}})}{\prod_{j=1}^{R}{\rm Tr}_{I_j}\left[\sigma_{I_j}^{2}\right]}\,,
	\end{equation}
	where $|I_j|=k$ and $R=L/k$ is an integer.

	\begin{figure*}[t]
		\includegraphics[width=0.95\textwidth]{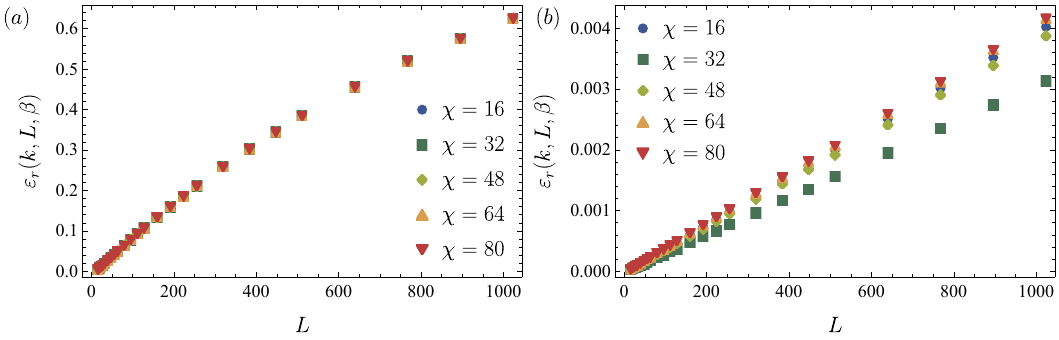}
		\caption{Scaling of the relative error $\varepsilon_r(k,L,\beta)$, defined in Eq.~\eqref{eq:relative_error_numerics}, as a function of the bond dimension $\chi$ used to approximate the thermal state. In the left panel we show data for the Ising chain while the right panel shows data for the XXZ chain. In both cases we use the same parameters as in Fig.~\ref{fig:decayingK}. Data shown for $k=4$, other values of $k$ show similar behavior with $\chi=16$ being sufficient for convergence in the Ising model, while the XXZ model appears to be well converged from $\chi=64$. }
		\label{fig:convergence_bond_dimension}
	\end{figure*}

	First, we introduce the correlation lengths
	\begin{equation}
		e^{-\frac{1}{\zeta_1}}=\frac{ {\rm max}_{j>0}\{|\lambda_j|\}}{\lambda_0}\,,\quad  e^{-\frac{1}{\zeta_2}}=\frac{ {\rm max}_{j>0}\{|\mu_j|\}}{\mu_0}\,,
	\end{equation}
	and also
	\begin{equation}
		\zeta={\rm max}(\zeta_1,\zeta_2)\,.
	\end{equation}
	Next, given the spectral decomposition in Eqs.~\eqref{eq:transfer_matrices}, we define
	\begin{equation}\label{eq:c_constant}
		C=\underset{\substack{(j,k,l)\\ \neq (0,0,0)}}{{\rm max}}\left\{ \frac{| \langle L_j^{(1)}| \langle L_k^{(1)} |R_l^{(2)}\rangle\langle L_l^{(2)}|  R_j^{(1)}\rangle | R_k^{(1)} \rangle| }{| \langle L_0^{(1)}| \langle L_0^{(1)} |R_0^{(2)}\rangle\langle L_0^{(2)}|  R_0^{(1)}\rangle | R_0^{(1)} \rangle|}\right\}\,.
	\end{equation}
	Note that the denominator is non-vanishing because of Eqs.~\eqref{eq:eigenvectors_conditions}. We can now state the main result of this section.
	\begin{thm}\label{th:MPDO_approximation}
		Take $|I_j|=k$ with
		\begin{equation}\label{eq:kc}
			k\geq k_{\rm min}={\rm max}\left\{1, \zeta \log (20 C \chi^2 L)\right\}\,.
		\end{equation}
		Then, for all
		\begin{equation}
			L\geq {\rm max}\{\zeta\log(2^{5}\chi^2), 4k +\zeta\log (2\chi)\}\,,
		\end{equation} 
		we have
		\begin{equation}
			\left| \frac{r_2}{\mathcal{P}_2}-1  \right| \leq \chi^2 (80C+32) \frac{L}{k} e^{-k /\zeta}\,.
		\end{equation}
	\end{thm}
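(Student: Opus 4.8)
The plan is to reduce every quantity entering $r_2$ and $\mathcal{P}_2$ to traces of powers of the transfer matrices $\tau_1$ and $\tau_2$ of Eq.~\eqref{eq:tau_n}, and then to exploit the spectral gaps assumed in (A) together with the non-orthogonality conditions (B). For a translation-invariant MPDO on a ring of length $L$ one has $\Tr[\sigma]=\Tr[\tau_1^{L}]$ (virtual-space trace of the $\chi\times\chi$ matrix $\tau_1$) and $\Tr[\sigma^2]=\Tr[\tau_2^{L}]$, so that $\mathcal{P}_2=\Tr[\tau_2^{L}]/(\Tr[\tau_1^{L}])^{2}$. For a single interval $I_j$ of length $k$, tracing out the complement replaces each complement site by $\tau_1$, while the swap between the two copies on $I_j$ produces $\tau_2^{k}$; since the two copies are independent on the complement this yields $\Tr_{I_j}[\sigma_{I_j}^2]=\Tr[\tau_2^{k}(\tau_1\otimes\tau_1)^{L-k}]$, and likewise $\Tr_{I_j\cup I_{j+1}}[\sigma_{I_j\cup I_{j+1}}^2]=\Tr[\tau_2^{2k}(\tau_1\otimes\tau_1)^{L-2k}]$. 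By translation invariance all $R=L/k$ numerator (resp.\ denominator) factors of $r_2$ in Eq.~\eqref{eq:r_2_periodi} coincide, so the whole problem collapses to controlling these three scalar traces.

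Next I would insert the spectral decompositions~\eqref{eq:transfer_matrices}. Writing $g=\langle L_0^{(1)}\otimes L_0^{(1)}|R_0^{(2)}\rangle$ and $h=\langle L_0^{(2)}|R_0^{(1)}\otimes R_0^{(1)}\rangle$ (both nonzero by (B)) and using $\lambda_0=1$, the leading term of each trace is $\Tr[\tau_2^{a}(\tau_1\otimes\tau_1)^{b}]=\mu_0^{a}\,g\,h\,(1+\epsilon_{a,b})$, where $\epsilon_{a,b}$ collects all contributions in which at least one of the three eigenvalue indices (one for $\tau_2$, two for $\tau_1\otimes\tau_1$) is nonzero. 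Substituting $a\in\{k,2k\}$, together with $\Tr[\tau_2^{L}]=\mu_0^{L}(1+\epsilon_L)$ and $\Tr[\tau_1^{L}]=1+\delta_1$, the powers $\mu_0^{2kR}/\mu_0^{kR}=\mu_0^{L}$ cancel exactly against the $\mu_0^{L}$ from $\Tr[\tau_2^{L}]$, and the factors $(gh)^{R}$ cancel between numerator and denominator. This leaves the clean identity
\[
\frac{r_2}{\mathcal{P}_2}=\frac{(1+\epsilon_{2k})^{R}}{(1+\epsilon_{k})^{R}}\,\frac{(1+\delta_1)^{2}}{1+\epsilon_L}\,,
\]
so that $r_2/\mathcal{P}_2\to1$ once the correction terms vanish.

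The remaining work is to bound those corrections. Each subleading term of $\epsilon_{a,b}$ carries a factor $\mu_l^{a}/\mu_0^{a}$ times two $\tau_1$-eigenvalue powers raised to $b$, multiplied by an eigenvector-overlap ratio bounded by the constant $C$ of Eq.~\eqref{eq:c_constant}. Since $\lambda_0=1$, a nonzero $\tau_2$-index contributes $(|\mu_1|/\mu_0)^{a}\le e^{-a/\zeta}$, whereas a nonzero $\tau_1$-index contributes $e^{-b/\zeta}$; the dominant corrections are thus the at most $\chi^2-1$ terms in which the $\tau_2$-index is nonzero but both $\tau_1$-indices vanish, giving $|\epsilon_{k}|\lesssim\chi^2 C\,e^{-k/\zeta}$, while the genuinely mixed terms carry an extra $e^{-b/\zeta}$ with $b\in\{L-k,L-2k\}$ and are negligible under the stated lower bound on $L$. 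The same reasoning gives $|\epsilon_{2k}|\lesssim\chi^2 C\,e^{-2k/\zeta}$ and, since they involve no cross-overlaps, $|\delta_1|,|\epsilon_L|\lesssim\chi^2 e^{-L/\zeta}$ \emph{without} a factor of $C$. Taking logarithms and using $|\log(1+x)|\le2|x|$ and $e^{y}-1\le2y$ for small arguments, one obtains $|r_2/\mathcal{P}_2-1|\le 4R(|\epsilon_{2k}|+|\epsilon_{k}|)+8|\delta_1|+4|\epsilon_L|$, whose dominant piece is $4R|\epsilon_{k}|\sim\chi^2 C\,(L/k)e^{-k/\zeta}$; re-expressing the $L$-suppressed pieces through the hypotheses $k\ge\zeta\log(20C\chi^2 L)$ and $L\ge\max\{\zeta\log(2^5\chi^2),\,4k+\zeta\log(2\chi)\}$ reassembles everything into the claimed bound $\chi^2(80C+32)(L/k)e^{-k/\zeta}$.

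The main obstacle I anticipate is exactly this last step: keeping the eigenvector bookkeeping tight enough that the prefactor scales as $\chi^2 C$ rather than the naive $\chi^4 C$ one would get by summing over all index triples. The key observation making this work is that the only corrections not further suppressed by the large complement length $b$ are those with both $\tau_1$-indices equal to zero, of which there are at most $\chi^2-1$, so the $\log L$-growth of the circuit length does not inflate the prefactor. The condition on $k$ is precisely what guarantees $e^{-k/\zeta}\le(20C\chi^2 L)^{-1}$, ensuring $R|\epsilon_{k}|<1$ and hence the validity of the elementary inequalities for $\log(1+x)$ and $e^{y}-1$, while the condition on $L$ is what forces the full-ring corrections $\delta_1,\epsilon_L$ and the $b$-segment corrections to stay below $e^{-k/\zeta}$. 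Tracking the explicit numerical constants so as to land exactly on $(80C+32)$ is then routine but tedious.
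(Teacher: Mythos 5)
Your proposal is correct and follows essentially the same route as the paper's proof in Appendix~\ref{sec:factorization_MPDOs}: you reduce $\mathcal{P}_2$ and each factor of $r_2$ to traces of $\tau_2^{|I|}\left(\tau_1\otimes\tau_1\right)^{L-|I|}$, extract the leading $(0,0,0)$ spectral term with eigenvector overlaps (nonzero by the conditions in Eq.~\eqref{eq:eigenvectors_conditions}), bound the relative corrections by $O(C\chi^2 e^{-k/\zeta})$ after noting that the only terms not further suppressed by the complement length are the at most $\chi^2-1$ ones with both $\tau_1$-indices zero, and close with the same elementary inequalities ($1-z\geq e^{-2z}$, $e^{z}-1\leq 2z$), with the hypotheses on $k$ and $L$ playing exactly the roles you identify. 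The paper's bound $|\varepsilon^{I}|\leq 5C\chi^2 e^{-k/\zeta}$ and its use of the product estimate from Lemma~\ref{lem:inequality_eps} match your bookkeeping step for step, so only the routine constant-tracking you deferred separates your sketch from the published argument.
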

	\begin{proof}
		Using Eqs.~\eqref{eq:transfer_matrices}, we have
		\begin{equation}\label{eq:p2_frac_eigen}
			\mathcal{P}_2=\frac{\sum_{j=0}^{\chi^{2}-1}\mu_j^L}{\left(\sum_{j=0}^{\chi-1}\lambda_j^L\right)^2}= \frac{\mu_0^L}{\lambda_0^{2L}}\left(1+\tilde{\varepsilon}\right) \,,
		\end{equation}
		where
		\begin{equation}\label{eq:tilde_epsilon}
			\tilde{\varepsilon}= \frac{1+\sum_{j=1}^{\chi^2-1}(\mu_j/\mu_0)^{L}}{\left(1+\sum_{j=1}^{\chi-1}(\lambda_j/\lambda_0)^{L}\right)^2}-1\,.
		\end{equation}
		Since by hypothesis $L\geq\zeta_1 \log(2\chi)$, we have $\chi e^{-L/\zeta_1}\leq 1/2$, and so
		\begin{align}\label{eq:temp_0}
			|\tilde{\varepsilon}|&\leq 4\left(\chi^2 e^{-L/\zeta_2 }+2\chi e^{-L/\zeta_1}+\chi^2e^{-2L/\zeta_1 }\right)\nonumber\\
			&\leq 2^{4}\chi^2 e^{-L/\zeta}\,.
		\end{align}
		On the other hand
		\begin{align}
			&{\rm Tr}[\sigma_{I}^2]=\sum_{j,k=0}^{\chi-1}\lambda^{L-|I|}_j\lambda^{L-|I|}_k(\langle L_j^{(1)}|\otimes \langle L_k^{(1)}|)\nonumber\\
			&\left(\sum_{l=0}^{\chi^2-1}\mu^{|I|}_l |R_l^{(2)}\rangle\langle L_l^{(2)}|\right) (|R_j^{(1)}\rangle\otimes | R_k^{(1)} \rangle)\,.
		\end{align}
		Therefore, recalling the definition~\eqref{eq:c_constant}, we have
		\begin{align}
			{\rm Tr}&[\sigma_{I}^2]=\lambda_0^{2(L-|I|)}\mu_0^{|I|}\langle L_0^{(1)}| \langle L_0^{(1)} |R_0^{(2)}\rangle\nonumber\\
			&\times \langle L_0^{(2)}|  R_0^{(1)}\rangle | R_0^{(1)} \rangle\left(1+\varepsilon^{I}\right)\,,
		\end{align}
		where 
		\begin{align}
			|\varepsilon^{I}|&\leq 2 C\chi e^{-(L-|I|)/\zeta_1 }+  C\chi^2 e^{-|I|/\zeta_2 }\nonumber\\
			+&  C\chi^2 e^{-2(L-|I|)/\zeta_1 }+ 2C \chi^3 e^{-(L-|I|)/\zeta_1 }e^{-|I|/\zeta_2 }\nonumber\\
			+& C\chi^4 e^{-2(L-|I|)/\zeta_1 }e^{-|I|/\zeta_2 }\,.
		\end{align}
		Since by hypothesis $L\geq 4k +\zeta\log (2\chi)$,  it is easy to verify that all the five terms above are upper bounded by $C\chi^2 e^{-|I|/\zeta }$ (recall that $|I|\leq 2k$ for all $I$), and so
		\begin{align}\label{eq:inequality_eps_I}
			|\varepsilon^{I}|&\leq 5 C\chi^2 e^{-|I|/\zeta}\leq 5 C\chi^2 e^{-k/\zeta}\,,
		\end{align}
		where we used that either $|I|=k$ or $|I|=2k$, and so $|I|\geq k$. Therefore
		\begin{align}
			&\frac{\prod_{j=1}^{R}{\rm Tr}_{I_j\cup I_{j+1}}(\sigma^2_{I_{j}\cup I_{j+1}})}{\prod_{j=1}^{R}{\rm Tr}_{I_j}\left[\sigma_{I_j}^{2}\right]}\nonumber\\
			&=\frac{\mu_0^L}{\lambda_0^{2L}}\left(\frac{\prod_{j=1}^R(1+\varepsilon^{I_j\cup I_{j+1}})}{\prod_{j=1}^R(1+\varepsilon^{I_j})}\right)\,,
		\end{align}
		with $	|\varepsilon^{I}|\leq 5C\chi^2 e^{-k/\zeta}$. Combining this with Eq.~\eqref{eq:p2_frac_eigen}, we arrive at
		\begin{align}\label{eq:temp_1}
			\frac{r_2}{\mathcal{P}_2}-1&=\left(\frac{\prod_{j=1}^R(1+\varepsilon^{I_j\cup I_{j+1}})}{\prod_{j=1}^R(1+\varepsilon^{I_j})}\right)\left(1+\tilde{\varepsilon}\right)^{-1}-1\nonumber\\
			=&\left(1+\tilde{\varepsilon}\right)^{-1}\left[
			\frac{\prod_{j=1}^R(1+\varepsilon^{I_j\cup I_{j+1}})}{\prod_{j=1}^R(1+\varepsilon^{I_j})}-1-\tilde{\varepsilon}
			\right]\,,
		\end{align}
		where $\tilde{\varepsilon}$ is given in Eq.~\eqref{eq:tilde_epsilon}.  
		
		Next, we define
		\begin{equation}\label{eq:def_varepsilon_periodic}
			\varepsilon={\rm max}\left\{|\varepsilon^{I}|: I\in \{I_{j}\}_j\cup\{I_{j}\cup I_{j+1}\}_{j}\right\}\,.
		\end{equation} 
		By hypothesis $k\geq k_{\rm min}$ [$k_{\rm min}$ is given in~\eqref{eq:kc}]. Therefore, using~\eqref{eq:inequality_eps_I} we have $|\varepsilon^{I}|\leq 1/2$ for all $|I|$ and so also $\varepsilon\leq 1/2$. Therefore, we can apply the derivation in Lemma~\ref{lem:inequality_eps} to show
		\begin{equation}
			\left|\frac{\prod_{j=1}^R(1+\varepsilon^{I_j\cup I_{j+1}})}{\prod_{j=1}^R(1+\varepsilon^{I_j})}-1\right|\leq e^{(4L/k) \varepsilon}-1\,.
		\end{equation}
		Since by hypothesis $L\geq \xi \log(2^{5}\chi^2)$, we also have $\tilde{\varepsilon}\leq 1/2$ [cf. Eq.~\eqref{eq:temp_0}], and so~\eqref{eq:temp_1} yields
		\begin{align}
			\left| \frac{r_2}{\mathcal{P}_2}-1  \right| &\leq  2(e^{(4L/k) \varepsilon}-1)+2|\tilde{\varepsilon}|\,.
		\end{align}
		Finally, we note that Eq.~\eqref{eq:kc} implies that $(4L/k) \varepsilon\leq 1$, and using $e^{z}-1\leq 2z$ for $0\leq z\leq 1$, we  arrive at
		\begin{align}
			\left| \frac{r_2}{\mathcal{P}_2}-1  \right| &\leq 80\chi^2 C (L/k) e^{-k /\zeta}+2^{5}\chi^2 e^{- L/\zeta}\nonumber\\
			&\leq \chi^2 (80C+32) \frac{L}{k} e^{-k /\zeta}\,.
		\end{align}
	\end{proof}
	
	This theorem proves Eq.~\eqref{eq:MPDO_approximation}, under the condition~\eqref{eq:final_kc}. Therefore, the approximate factorization property~\eqref{eq:global_approximation} holds for MPDOs, with the identification
	\begin{equation}
		\alpha_2=(20C+8)\chi^2\,,\quad \beta_2=1/\zeta\,.
	\end{equation}

	\section{Details on the numerical computations}
	\label{sec:appendix_numerics}
	
	In this Appendix we provide further details on the numerical computations performed to obtain the data presented in Secs.~\ref{sec:AFCs_numerics} and~\ref{sec:efficient_entanglement_detection}.  As mentioned, the calculations are carried out using the iTensor library~\cite{fishman2022itensor}, by first approximating the thermal states by MPOs of bond dimension $\chi$ and subsequently taking powers and traces of the density matrices represented in this way. For each quantity, we have always verified that the results were stable upon increasing the bond dimension $\chi$. An example of our data is reported in Fig.~\ref{fig:convergence_bond_dimension}, where we study $\varepsilon_r(k,L,\beta)$, defined in Eq.~\eqref{eq:relative_error_numerics}, as a function of the bond dimension $\chi$ used to approximate the thermal state. In general, we have found that relatively small bond dimensions are enough in the quantum Ising chain, while larger bond dimensions are required in order to observe convergence in the Heseinberg model.
	
	\begin{figure*}[t]
		\includegraphics[scale=0.66]{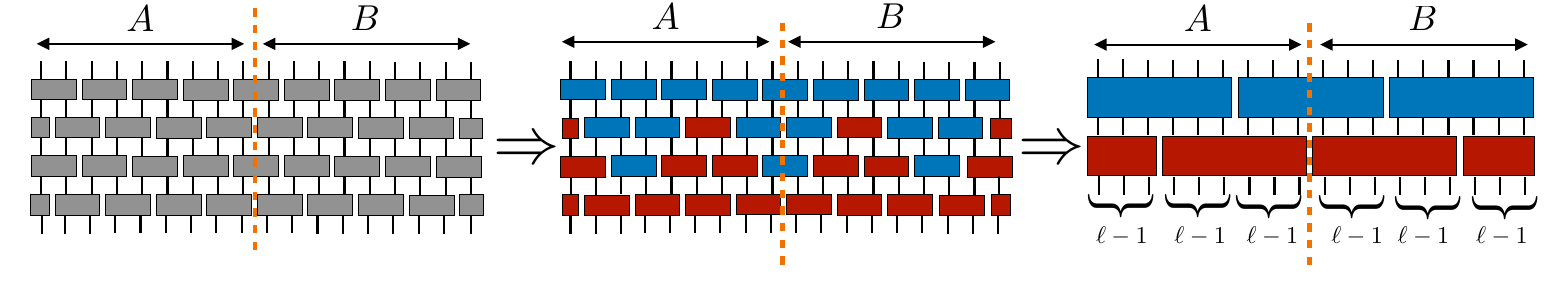}
		\caption{Any FDQC of depth $\ell$ can be rewritten as a depth-$2$ quantum circuit after grouping $\ell-1$ qubits into a single qudit. In the middle picture, we highlight with different colors sets of gates defining the two-qudit gates in the grouped lattice (colored rectangles in the right figure). In all the figures, a dashed orange line separates the regions $A$ and $B$, defining the bipartition of the system.}
		\label{fig:decomposition_circuit}
	\end{figure*}
	
	\section{Technical details on the PPT conditions}
	\label{sec:ppt_condition_for_short_range}
	
	The goal of this Appendix is to identify the conditions under which  the class of states $\rho^{(L)}(\gamma)$ introduced in Sec.~\ref{sec:examples} satisfy Eqs.~\eqref{eq:negativity_conditions} for all system sizes $L$ and some suitable constants $C_3$, $C_5$ independent of $L$. 
	
	For concreteness, let the circuit depth $\ell$ be even, $\ell=2k$ with $k\geq 1$ (a similar discussion holds for $\ell=2k+1$) and take a bipartition of the system as in Fig.~\ref{fig:decomposition_circuit}. We group neighboring sites into blocks containing $q=\ell-1$ qubits, forming a new super qudit associated with a Hilbert-space of dimension $d=2^{\ell-1}$. It is easy to show that the circuit can be rewritten as a depth-$2$ quantum circuit acting on the super qudits, cf. Fig.~\ref{fig:decomposition_circuit}. Therefore (assuming without loss of generality $R=L/q$ is an integer)
	\begin{equation}
		\rho_L(\gamma)=V^{(2)}\left(\bigotimes_{j=1}^{L/q}\omega_j\right)\left[V^{(2)}\right]^\dagger\,,
	\end{equation}
	where $\omega_j=\otimes_{i=1}^q \sigma_{jq+i}$, while
	\begin{equation}
		V^{(2)}=\left(\prod_{j}V_{2j,2j+1}\right)\left(\prod_{j}V_{2j+1,2j+2}\right)\,,
	\end{equation} 
	is the depth-$2$ FDQC acting on the super lattice. 
	
	As it is manifest from Fig.~\ref{fig:decomposition_circuit}, the region $A$ contains all super qudits with labels from $1$ to $L/2q$, while $B$ contains all those with labels from $L/2q+1$ to $R=L/q$. Define the sets of super qudits $S_1=\{R/2-1,R/2\}$, $S_2=\{R/2+1, R/2+2\}$ and
	\begin{equation}
		\tilde{\rho}_{S_1S_2}=W\omega_{R/2-1}\otimes \omega_{R/2}\otimes \omega_{R/2+1}\otimes\omega_{R/2+2}W^{\dagger}\,,
	\end{equation}
	where $W=V_{R/2,R/2+1}V_{R/2-1,R/2}V_{R/2+1,R/2+2}$. Note that $\tilde{\rho}_{S_1S_2}$ is different from the reduced density matrix over $S_1\cup S_2$. Using the graphical representation for the blocked circuits and Eq.~\eqref{eq:factorization_sn}, it is simple to show
	\begin{equation}
		\tilde{p}_n[AB]=\frac{{\rm Tr}\left[\left(\tilde{\rho}_{S_1S_2}^{T_{S_1}}\right)^n\right]}{{\rm Tr}_{S_1}[\tilde{\rho}_{S_1}^n]{\rm Tr}_{S_2}[\tilde{\rho}_{S_2}^n]}=:\tilde{s}_n\,,
	\end{equation}
	where $\tilde{\rho}_{S_1}={\rm Tr}_{S_2}[\tilde{\rho}_{S_1S_2}]$ and $\tilde{\rho}_{S_2}={\rm Tr}_{S_1}[\tilde{\rho}_{S_1S_2}]$. Therefore, $\tilde{p}_n[AB]$ coincides with the normalized PT moments of the state $\tilde{\rho}_{S_1S_2}$, supported on four super qudits. Next, it is also easy to compute
	\begin{align}
		\mathcal{P}_n[A]&={\rm Tr}_{S_1}[\tilde{\rho}^n_{S_1}] [\gamma^n+(1-\gamma)^n]^{L/2-2q}\,, \nonumber\\
		\mathcal{P}_n[B]&={\rm Tr}_{S_2}[\tilde{\rho}^n_{S_2}] [\gamma^n+(1-\gamma)^n]^{L/2-2q}\,.
	\end{align}
	Finally, setting
	\begin{align}
		\tilde{t}_3&=	\frac{{\rm Tr}_{S_1}[\tilde{\rho}^2_{S_1}]^{2}{\rm Tr}_{S_2}[\tilde{\rho}^2_{S_2}]^{2}}{{\rm Tr}_{S_1}[\tilde{\rho}^3_{S_1}]{\rm Tr}_{S_2}[\tilde{\rho}^3_{S_2}]}\,,\\
		\tilde{t}_5&=	\frac{{\rm Tr}_{S_1}[\tilde{\rho}^4_{S_1}]^{2}{\rm Tr}_{S_2}[\tilde{\rho}^4_{S_2}]^{2}}{{\rm Tr}_{S_1}[\tilde{\rho}^3_{S_1}]{\rm Tr}_{S_2}[\tilde{\rho}^3_{S_2}]{\rm Tr}_{S_1}[\tilde{\rho}^5_{S_1}]{\rm Tr}_{S_2}[\tilde{\rho}^5_{S_2}]}\,,
	\end{align}
	we arrive at
	\begin{subequations}
		\begin{align}
			f_3&=\tilde{s}_3-\tilde{s}_2^2\tilde{t}_3\frac{[\gamma^2+(1-\gamma)^2]^{2L-8q}}{[\gamma^3+(1-\gamma)^3]^{L-4q}}\,,\label{eq:f3_eq}\\
			f_5&=\tilde{s}_5\tilde{s}_3-\tilde{s}_4^2\tilde{t}_5\frac{[\gamma^4+(1-\gamma)^4]^{2L-8q}}{[\gamma^3+(1-\gamma)^3]^{L-4q}[\gamma^5+(1-\gamma)^5]^{L-4q}}\,.\label{eq:f5_eq}
		\end{align}
		
	\end{subequations}
	
	Now, define
	\begin{align}
		K_3&={\rm max}\{\tilde s_3(\gamma)- \tilde{s}^2_2(\gamma) {\tilde t_3(\gamma)}: 0\leq \gamma \leq 1/4\}\,,\\
		K_5&={\rm max}\{\tilde s_5(\gamma)\tilde s_3(\gamma)- \tilde{s}^2_4(\gamma) {\tilde t_5(\gamma)}:  0\leq \gamma \leq 1/4\}\,,
	\end{align}
	and
	\begin{align}
		H_3&={\rm max}\{|\tilde{s}^2_2(\gamma) {\tilde t_3(\gamma)|}: 0\leq \gamma \leq 1/4\}\,,\\
		H_5&={\rm max}\{|\tilde{s}^2_4(\gamma) {\tilde t_5(\gamma)|}:  0\leq \gamma \leq 1/4\}\,.
	\end{align}
	The constants $K_3$, $K_5$, $H_3$, and $H_5$ depend on the specific choices of the unitary gates forming $U^{(\ell)}$. For non-entangling gates, $K_3$ and $K_5$ are positive, but for generic choices of gates one has $K_3<0$ and $K_5<0$. We are ready to state our main result.
	\begin{thm}
		Suppose $K_3$, $K_5< 0$. Then, assuming without loss of generality $\gamma\leq 1/4$ and defining
		\begin{equation}
			C_l=\frac{|K_l|}{2}\,, \quad \gamma_3=-\frac{K_3}{4H_3},\quad \gamma_5=\left(-\frac{K_5}{8H_5}\right)^{1/3}\,,
		\end{equation}
		we have
		\begin{equation}
			0\leq \gamma\leq \gamma_3/L \Rightarrow	f_3\leq -C_3\,,
		\end{equation}
		while
		\begin{equation}
			0\leq \gamma\leq \gamma_5/L^{1/3} \Rightarrow	f_5\leq -C_5\,,
		\end{equation}
		for all system sizes $L$.
	\end{thm}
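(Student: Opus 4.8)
The plan is to isolate all $L$-dependence. The quantities $\tilde{s}_n(\gamma)$ and $\tilde{t}_n(\gamma)$ depend only on the four-qudit state $\tilde{\rho}_{S_1S_2}$ and are therefore independent of $L$; the entire $L$-dependence of $f_3$ and $f_5$ in Eqs.~\eqref{eq:f3_eq}--\eqref{eq:f5_eq} resides in the depolarization ratios. I would first add and subtract the value at the ``cancellation point'' to write
\begin{equation}
f_3=\left[\tilde{s}_3-\tilde{s}_2^2\tilde{t}_3\right]+\tilde{s}_2^2\tilde{t}_3\,(1-g_3),\qquad f_5=\left[\tilde{s}_5\tilde{s}_3-\tilde{s}_4^2\tilde{t}_5\right]+\tilde{s}_4^2\tilde{t}_5\,(1-g_5),
\end{equation}
where $g_3$ and $g_5$ denote the fractions multiplying $\tilde{s}_2^2\tilde{t}_3$ and $\tilde{s}_4^2\tilde{t}_5$ in Eqs.~\eqref{eq:f3_eq} and~\eqref{eq:f5_eq}. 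By the definitions of $K_n$ and $H_n$, the bracketed terms are bounded above by $K_n$ and the prefactors have modulus at most $H_n$, so $f_n\le K_n+H_n\,|1-g_n|$. It then suffices to prove $|1-g_n|\le |K_n|/(4H_n)$ on the respective $\gamma$-windows: this yields $f_n\le K_n+|K_n|/4=\tfrac34 K_n\le\tfrac12 K_n=-C_n$, since $K_n<0$.

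The next step reduces each factor to a single base. Writing $a_j:=\gamma^j+(1-\gamma)^j$ and using $2L-8q=2(L-4q)$, one has $g_3=\eta_3^{\,m}$ and $g_5=\eta_5^{\,m}$ with $m:=L-4q$, $\eta_3=a_2^2/a_3$, and $\eta_5=a_4^2/(a_3a_5)$. Both bases lie in $(0,1]$: the bound $\eta_5\le1$ is exactly the Cauchy--Schwarz inequality $a_4^2\le a_3a_5$ for the moment sequence $a_j$, and $\eta_3\le1$ follows from a direct check that $a_3-a_2^2=\gamma(1-5\gamma+8\gamma^2-4\gamma^3)\ge0$ on $[0,1/4]$. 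Since $0<\eta_n\le1$, the elementary inequality $1-x^m\le m(1-x)$ gives $0\le 1-g_n\le m(1-\eta_n)\le L\,(1-\eta_n)$. For $n=3$ the same computation shows $1-\eta_3=(a_3-a_2^2)/a_3=\gamma\,(1-5\gamma+8\gamma^2-4\gamma^3)/a_3\le\gamma$, since $1-5\gamma+8\gamma^2-4\gamma^3\le a_3$ on $[0,1/4]$; whence $1-g_3\le L\gamma\le\gamma_3=|K_3|/(4H_3)$ for $\gamma\le\gamma_3/L$, closing the $p_3$ case.

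The crux is the $n=5$ bound, and this is where I expect the genuine work. The key is the closed-form identity
\begin{equation}
a_3a_5-a_4^2=[\gamma(1-\gamma)]^3(1-2\gamma)^2,
\end{equation}
in which all contributions below cubic order in $\gamma$ cancel; this is precisely what permits the wider window $\gamma\lesssim L^{-1/3}$ rather than $\gamma\lesssim L^{-1}$. Combining it with $a_3a_5\ge(1-\gamma)^8$ and the elementary estimate $(1-2\gamma)^2(1-\gamma)^{-5}\le2$ on $[0,1/4]$ gives $1-\eta_5\le2\gamma^3$, hence $1-g_5\le 2L\gamma^3\le 2\gamma_5^3=|K_5|/(4H_5)$ for $\gamma\le\gamma_5/L^{1/3}$. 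Substituting into the decomposition yields $f_5\le\tfrac34 K_5\le-C_5$, completing the argument.

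The main obstacle is exactly this cubic cancellation: a naive term-by-term expansion of $g_5-1$ loses it and would only license $\gamma\lesssim1/L$, which in turn would degrade the $O(L^{2/3})$ entropy scaling claimed for the $p_5$ condition in the main text. Recognizing $a_3a_5-a_4^2$ as a Gram-type determinant of the moment sequence $a_j=\gamma^j+(1-\gamma)^j$ makes the cubic bound both correct and transparent; the remainder of the proof is bookkeeping with the inequality $1-x^m\le m(1-x)$ and the $L$-independence of the local data $\tilde{s}_n,\tilde{t}_n$.
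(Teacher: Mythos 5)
Your proof is correct and follows essentially the same route as the paper: the same decomposition $f_n\le K_n+H_n\,(1-g_n)$ with all $L$-dependence isolated in the depolarization ratio $g_n=\eta_n^{\,L-4q}$, the same $\gamma$-windows, and the same closing arithmetic with the definitions of $\gamma_3,\gamma_5,C_n$. The only differences are technical and both sound: you replace the paper's chain $1-\gamma\ge e^{-2\gamma}$, $1-e^{-z}\le z$ with the more elementary $1-x^m\le m(1-x)$, and—where the paper merely asserts $0\le\varepsilon_5\le 2\gamma^3$—you actually prove this crux via the identity $a_3a_5-a_4^2=[\gamma(1-\gamma)]^3(1-2\gamma)^2$ (a Cauchy--Schwarz/Gram cancellation for the moment sequence $a_j=\gamma^j+(1-\gamma)^j$), correctly identifying that this cubic cancellation is what permits the window $\gamma\lesssim L^{-1/3}$ instead of $\gamma\lesssim L^{-1}$.
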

	\begin{proof}
		We start from Eq.~\eqref{eq:f3_eq} and note 
		\begin{equation}
			\frac{[\gamma^2+(1-\gamma)^2]^{2}}{\gamma^3+(1-\gamma)^3}=1-\varepsilon_3\,,
		\end{equation}
		with $0\leq \varepsilon_3\leq \gamma$ for $0\leq \gamma\leq 1/4$. Therefore
		\begin{align}
			f_3&\leq K_3+\tilde{s}_2^2\tilde{t}_3[1-(1-\varepsilon_3)^{L-4q}]\nonumber\\
			&\leq K_3+ H_3 [1-(1-\gamma)^{L-4q}]\,.
		\end{align}
		Using $1-\gamma\geq e^{-2\gamma}$ for $0\leq \gamma\leq 1/4$, we get
		\begin{align}
			f_3&\leq K_3+ H_3 [1-e^{-2\gamma(L-4q)}]\nonumber\\
			&\leq K_3+ 2H_3 \gamma L\,.
		\end{align}
		Hence, if $\gamma\leq \gamma_3/L$, we finally arrive at
		\begin{equation}
			f_3	\leq K_3/2=-C_3\,.
		\end{equation}
		Analogously, we have
		\begin{equation}
			\frac{[\gamma^4+(1-\gamma)^4]^{2}}{[\gamma^3+(1-\gamma)^3][\gamma^5+(1-\gamma)^5]}=1-\varepsilon_5\,,
		\end{equation}
		with $0\leq \varepsilon_3\leq 2\gamma^3$ for $0\leq \gamma\leq 1/4$. Therefore
		\begin{align}
			f_5&\leq K_5+\tilde{s}_4^2\tilde{t}_5[1-(1-\varepsilon_5)^{L-4q}]\nonumber\\
			&\leq K_3+ H_5 [1-(1-2\gamma^3)^{L-4q}]\,.
		\end{align}
		Using $1-x\geq e^{-2x}$ for $0\leq x\leq 1/4$, we get
		\begin{align}
			f_5&\leq K_5+ H_5[1-e^{-4\gamma^3(L-4q)}]\nonumber\\
			&\leq K_5+ 4H_5 \gamma^3 L\,.
		\end{align}
		Hence, if $\gamma\leq \gamma_5/L^{1/3}$, we arrive at
		\begin{equation}
			f_5	\leq K_5/2=-C_5\,.
		\end{equation}
	\end{proof}
	
	\bibliography{refs}
	
\end{document}